\documentclass[11pt,a4paper]{article}
\usepackage{amsmath,amssymb,amsthm}
\usepackage{geometry}
\usepackage{hyperref}
\usepackage{graphicx}
\usepackage{braket}
\usepackage{authblk}
\usepackage{enumitem}
\usepackage{booktabs}
\usepackage{tabularx}
\usepackage{float}
\usepackage{mathrsfs}
\usepackage[utf8]{inputenc}
\usepackage[T1]{fontenc}
\usepackage{comment}
\usepackage[backend=biber,style=numeric, maxnames=99,
  minnames=99]{biblatex}
\newtheorem{theorem}{Theorem}[section]
\newtheorem{corollary}[theorem]{Corollary}
\newtheorem{lemma}[theorem]{Lemma}
\newtheorem{definition}[theorem]{Definition}
\newtheorem{example}[theorem]{Example}

\newtheorem{remark}[theorem]{Remark}
\newtheorem*{unnumberedremark}{Remark}

\newtheorem{proposition}[theorem]{Proposition}

\newcommand{\g}{\mathfrak{g}}

\newcommand{\s}{\mathfrak{s}}
\newcommand{\rfrak}{\mathfrak{r}}
\newcommand{\zfrak}{\mathfrak{z}}
\newcommand{\K}{\mathcal{K}}
\newcommand{\su}{\mathfrak{su}}
\newcommand{\uu}{\mathfrak{u}}

\newcommand{\Tr}{\mathrm{Tr}}
\newcommand{\ad}{\mathrm{ad}}
\newcommand{\ImC}{\mathrm{Im}(C)}
\newcommand{\Rad}{\mathrm{Rad}}

\newcommand{\C}{\mathbb{C}}

\title{A Quantum Algorithm for the Radical of a Lie Algebra:\\
   Kernel Projection and Conditioning}
\author{Yibin Wang}
\affil[1]{Graduate School of Mathematics, Nagoya University, Nagoya, 464-8601, Aichi, Japan.\\
\texttt{yibinw0210@gmail.com}}
\date{}

\begin{document}

\maketitle

\begin{abstract}
Every finite-dimensional real or complex Lie algebra has a largest solvable ideal,
its radical. In the compact dynamical Lie algebra (DLA) of a closed quantum system,
this radical is the center; projecting onto it isolates directions that commute
with the supplied algebra. With sparse Lie-bracket data and a known spectral gap,
we construct a quantum circuit that approximately encodes this coefficient-space
projector. The general construction combines the derived-algebra map with the
Killing form, whereas compactness makes the Killing form alone sufficient. This
distinction matters for numerical sensitivity, and our main theorem
gives its exact law under an invariant-orthonormal basis adapted to the center and
semisimple part. In a compact real algebra with a nonzero semisimple part, the
condition number of the general operator is the three-halves power of the Killing-form
condition number on that part. The compact projector in turn yields a bounded-error
test for whether the center is trivial, provided that any nonzero center has at
least a stated minimum dimension. This gives a controlled test for internal
conserved directions in compact quantum dynamics.
\end{abstract}

\newpage
\tableofcontents

\section{Introduction}

Lie groups and Lie algebras give the language for the continuous symmetries and the dynamical evolution of quantum systems~\cite{Hall2015, georgi2000lie}. The Levi--Malcev theorem~\cite{humphreys2012introduction, jacobson2013lie} lies at the base of that theory. It states that any finite-dimensional Lie algebra $\g$ over a field of characteristic zero splits as a semidirect product $\g = \s \ltimes \rfrak$ into a semisimple Levi factor $\s$ and the unique maximal solvable ideal $\rfrak$, the radical. Identifying these two components is the entry point to any structural or dynamical analysis of $\g$.

The decomposition presents two computational problems: identifying the unique radical $\rfrak$ and constructing a Levi factor $\s$. This paper develops a quantum method for the first, formulated as radical-projector block encoding and then specialized to compact quantum dynamics.

\subsection{Summary of Contributions}

The paper develops a quantum method for the solvable radical and studies the
compact case. Three results follow.

\begin{enumerate}
    \item \textbf{Quantum access to the radical.} From Lie-bracket data, we construct a quantum circuit that encodes the projector onto the solvable radical of a real or complex Lie algebra. This avoids supplying the dense Killing-form matrix as input (Theorem~\ref{thm:general_complexity_main}). For compact Lie algebras, the radical is the center and the circuit acts directly on the Killing form (Corollary~\ref{cor:compact_optimization}). Section~\ref{sec:optimized_classical_benchmark} compares it with a classical procedure using the same structural information.

    \item \textbf{A test for a trivial center.} Using this projector, we test whether the center of a compact dynamical Lie algebra is trivial, under the promise that any nontrivial center has at least a stated minimum dimension (Theorem~\ref{thm:zero_center_decision}). The leading cost grows with the square root of the algebra dimension relative to that minimum. An explicit oracle family (Theorem~\ref{thm:opaque_value_separation}) connects this behavior to quantum search~\cite{brassard2000quantum,NayakWu1999}. The output is only a yes-or-no answer; finding a center basis or its exact dimension are separate tasks.

    \item \textbf{Main theorem: an exact conditioning law.} Theorem~\ref{thm:obstruction_general} is the paper's principal structural result. Under its stated basis assumptions, it gives an exact law for numerical sensitivity in which the condition number of the general radical-identification operator is the three-halves power of the corresponding Killing-form condition number on the semisimple part. The compact construction remains distinct because it acts directly on the Killing form.

\end{enumerate}

The cited results and the input/output paragraph below contain the precise
assumptions and guarantees.

\subsection{The Central Role of the Radical in Quantum Dynamics}\label{sec:radical_role_quantum_physics}

The radical takes on direct physical content here. The Dynamical Lie Algebra (DLA) of a closed quantum system inherits constraints from the unitarity of time evolution, and under those constraints the general Levi-Malcev decomposition collapses to a simpler form in which the radical becomes a concrete physical object~\cite{d2021introduction, d2008lie}.

In quantum mechanics, we analyze the DLA $\g = \text{Lie}(\{iH_k\})$ generated by a set of Hamiltonians $\{H_k\}$, for example, control Hamiltonians. Since these generators are skew-Hermitian matrices ($iH_k$), the DLA is a subalgebra of the unitary algebra $\uu(N)$, where $N=2^n$ for $n$ qubits. The unitary group $U(N)$ is compact. Consequently, the DLA of any finite-dimensional closed quantum system is a compact Lie algebra. 
Compactness specializes the Levi--Malcev structure. Any compact Lie algebra $\g$ is a direct sum of its center $\zfrak$ and a semisimple subalgebra $\s = [\g,\g]$~\cite{humphreys2012introduction}:

\begin{equation}\label{eq:compact_dla_structure}
\g = \s \oplus \zfrak.
\end{equation}
The radical equals the Abelian center $\zfrak$, and $[\s,\zfrak]=0$. For any $H = H_s + H_z \in \g$, where $H_s \in \s$ and $H_z \in \zfrak$, the two components commute and their evolution factorizes exactly. The split therefore carries zero cross-term Trotter error between $\s$ and $\zfrak$; the general Levi--Malcev case, where $[\s,\rfrak]\ne 0$, instead requires a product formula~\cite{Trotter1959, Suzuki1976}. The central evolution $e^{-iH_z t}$ can still be physically nontrivial. Implementing this coefficient-space decomposition as a simulation circuit requires a separately specified coefficient-to-operator map with its own normalization and error analysis.

The center $\zfrak$ consists of elements that commute with the entire DLA and therefore describes conserved generators internal to it. The full commutant is a larger object: it can contain charges outside $\g$ and is the appropriate setting for a complete account of physical conservation laws or decoherence-free subsystems.

\subsection{Classical Algorithms and Their Limits}\label{sec:classical_limitations}

For an $n$-qubit system the DLA dimension can reach $d=O(4^{n})$~\cite{Nielsen2010}. Explicit dense representations can therefore become large even before the structural calculation begins.

Classical algorithms for radical computation use the structural identity $\rfrak = [\g,\g]^{\perp_{\K}}$ (Theorem~\ref{thm:radical_orthogonal_complement}) in the method developed by de~Graaf and collaborators~\cite{deGraaf2000, deGraaf1997Levi}. Section~\ref{sec:optimized_classical_benchmark} gives a self-contained operation count under the same explicit structure data. Constructing $K=A^{T}B$ costs $O(d\,s_{adj}s_{int}+d^{2})$ arithmetic operations, including dense-output writes. A general $d\times d^{2}$ rank calculation then gives the safe bound $O(d\,s_{adj}s_{int}+d^{4})$. Compactness replaces that rectangular step by $\ker K$ and gives $O(d\,s_{adj}s_{int}+d^{3})$. The corresponding bit bounds are stated separately. For an actual Pauli label set $S$, the center is the real span of $S\cap V^{\perp_{\omega}}$, where $V=\operatorname{span}_{\mathbb F_2}S$. Row reduction and label testing cost $O(dn^2)$ bit operations, or $O(dn)$ word operations with $\Theta(n)$-bit packed words. This becomes $\widetilde O(d)$ when $n=O(\operatorname{polylog}d)$.

\subsection{A Quantum Algorithm via Direct Projection}

Our algorithm implements the Cartan-criterion identity as a kernel projection on a quantum block encoding. The general construction composes $C^{T}$ with a factorized encoding of $K$, while the compact construction projects directly onto $\ker K$.

The starting point is the identity $\rfrak = ([\g, \g])^{\perp_{\K}}$ (Theorem~\ref{thm:radical_orthogonal_complement}). It converts radical finding into the kernel of an operator built from the structure constants (Section~\ref{sec:radical_as_kernel}). Quantum singular value transformation (QSVT)~\cite{Gilyen2019} implements a zero-singular-value filter once a separating gap is supplied. Factoring $K=A^{T}B$ avoids treating the dense Killing matrix as the primitive input (Section~\ref{sec:killing_form_factorization}).

For compact inputs, Section~\ref{sec:zero_center_decision} specializes
normalized projector-rank estimation to a zero-center promise test. A valid
maximally entangled coefficient state converts the projector success probability
into the normalized center dimension. This step follows earlier all-direction
kernel-detection and projector-rank methods
~\cite{ambainis2012variable,mcardle2026streamlined}; the result here binds that
primitive to the compact-DLA center projector and tracks the valid coefficient
domain and approximation-error budget required by the stated promise.

\subsection{Related Works and Assumptions}

Projector-rank estimation and all-direction kernel tests precede the decision
wrapper used here. Ambainis used a maximally mixed probe to test matrix singularity
under a spectral promise~\cite{ambainis2012variable}. McArdle, Gily\'en, and
Berta formulated normalized projector-rank estimation explicitly and used a
maximally entangled purification when the reference projector is the identity
~\cite{mcardle2026streamlined}.

Quantum center computation has also been studied in a different algebraic model.
Hern\'andez C\'aceres, R\'ua, and Combarro computed the centers of
finite-dimensional algebras over finite fields by Abelian hidden-subgroup methods
~\cite{hernandez2023substructures}. For a Lie product in odd characteristic,
their commutator condition agrees with the usual Lie-center condition, and a
central element satisfies their nucleus conditions automatically. Their result
concerns a full multiplication table over a finite field. The present setting is
a compact real DLA with sparse typed-value access, so the two works establish
complementary results in distinct input models.

Quantum query bounds for other multiplication-table properties, including
commutativity, and sparse Lie structure-constant access have also appeared
~\cite{combarro2019commutativity,somma2025shadow}. These works set the
property-testing and access-model context. Their tasks differ from the compact-DLA
zero-center promise studied here.

Earlier quantum algorithms for algebraic problems mostly worked on sharply structured inputs, the hidden subgroup problem being the canonical example~\cite{Lomont2004}. Quantum linear algebra methods, such as HHL~\cite{Harrow2009} and QSVT~\cite{Gilyen2019}, broadened that scope to matrix problems. The present construction uses sparse access to factor matrices derived from the structure tensor; the access model and its setup are part of the theorem statement.

\paragraph{Relation to adjacent research areas.}

Lie algebras enter quantum computing in several adjacent regimes. One line studies the DLA as a predictor of Variational Quantum Algorithm trainability: the dimension of the DLA controls the variance of cost gradients and the onset of barren plateaus~\cite{Larocca2022diagnosing, Ragone2024lie}, separately from the locality of the cost function~\cite{Cerezo2021}. A second line uses Lie-algebraic structure to speed up \emph{classical} simulation of quantum dynamics~\cite{Zeier2011}. The present task uses a quantum computer to extract the algebraic structure itself.

The closest classical line treats Lie-algebraic structure as the object of
computation. Algorithms that recover structural data from structure constants,
including Levi factors and nilradicals, go back to Rand, Winternitz and
Zassenhaus~\cite{RandWinternitzZassenhaus1988} and to the computer-algebra work of
de~Graaf and collaborators~\cite{deGraaf2000, deGraaf1997Levi}.

Quantum work addresses several neighboring tasks. The DLA of a fixed ansatz may
be derived analytically, as in the QAOA cycle-graph example of Allcock and
coworkers~\cite{Allcock2024QAOA}. Cartan (KAK) methods compile a given unitary into
fixed-depth circuits~\cite{Kokcu2022Cartan}, while conserved quantities of unknown
dynamics can be learned from measured data~\cite{ConservationLaws2024}. Here the
input is a supplied Lie-algebra basis with its structure data, and the output is a
block encoding of the radical projector in coefficient space. For the full Pauli algebra, the
symplectic form also supplies the required support and value access on demand.

Section~\ref{sec:simulability_significance} compares this structural task with recent classical-simulability results.

\paragraph{The DLA generation problem and input/output model.}

Physical systems are often specified by a small generating set, whereas the projector theorem takes a fixed full DLA basis and bracket rule as input. Generating or recognizing that algebra lies upstream of the stated projector query bound.

The numerical input is over $\mathbb F\in\{\mathbb R,\mathbb C\}$. It consists of a named basis, exact zero patterns, and $b$-bit fixed-point values for the structure constants. The classical metadata also supplies a certified $b$-bit magnitude bound $F\geq\max_{i,j,k}|f_{ijk}|$ and the padded support bounds $s_{adj},s_{int},s_{\mathrm{br}},s_{\mathrm{out}}$ used by the relevant enumerators. Reversible row-support, column-support, and value oracles are supplied for $B$, $A^{T}$, and, in the general construction, $C^{T}$. The inverse of each support oracle is available at the same query cost. Let $g_{\mathrm{acc}}$ be an upper bound on the reversible-gate cost of one such support or value query. A QRAM implementation may compute the metadata and build those oracles from the explicit tensor at $O(d^{3})$ setup cost. An implicit implementation must supply the same certified metadata and support-enumeration interface; an entry or sign oracle alone is insufficient. The full Pauli algebra supplies this interface with $F=2$ by reversible symplectic enumeration (Section~\ref{sec:arithmetic_oracles}). An arbitrary Pauli subset does so only if a reversible rank/select oracle for its actual support is additionally provided.

The promise gives a certified $b_{\Delta}$-bit fixed-point lower bound $\Delta>0$ such that every singular value of the chosen target lies in $\{0\}\cup[\Delta,\infty)$. The represented value of $\Delta$ is the threshold used by the circuit; its validity is an input promise. The tolerance $\epsilon\in(0,1/2)$ specifies operator-norm error. The primary output is a circuit that block-encodes the Euclidean coefficient-space projector $P_{\rfrak}$ within error $\epsilon$. Estimating $\Delta$, counting the kernel dimension, recovering a classical basis, preparing a coefficient state, and mapping coefficient vectors to physical operators are separate tasks. The cost unit is one call to a typed support or value oracle, with the additional reversible-gate cost stated separately. The success statement concerns deterministic circuit approximation. Writing $\widetilde P_{\rfrak}$ for the returned signal block, postselection after applying it has probability $p_{\psi}=\|\widetilde P_{\rfrak}\ket{\psi}\|^{2}$ for a normalized input, with $|p_{\psi}-\|P_{\rfrak}\ket{\psi}\|^{2}|\leq2\epsilon$. Section~\ref{sec:zero_center_decision} adds a decision wrapper under additional preparation and controlled-access assumptions.

\section{Notation and Background}\label{sec:lie_algebras_structure_constants}

We work with finite-dimensional Lie algebras over $\mathbb{R}$ or $\mathbb{C}$ and use the notation, structural parameters, and basic fact introduced below. Standard references are~\cite{humphreys2012introduction, jacobson2013lie, georgi2000lie}.

\subsection{Lie Algebras and Structure Constants} 
Throughout this paper, $\g$ denotes a finite-dimensional Lie algebra of dimension $d$ with Lie bracket $[\cdot,\cdot]: \g \times \g \rightarrow \g$. The Levi-Malcev decomposition $\g = \s \ltimes \rfrak$ separates $\g$ into a semisimple Levi part $\s$ and the solvable radical $\rfrak$~\cite{Maltsev1942}. Given a basis $\{B_i\}_{i=1}^{d}$ for $\g$, the structure constants $f_{ijk} \in \mathbb{F}$ are determined by

\begin{equation}\label{eq:structure_constants_definition}
[B_{i}, B_{j}] = \sum_{k=1}^{d} f_{ijk} B_{k}.
\end{equation}

These $d^{3}$ values are the fundamental input to the algorithm.

For any $x \in \g$ the linear map $\ad(x): \g \rightarrow \g$, $\ad(x)(y) := [x,y]$, has matrix elements in the basis $\{B_i\}$ given by
\begin{equation}\label{eq:adjoint_matrix_elements}
(\ad(B_{i}))_{kj} = f_{ijk}.
\end{equation}
The Killing form $\K(x,y) := \Tr(\ad(x)\ad(y))$ is bilinear, and its matrix elements expand to
\begin{equation}\label{eq:killing_form_matrix_elements}
K_{ij} = \sum_{k=1}^{d}\sum_{m=1}^{d} f_{ikm}\,f_{jmk}.
\end{equation}
This double-sum form is the target of the bilinear factorization $K = A^T B$ developed in Section~\ref{sec:killing_form_factorization} and exploited by both the quantum algorithm and the structure-aware classical benchmark.

\begin{definition}[Standard Pauli Basis (Skew-Hermitian)]\label{def:pauli_basis_convention}
For an $n$-qubit system ($N=2^n$), the standard Pauli operators $\{P_k\}$ are the $N \times N$ Hermitian matrices formed by tensor products of the single-qubit Pauli matrices $\{I, X, Y, Z\}$. We define the Skew-Hermitian Pauli Basis $\{B_k\}$ for the unitary algebra $\uu(N)$ by multiplying the standard Pauli operators by the imaginary unit $i$: $B_k = i P_k$.
\end{definition}

This convention keeps the basis elements skew-Hermitian (consistent with the generators of unitary evolution).

\begin{lemma}[Reality and Integrality of Structure Constants]\label{lemma:pauli_constants_real}
In the Skew-Hermitian Pauli Basis, the structure constants $f_{ijk}$ are real integers belonging to $\{0, \pm 2\}$.
\end{lemma}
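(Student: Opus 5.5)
The plan is to compute $[B_i,B_j]$ directly from the multiplication rule for Pauli strings. First I would record the product rule for single-qubit Paulis, $\sigma_a\sigma_b=\delta_{ab}I+i\epsilon_{abc}\sigma_c$. Taking tensor products qubit by qubit then gives, for any two Pauli strings, $P_iP_j=\omega_{ij}P_{k(i,j)}$. Here $\omega_{ij}\in\{\pm1,\pm i\}$ and $P_{k(i,j)}$ is a single Pauli string, determined up to phase by the coordinatewise product of labels (the symplectic sum in $\mathbb F_2^{2n}$). The reverse product satisfies $P_jP_i=\omega_{ji}P_k$. Hermiticity gives $(P_iP_j)^\dagger=P_jP_i$, so $\omega_{ji}=\overline{\omega_{ij}}$.

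Next I would use the fact that two Pauli strings either commute or anticommute. The sign is $(-1)^{m}$, where $m$ is the number of tensor positions at which the two factors are distinct non-identity Paulis. This follows because single-qubit Paulis pairwise commute or anticommute and signs multiply across factors. Consequently:
\begin{itemize}
\item If $P_i$ and $P_j$ commute, then $[P_i,P_j]=0$.
\item If they anticommute, then $[P_i,P_j]=2P_iP_j=2\omega_{ij}P_k$.
\end{itemize}
In the anticommuting case, $P_iP_j=-P_jP_i$ gives $\omega_{ij}=-\overline{\omega_{ij}}$, so $\omega_{ij}$ is purely imaginary, i.e.\ $\omega_{ij}=\pm i$.

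Now I would pass to the skew-Hermitian basis: $[B_i,B_j]=i^2[P_i,P_j]=-[P_i,P_j]$. This is $0$ in the commuting case. In the anticommuting case it equals $-2(\pm i)P_k=\mp 2\,iP_k=\mp 2B_k$. Since the Pauli strings are linearly independent, the expansion in \eqref{eq:structure_constants_definition} is unique. Hence $f_{ijk}\in\{0,\pm2\}$ for the single index $k=k(i,j)$, and $f_{ijl}=0$ for every $l\neq k$. All values are real integers.

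The only step needing care is the phase bookkeeping: showing that anticommutation forces $\omega_{ij}$ to be purely imaginary, rather than merely unimodular. The Hermitian-conjugation argument above handles this cleanly without tracking $\epsilon_{abc}$ signs qubit by qubit. The identity string and the diagonal case $i=j$ fall under the commuting case automatically.
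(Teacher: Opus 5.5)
Your proof is correct and follows the same route as the paper: split on commuting versus anticommuting strings, obtain $[P_i,P_j]=\pm 2iP_k$ in the anticommuting case, and use $i^2=-1$ to get $f_{ijk}=\mp 2$. Your Hermiticity step ($\omega_{ji}=\overline{\omega_{ij}}$, hence $\omega_{ij}=\pm i$ under anticommutation) supplies the $n$-qubit justification that the paper only asserts through its single-qubit-style $\epsilon_{ijk}$ formula.
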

\begin{proof}
The commutator of two standard Pauli operators $P_i, P_j$ satisfies $[P_i, P_j] = 2i \sum_k \epsilon_{ijk} P_k$ if they anti-commute and form a triplet (e.g., $X, Y, Z$), or $[P_i, P_j]=0$ if they commute. The tensor $\epsilon_{ijk}$ takes values in $\{0, \pm 1\}$.

We examine the commutator in the Skew-Hermitian basis:
\[
[B_i, B_j] = [iP_i, iP_j] = -[P_i, P_j] = - \left(2i \sum_k \epsilon_{ijk} P_k\right).
\]
Rearranging the terms:
\[
[B_i, B_j] = -2 \sum_k \epsilon_{ijk} (i P_k) = -2 \sum_k \epsilon_{ijk} B_k.
\]
The structure constants are therefore $f_{ijk} = -2\epsilon_{ijk}$. Since $\epsilon_{ijk} \in \{0, \pm 1\}$, the structure constants $f_{ijk}$ are real integers in $\{0, \pm 2\}$.
\end{proof}

\subsection{Complexity Parameters and Basis Properties} 
Both the quantum runtime and the structure-aware classical benchmark are controlled by four parameters drawn from the $f_{ijk}$ tensor.

\begin{definition}[Certified Structure Constant Magnitude Bound ($F$)]\label{def:max_f}
The input magnitude $F$ is a supplied $b$-bit fixed-point number satisfying
\begin{equation}
F \geq \|f\|_{\max} = \max_{i,j,k} |f_{ijk}|.
\end{equation}
Its validity is a promise. When the exact maximum is known, it may be used as $F$.
\end{definition}

For the Pauli basis, Lemma~\ref{lemma:pauli_constants_real} establishes $F=2$.

\begin{definition}[Standardized Basis]\label{def:standardized_basis}
A basis is \emph{standardized} when it admits a certified $F=O(1)$.
\end{definition}

We now define parameters that characterize the sparsity of the algebraic interactions. These parameters control both the quantum block encoding normalization factors and the optimized classical Sparse Matrix Multiplication (SpMM).

\begin{definition}[Adjoint Sparsity ($s_{adj}$)]\label{def:adjoint_sparsity}
The padded adjoint sparsity $s_{adj}$ is a supplied bound on the number of non-zero structure constants associated with any single basis element acting as the first argument of the commutator:
\begin{equation}\label{eq:adjoint_sparsity}
s_{adj} \geq \max_{i} |\{(j,k) : f_{ijk} \neq 0\}|.
\end{equation}
\end{definition}

This parameter corresponds to the maximum sparsity of the adjoint representation matrices, see Equation~\eqref{eq:adjoint_matrix_elements}.

An algebra is said to have \emph{adjoint-linear sparsity} when $s_{adj}=O(d)$; this scaling is characteristic of dense physical structures, including DLAs in the Pauli basis. A second sparsity parameter captures the structure of the Killing form factorization.

\begin{definition}[Interaction Sparsity ($s_{int}$)]\label{def:interaction_sparsity}
The padded interaction sparsity $s_{int}$ is a supplied bound on the number of basis elements $B_{i}$ that can combine with a fixed basis element $B_{k}$ (as the second argument) to produce a component along a fixed basis element $B_{m}$ (as the result),
\begin{equation}\label{eq:interaction_sparsity}
s_{int} \geq \max_{k,m} |\{i : f_{ikm} \neq 0\}|.
\end{equation}
\end{definition}

At $s_{int}=O(1)$, fixing the output $B_m$ and one input $B_k$ leaves only $O(1)$ possibilities for the other input $B_i$. This parameter determines the sparsity of the matrix $A$ in the Killing form factorization $K=A^T B$ (Section~\ref{sec:killing_form_factorization}), which is distinct from the sparsity of $B$ (governed by $s_{adj}$). The Pauli basis achieves $s_{int}=O(1)$ (as shown in Section~\ref{sec:performance_classes}), whereas generic bases typically have $s_{int}=O(d)$.

Finally, we introduce a definition to characterize the distribution of the magnitudes of the structure constants across the basis.

\begin{definition}[$\beta$-Balance]\label{def:beta_balance}
Let $N_{k} = \sum_{i,j} |f_{ijk}|^{2}$ be the squared $L_{2}$ norm of the structure constants associated with the output basis element $B_{k}$. The average squared norm is $\overline{N} = (1/d)\sum_{k} N_{k}$. For a nonzero structure tensor, define $\beta=(\max_{k} N_{k})/{\overline{N}}$ and call the representation $\beta$-balanced. For an abelian algebra all $N_k$ vanish, so this ratio is undefined; both projector targets are zero and are handled by the zero-target branch of the algorithm contract.
\end{definition}

The condition $\beta=O(1)$ is a one-sided upper-balance statement. It says that no output norm is much larger than the average, but it does not lower-bound the smallest nonzero $N_k$. The dependence of $\beta$ on subalgebra structure is treated in Section~\ref{sec:structural_dependencies_balance_conditioning}.

\subsection{The Quantum Linear Algebra Toolkit}
\paragraph{Typed sparse-access block encodings.}
A unitary $U$ is an $(\alpha,a,\eta)$ block encoding of a matrix $M$ when its
designated input-output block differs from $M/\alpha$ by at most $\eta/\alpha$ in
operator norm; in particular, $\alpha\geq\|M\|$. For a rectangular matrix, let
$J_R$ and $J_C$ embed its row and column spaces into power-of-two registers and
encode $\widehat M=J_RMJ_C^\dagger$ between the two valid-index signal projectors.
The right-kernel output is restricted back by $J_C^\dagger(\cdot)J_C$, so zero
padding does not add spurious kernel directions. For
$M\in\mathbb F^{R_M\times C_M}$, let $s_r(M)$ and $s_c(M)$ bound the number of
nonzeros in one row and one column, and choose
$\mu_M\geq\max_{r,c}|M_{rc}|$.

For the row interface, enlarge the column-index register by disjoint seed labels
$\lambda^r_1,\ldots,\lambda^r_{s_r(M)}$ and distinct sentinels
$\bot^r_1,\ldots,\bot^r_{s_r(M)}$. For each row $r$, let $\pi_r$ be a
permutation of this padded register satisfying
$\pi_r(\lambda^r_\ell)=c_M(r,\ell)$ on an active slot and
$\pi_r(\lambda^r_\ell)=\bot^r_\ell$ otherwise. The active outputs in a row are
distinct, so this partial injection extends to a permutation. Define
$\tau_c$ in the same way on a padded row-index register, using column seeds
$\lambda^c_\ell$ and distinct sentinels $\bot^c_\ell$. The reversible support
oracles act in place:
\begin{align}
O_r^M\ket{r,x}&=\ket{r,\pi_r(x)},&
O_c^M\ket{x,c}&=\ket{\tau_c(x),c},
\label{eq:typed_support_oracles}
\end{align}
and a value oracle
\begin{equation}
O_{\mathrm{val}}^M\ket{r,c,0^b}
=\ket{r,c,\operatorname{enc}_b(M_{rc})}.
\label{eq:typed_value_oracle}
\end{equation}
The value oracle returns zero whenever either coordinate is outside the valid row
or column set, including every seed and sentinel. Thus inactive slots do not add
matrix entries, and a zero row or zero column never causes division by a slice
norm.

Uniform superpositions over the seed labels, followed by the in-place support
permutations, give the typed preparation maps
\begin{align}
\mathsf{PREP}_r^M\ket{r,0}
&=\frac{1}{\sqrt{s_r(M)}}\sum_{\ell=1}^{s_r(M)}
 \ket{r,\pi_r(\lambda^r_\ell)},\nonumber\\
\mathsf{PREP}_c^M\ket{0,c}
&=\frac{1}{\sqrt{s_c(M)}}\sum_{\ell=1}^{s_c(M)}
 \ket{\tau_c(\lambda^c_\ell),c}.
\label{eq:typed_prepare_oracles}
\end{align}
No ordinal register remains in either prepared state: an actual nonzero at
$(r,c)$ therefore meets on the common coordinate pair $(r,c)$ even when its row
and column ordinals differ. The valid-index signal projectors exclude seeds and
sentinels. A SELECT step computes the $b$-bit magnitude and phase of
$M_{rc}/\mu_M$, applies the corresponding controlled rotation, and uncomputes the
value register. The standard sparse-access construction then gives an
\begin{equation}
\bigl(\alpha_M,a_M,\eta_M\bigr)\text{-block encoding},
\qquad
\alpha_M=\mu_M\sqrt{s_r(M)s_c(M)},
\label{eq:sparse_normalization_contract}
\end{equation}
with $O(1)$ calls to the support and value oracles and reversible-gate overhead polynomial in $\log(R_MC_M)$, $b$, and $\log(1/\eta_M)$~\cite[Lemma~48]{Gilyen2019}. The normalization is conservative because
$\|M\|_2\leq\sqrt{\|M\|_1\|M\|_\infty}\leq\mu_M\sqrt{s_r(M)s_c(M)}$.
Products multiply the normalizations and propagate block-encoding error in the usual weighted sum.

\paragraph{QSVT with a supplied gap.}
Let the chosen target matrix be $M$ and suppose that its singular values lie in
$\{0\}\cup[\Delta,\|M\|]$ for a supplied $\Delta>0$. If $U_M$ has normalization
$\alpha$, the normalized gap is $\delta=\Delta/\alpha$. An even polynomial that is
$\epsilon$-close to one at zero and $\epsilon$-close to zero on
$[\delta,1]$ has degree $O(\delta^{-1}\log(1/\epsilon))$. QSVT applies this
polynomial to the right singular vectors and returns an approximate block encoding
of the orthogonal projector onto $\ker M$ with the same asymptotic number of calls
to $U_M$ and $U_M^\dagger$~\cite{Gilyen2019}. If the target encoding has
unscaled error $\eta_{\mathrm{tar}}$, the stability bound for singular-value transformation adds
at most $4D\sqrt{\eta_{\mathrm{tar}}/\alpha}$ to the implemented polynomial at
degree $D$~\cite{Gilyen2019}. We therefore use
\[
\eta_{\mathrm{tar}}\leq
\frac{\alpha\epsilon^2}{256D^2},
\]
allocate at most $\epsilon/2$ to the ideal filter approximation and at most
$\epsilon/4$ to phase synthesis. The encoding perturbation is then at most
$\epsilon/4$. The theorem below assumes $\Delta$; it does not infer the gap from
singular-value estimation.

\begin{definition}[Spectral conditioning and block-encoding slack]\label{def:condition_number}
For a nonzero target $M$, write
\begin{equation}
\kappa_{\mathrm{spec}}(M)=\frac{\sigma_{\max}(M)}{\sigma_{\min}(M)},
\qquad
\rho_{\mathrm{BE}}(M)=\frac{\alpha}{\sigma_{\max}(M)}\geq 1,
\label{eq:kappa_rho_contract}
\end{equation}
where $\sigma_{\min}(M)$ is the least nonzero singular value and $\alpha$ is the
normalization of the implemented block encoding.
\end{definition}

\begin{unnumberedremark}[Runtime-factor contract]
With the tight promise
$\Delta=\sigma_{\min}(M)$, the runtime factor is exactly
\begin{equation}
\frac{\alpha}{\Delta}
=\rho_{\mathrm{BE}}(M)\,\kappa_{\mathrm{spec}}(M).
\label{eq:runtime_factor_contract}
\end{equation}
A smaller certified lower bound $\Delta$ gives the corresponding conservative
upper bound. If $M=0$, the target projector is the identity and no gap promise is
needed.
\end{unnumberedremark}

\section{Quantum Algorithm for Radical Identification}

The algorithm rests on the following standard structural fact, which converts the algebraic decomposition problem into a linear-algebraic kernel problem.

\begin{theorem}[Radical as Killing-orthogonal complement]\label{thm:radical_orthogonal_complement}
Let $\g$ be a finite dimensional Lie algebra over a field of characteristic zero. The radical $\rfrak = \Rad(\g)$ is the orthogonal complement of the derived algebra $\g^{(1)} = [\g, \g]$ with respect to the Killing form $\K$,
\begin{equation}\label{eq:radical_orthogonal_complement}
\rfrak = ([\g, \g])^{\perp_{\K}}.
\end{equation}
\end{theorem}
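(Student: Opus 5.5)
We prove the two inclusions separately. Write $\mathfrak{p}=([\g,\g])^{\perp_{\K}}$. Two standard tools are used. The first is Cartan's solvability criterion: a subalgebra $\mathfrak{h}\subseteq\mathfrak{gl}(V)$ with $\Tr(xy)=0$ for all $x\in[\mathfrak{h},\mathfrak{h}]$, $y\in\mathfrak{h}$ is solvable. The second is the invariance of the Killing form, $\K([x,y],z)=\K(x,[y,z])$.

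\textbf{Step 1: $\mathfrak{p}$ is an ideal.} For $x\in\mathfrak{p}$, $y\in\g$ and $w\in[\g,\g]$, invariance gives $\K([y,x],w)=-\K(x,[y,w])$. This vanishes because $[y,w]\in[\g,\g]$. Hence $[\g,\mathfrak{p}]\subseteq\mathfrak{p}$.

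\textbf{Step 2: $\mathfrak{p}\subseteq\rfrak$.} Consider $\ad(\mathfrak{p})\subseteq\mathfrak{gl}(\g)$. For $x\in[\mathfrak{p},\mathfrak{p}]\subseteq[\g,\g]$ and $y\in\mathfrak{p}$ we have $\Tr(\ad x\,\ad y)=\K(x,y)=0$, since $y\in\mathfrak{p}$. Cartan's criterion therefore makes $\ad(\mathfrak{p})$ solvable. The kernel of $\ad|_{\mathfrak{p}}$ is $\mathfrak{p}\cap\zfrak(\g)$, which is abelian, so $\mathfrak{p}$ is solvable. Being a solvable ideal, $\mathfrak{p}\subseteq\rfrak$.

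\textbf{Step 3: $\rfrak\subseteq\mathfrak{p}$.} This is the main obstacle. We must show $\K(\rfrak,[\g,\g])=0$, i.e., $\Tr(\ad r\,\ad[x,y])=0$ for $r\in\rfrak$ and $x,y\in\g$. By invariance, $\K(r,[x,y])=\K([r,x],y)$, and $[r,x]\in[\g,\rfrak]$. It therefore suffices to show $\K([\g,\rfrak],\g)=0$.

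The standard route is the fact that $[\g,\rfrak]$ is contained in the nilradical, and more precisely that $\ad([\g,\rfrak])$ acts nilpotently on $\g$. To obtain this, take a composition series of $\g$ as a $\g$-module, i.e., by ideals with simple quotients $V_i$. On each $V_i$ the solvable ideal $\rfrak$ acts by scalars. This follows from Lie's theorem after extending scalars to $\overline{\mathbb{F}}$: the common eigenvector space of $\rfrak$ is $\g$-stable by the invariance lemma, which uses characteristic zero and the fact that traces of commutators vanish. Hence $[\g,\rfrak]$ acts by zero on every $V_i$, so $\ad z$ is strictly block upper triangular for $z\in[\g,\rfrak]$. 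Then $\ad z\,\ad y$ is nilpotent for every $y\in\g$, because $\ad y$ preserves the flag. So $\K(z,y)=0$, which gives $\rfrak\subseteq\mathfrak{p}$.

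I expect the scalar-action/invariance lemma in Step 3 to be the delicate point. I would cite it from Humphreys or Jacobson rather than reprove it. Extension of scalars from $\mathbb{R}$ to $\mathbb{C}$ is harmless because the radical, the derived algebra and the Killing form all commute with base change.
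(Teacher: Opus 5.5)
Your proof is correct. The paper gives no argument of its own: it states the identity and calls it a standard consequence of Cartan's criteria, citing Humphreys. Your proposal therefore supplies the standard argument that the citation stands for.

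One comparison is worth making. Only your inclusion $\mathfrak{p}\subseteq\rfrak$ (Steps 1--2) is genuinely Cartan's solvability criterion. The reverse inclusion $\rfrak\subseteq\mathfrak{p}$ rests on Lie's theorem, through the fact that $\ad[\g,\rfrak]$ is strictly triangular on a flag of ideals. So your Step 3 supplies an ingredient that the paper's attribution to Cartan's criteria alone passes over.

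There are a few precision points, none of which is a gap.
\begin{itemize}
\item The claim that $\rfrak$ acts by scalars on each composition factor holds only over $\overline{\mathbb F}$. The composition series should therefore be taken for $\g\otimes\overline{\mathbb F}$, with the solvable ideal $\rfrak\otimes\overline{\mathbb F}$. Over $\mathbb R$, for example in $\mathbb R\ltimes\mathbb R^2$ with the rotation action, the factor $\mathbb R^2$ is irreducible and $\rfrak=\g$ does not act on it by real scalars. The conclusion you actually need, that $[\g,\rfrak]$ acts by zero on each factor, still holds.
\item For that step you need only two facts: $\rfrak\otimes\overline{\mathbb F}$ is a solvable ideal, and the trace form is compatible with base change. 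You do not need the stronger statement that the radical commutes with base change.
\item The same base change justifies using the algebraically closed version of Cartan's criterion in Step 2.
\item Since the theorem is stated for every field of characteristic zero, your closing remark should concern $\mathbb F\subseteq\overline{\mathbb F}$, not only $\mathbb R\subseteq\mathbb C$.
\end{itemize}
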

This is a standard consequence of Cartan's criteria~\cite{humphreys2012introduction}. We exploit it to reformulate radical identification as a kernel-finding task; the remainder of this section details the construction of the target operator, the bilinear factorization of the Killing form, and the QSVT-based kernel projection.

\subsection{The Radical as an Operator Kernel}\label{sec:radical_as_kernel}

The core requirement for an element $x\in\g$ to be in the radical $\rfrak$ is that it must be $\mathcal{K}$-orthogonal to every element $y$ in the derived algebra $[\g,\g]$. To translate this into linear algebra, we first characterize the derived algebra as the image of a specific linear operator.

We define the commutator matrix $C$ as a $d\times d^{2}$ matrix whose columns span the derived algebra. The rows are indexed by $k$, and the columns are indexed by the pair $(i, j)$. The entries of $C$ are the structure constants, indexed such that $C_{k,(i,j)}=f_{ijk}$. The image of $C$, $\ImC$, is precisely the vector subspace $[\g, \g]$.

The Killing form $\K$ is symmetric and bilinear, and the standard orthogonality for such a form, even over the complex field $\mathbb{C}$, is defined via the transpose, without complex conjugation:

\begin{equation}\label{eq:orthogonality_condition}
x^{T} K y = 0 \quad \forall y \in \ImC.
\end{equation}

Equation~\ref{eq:orthogonality_condition} makes the vector $x^{T}K$ orthogonal to all columns of $C$, so $x^{T}KC=0$. Transposing yields $C^{T}K^{T}x=0$, and the symmetry $K^{T}=K$ of the Killing form reduces this to $C^{T}Kx=0$. Thus the general target operator is $A_{\mathrm{gen}}=C^{T}K$; the task is to find its kernel.

\subsection{Block-Encoding Construction and Kernel Projection}\label{sec:workflow}

The target block encoding and its kernel projector are constructed in four steps.

\paragraph{Step 1: Input assumptions and pre-processing.}
The input fixes $\mathbb F\in\{\mathbb R,\mathbb C\}$, an ordered basis, an exact
zero pattern, and $b$-bit fixed-point values for every nonzero $f_{ijk}$, promised
to satisfy antisymmetry and Jacobi exactly in the represented field. These encoded
values define the algebra presented to the algorithm; no additional unencoded
algebra enters the statement. The input also selects one of two
targets:
\begin{equation}
A_{\mathrm{gen}}=C^{T}K=C^{T}A^{T}B
\quad\text{or}\quad
A_{\mathrm{cmp}}=K=A^{T}B,
\label{eq:two_route_targets}
\end{equation}
where the second target requires the promise that $\g$ is compact. A dense tensor can
be indexed for the typed oracles in $O(d^3)$ setup time and storage; an already sparse
list takes $O(\operatorname{nnz}(f))$ data insertion plus its row and column indices.

\paragraph{Step 2: Quantum unitary construction.}
The core of the quantum computation builds block encodings of the components of $A_{\mathrm{gen}}=C^{T}K$.

\paragraph{Killing Form Factorization.}\label{sec:killing_form_factorization}

A direct construction of the Killing form matrix $K$ is inefficient. To enable efficient quantum block encoding, we employ the bilinear factorization tailored to the expression $K_{ij}=\sum_{k,m}f_{ikm}f_{jmk}$ (Equation~\ref{eq:killing_form_matrix_elements}). This factorization is also central to the optimized classical algorithm.

We define two auxiliary matrices $A, B\in\mathbb{F}^{d^{2}\times d}$ (where $\mathbb{F}$ is the underlying field, $\mathbb{R}$ or $\mathbb{C}$). The rows of these matrices are indexed by the pair $(k, m)$, corresponding to the summation indices in Equation~\ref{eq:killing_form_matrix_elements}. The columns are indexed by a single index $i$ or $j$. We define the elements of $A$ and $B$ as follows:
\begin{equation}\label{eq:AB_definitions}
A_{(k,m),i}:=f_{ikm} \quad \text{and} \quad B_{(k,m),j}:=f_{jmk}.
\end{equation}
With these definitions, the Killing form matrix is precisely realized as the product $K=A^{T}B$. We verify this by examining the $(i, j)$-th element:
\[
(A^{T}B)_{ij}=\sum_{k,m}(A^{T})_{i,(k,m)}B_{(k,m),j}=\sum_{k,m}A_{(k,m),i}B_{(k,m),j}.
\]
Substituting the definitions from Equation~\ref{eq:AB_definitions}:
\[
(A^{T}B)_{ij}=\sum_{k,m}f_{ikm}f_{jmk}=K_{ij}.
\]
The identity above reduces a dense encoding of $K$ to composed encodings of the potentially sparser matrices $A^{T}$ and $B$. The relevant sparsities of $A$ and $B$ are captured by $s_{int}$ and $s_{adj}$, respectively. A standard sparse-access block encoding~\cite{Gilyen2019} already uses a constant number of oracle queries, while its subnormalization depends on the sparsity and entry magnitudes of the encoded matrix. Composing encodings of $A^{T}$ and $B$ therefore controls the implemented subnormalization that a direct dense encoding of $K$ would inflate.

\paragraph{Constituent Unitaries and Transpose Implementation}

The three matrices use the same typed interface but have different row and column
registers. Choose the certified padded bounds
\begin{equation}
s_{\mathrm{br}}\geq\max_{i,j}\bigl|\{k:f_{ijk}\neq0\}\bigr|,
\qquad
s_{\mathrm{out}}\geq\max_k\bigl|\{(i,j):f_{ijk}\neq0\}\bigr|.
\label{eq:commutator_sparsities}
\end{equation}
Their exact types and sparse-access bounds are
\begin{center}
\begin{tabular}{@{}lllll@{}}
\toprule
$M$ & dimensions & $M_{rc}$ & $s_r(M)$ & $s_c(M)$\\
\midrule
$B$ & $d^2\times d$ & $B_{(k,m),j}=f_{jmk}$ & $s_{int}$ & $s_{adj}$\\
$A^T$ & $d\times d^2$ & $(A^T)_{i,(k,m)}=f_{ikm}$ & $s_{adj}$ & $s_{int}$\\
$C^T$ & $d^2\times d$ & $(C^T)_{(i,j),k}=f_{ijk}$ & $s_{\mathrm{br}}$ & $s_{\mathrm{out}}$\\
\bottomrule
\end{tabular}
\end{center}
The row and column PREPARE maps of Equation~\eqref{eq:typed_prepare_oracles} enumerate
these supports, while SELECT obtains the corresponding $f_{ijk}$ value and phase.
Dummy slots encode zero. Setting $\mu_M=F$ in
Equation~\eqref{eq:sparse_normalization_contract} gives
\begin{align}
\alpha_B&=F\sqrt{s_{int}s_{adj}},&
\alpha_{A^T}&=F\sqrt{s_{adj}s_{int}},&
\alpha_{C^T}&=F\sqrt{s_{\mathrm{br}}s_{\mathrm{out}}}.
\label{eq:constituent_normalizations}
\end{align}
If $F=0$, every target is zero and the right-kernel projector is the identity;
the displayed sparse encodings are used only for $F>0$.
These are valid upper bounds for every encoded tensor; no slice-norm tightness is
assumed. The composite normalizations are therefore
\begin{align}
\alpha_{\mathrm{gen}}
&=\alpha_{C^T}\alpha_{A^T}\alpha_B
=F^3s_{adj}s_{int}\sqrt{s_{\mathrm{br}}s_{\mathrm{out}}},\nonumber\\
\alpha_{\mathrm{cmp}}
&=\alpha_{A^T}\alpha_B
=F^2s_{adj}s_{int}.
\label{eq:target_normalizations}
\end{align}
For either target, write
$D=O((\alpha/\Delta)\log(1/\epsilon))$ for the filter degree. The constituent
rotation errors are chosen, using the weighted product bounds of
Equation~\eqref{eq:composite_block_errors}, so that the composite target error
satisfies
$\eta_{\mathrm{tar}}\leq\alpha\epsilon^2/(256D^2)$. The filter and phase errors
use the remaining tolerance as specified above. The oracle for $A^T$ is
built by swapping the tensor indices and preserving the encoded complex phase; it
is the algebraic transpose and is not obtained by taking the adjoint of a block
encoding of $A$.

\paragraph{Step 3: Supplied spectral promise and phase synthesis.}\label{sec:qsve_step}
The input supplies $\Delta_{\mathrm{gen}}$ or $\Delta_{\mathrm{cmp}}$ as a certified
$b_{\Delta}$-bit fixed-point lower bound on the target's nonzero singular values. The QSVT phase angles are
computed for the normalized threshold $\Delta/\alpha$. No adaptive spectral search,
rank inference, or overlap assumption is included in this algorithm.
For a filter of degree $D$, the polynomial and its phase sequence are computed once
to per-angle precision $\xi=\Theta(\epsilon/D)$ in classical time
$\operatorname{poly}(D,b,b_{\Delta},\log(D/\epsilon))$. The resulting circuit
description is cached for fixed $\alpha$, $\Delta$, and $\epsilon$. This classical
setup is separate from the oracle-query count; every execution still pays the
queries and reversible gates stated below.

\paragraph{Step 4: QSVT kernel projection.}
QSVT applies an even kernel-filter polynomial of degree
$O((\alpha/\Delta)\log(1/\epsilon))$. The output unitary is a
$(1,a_P,\epsilon)$ block encoding of the Euclidean orthogonal projector onto
$\ker A_{\mathrm{gen}}=\rfrak$ in the general case, or onto
$\ker A_{\mathrm{cmp}}=\ker K=\zfrak=\rfrak$ in the compact case. Its signal block
acts linearly on coefficient-state superpositions. Applying the output unitary to a state and postselecting its
signal ancilla has probability $\|\widetilde P_{\rfrak}\ket{\psi}\|^2$ for the
normalized input $\ket{\psi}$. This probability differs from the ideal kernel
overlap $\|P_{\rfrak}\ket{\psi}\|^2$ by at most $2\epsilon$; the theorem makes no
uniform lower-bound claim for this separate use.

\subsection{Efficient Block Encoding via Arithmetic Oracles (Pauli Basis)}\label{sec:arithmetic_oracles}

The sparse-access construction combines entry or sign queries with the reversible
support enumeration in Equation~\eqref{eq:typed_prepare_oracles}. The full Pauli
algebra admits such an enumeration directly. A smaller Pauli-spanned family uses
an additional rank/select promise for its actual label set.

\paragraph{Algorithmic structure.}
Let $S=\mathbb F_2^{2n}\setminus\{0\}$ index the standard basis of
$\su(2^n)$ by $B_v=iP_v$. The commutator vanishes exactly when
$\omega(v,w)=0$ and otherwise has one output label $v\oplus w$, with coefficient
in $\{\pm2\}$~\cite{Gottesman1998}. For each nonzero $v$, choose a pivot coordinate
on which the nonzero linear functional $w\mapsto\omega(v,w)$ has coefficient one.
Given $u\in\mathbb F_2^{2n-1}$, fill the nonpivot coordinates of $w$ from $u$ and
solve the pivot bit so that $\omega(v,w)=1$. This reversible map is a bijection
\begin{equation}
E_v:\mathbb F_2^{2n-1}\longrightarrow
\{w\in\mathbb F_2^{2n}:\omega(v,w)=1\}.
\label{eq:pauli_affine_enumerator}
\end{equation}
A parity tree implements it with $O(n)$ gates and depth $O(\log n)$.

\paragraph{Oracle construction.}
The map $E_v$ enumerates a column support of $A$ or $B$. For fixed input and
output labels $(k,m)$, the only possible remaining label is $i=k\oplus m$, so the
corresponding row support has size at most one. For a fixed output $k$, the ordered
pairs contributing to $C$ are enumerated as
$(i,j)=(E_k(u),E_k(u)\oplus k)$. These maps give the row and column support oracles
for every matrix in the table above. Uniform superposition over $u$ supplies the
PREPARE states.

SELECT evaluates the Pauli phase law
$P_vP_w=i^{\phi(v,w)}P_{v\oplus w}$. The binary quadratic-form formula of
Dehaene and De Moor is evaluated reversibly to compute
$\phi(v,w)$ and write the coefficient sign~\cite{Dehaene2003}; the work registers
are then uncomputed. This implementation has size $O(n)$ and parallel depth
$O(\log n)$. Thus the full Pauli algebra uses no stored
$d^3$ tensor; one typed oracle call costs $O(n)$ reversible gates.

For an arbitrary explicit Pauli label set $S$, the support is
$\{w\in S:\omega(v,w)=1,\ v\oplus w\in S\}$. Membership tests do not provide a
reversible rank/select enumeration of this set. Such a family therefore falls under
the theorem only if this enumeration is supplied and costed, or if its supports are
loaded explicitly. Merely knowing the entry and sign rule does not justify an
implicit-oracle runtime claim.

\section{Complexity and Performance Analysis}\label{sec:complexity_analysis}

The quantum construction yields oracle-query bounds for each target. A separate
classical benchmark reports arithmetic-operation bounds for its explicit-basis
output, so the two cost units remain distinct.

\subsection{Quantum Complexity Analysis}

We first derive the quantum projector bound from the typed access contract.

\begin{theorem}[Promised-gap projector query complexity]\label{thm:general_complexity_main}
Let the $b$-bit structure tensor of a $d$-dimensional Lie algebra over
$\mathbb R$ or $\mathbb C$ be given through the typed support and value oracles of
Equations~\eqref{eq:typed_support_oracles}--\eqref{eq:typed_value_oracle}, together
with the certified magnitude and padded-support metadata above. Suppose
the nonzero singular values of $A_{\mathrm{gen}}=C^TA^TB$ are at least
$\Delta_{\mathrm{gen}}>0$, where $\Delta_{\mathrm{gen}}$ is supplied as a certified
$b_{\Delta}$-bit fixed-point number. For $\epsilon\in(0,1/2)$, there is a deterministic
quantum circuit that is a $(1,a_P,\epsilon)$ block encoding of
$P_{\ker A_{\mathrm{gen}}}=P_{\rfrak}$ and uses
\begin{equation}\label{eq:general_complexity}
T_Q^{\mathrm{gen}}
=O\!\left(
\frac{F^3s_{adj}s_{int}\sqrt{s_{\mathrm{br}}s_{\mathrm{out}}}}
{\Delta_{\mathrm{gen}}}
\,\log\frac1\epsilon
\right)
\end{equation}
calls to the constituent support and value oracles. With
$D_{\mathrm{gen}}=O((\alpha_{\mathrm{gen}}/\Delta_{\mathrm{gen}})
\log(1/\epsilon))$, the reversible-gate count is
\[
O\!\left(T_Q^{\mathrm{gen}}
\left[g_{\mathrm{acc}}+
\operatorname{poly}\!\left(\log d,b,b_{\Delta},
\log\frac{D_{\mathrm{gen}}}{\epsilon}\right)\right]\right).
\]
\end{theorem}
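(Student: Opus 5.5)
The argument assembles the pieces already set up in Section~\ref{sec:workflow}: sparse-access encodings of the three factors, their product, and the QSVT kernel filter. Correctness and cost are tracked separately. We assume $F>0$; if $F=0$ the target vanishes and the identity circuit encodes $P_{\rfrak}=I$ with no queries.

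\textbf{Kernel identification.} First we check that the kernel being filtered is the radical, i.e.\ $\ker A_{\mathrm{gen}}=\rfrak$. By the factorization identity $K=A^TB$ verified in Section~\ref{sec:killing_form_factorization}, we have $A_{\mathrm{gen}}=C^TK$. By the derivation in Section~\ref{sec:radical_as_kernel}, which uses $\ImC=[\g,\g]$ and the symmetry $K^T=K$, the condition $C^TKx=0$ is exactly Killing-orthogonality of $x$ to $[\g,\g]$. Theorem~\ref{thm:radical_orthogonal_complement} then identifies this complement with $\rfrak$. For rectangular padding we restrict the right-kernel output by $J_C^\dagger(\cdot)J_C$, so padded indices contribute no spurious kernel directions.

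\textbf{Block encoding of the target.} Next we build the encoding of $A_{\mathrm{gen}}$. We apply the sparse-access lemma~\eqref{eq:sparse_normalization_contract} to $B$, $A^T$ and $C^T$, using the support bounds in the table and $\mu_M=F$. This gives the normalizations~\eqref{eq:constituent_normalizations}, each at $O(1)$ oracle calls. The row and column registers of consecutive factors match: $B$ maps into $d^2$ indices, $A^T$ maps $d^2\to d$, and $C^T$ maps $d\to d^2$. We can therefore compose the encodings by the standard product rule. The composite has normalization $\alpha_{\mathrm{gen}}=F^3s_{adj}s_{int}\sqrt{s_{\mathrm{br}}s_{\mathrm{out}}}$ from~\eqref{eq:target_normalizations}, costs $O(1)$ queries per use, and has error $\eta_{\mathrm{tar}}\le \alpha_{C^T}\alpha_{A^T}\eta_B+\alpha_{C^T}\eta_{A^T}\alpha_B+\eta_{C^T}\alpha_{A^T}\alpha_B$ up to normalization. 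We choose each constituent rotation precision as $\eta_M\le \alpha_M\epsilon^2/(768D^2)$, so that $\eta_{\mathrm{tar}}\le\alpha\epsilon^2/(256D^2)$. Because this only costs $O(\log(D/\epsilon))$ bits of rotation precision, it lands inside the stated $\operatorname{poly}$ gate overhead.

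\textbf{QSVT filter and the main obstacle.} We then invoke the QSVT paragraph with normalized gap $\delta=\Delta_{\mathrm{gen}}/\alpha_{\mathrm{gen}}$. The promise puts every nonzero singular value of the normalized target in $[\delta,1]$, because $\alpha\ge\|A_{\mathrm{gen}}\|$. An even polynomial of degree $D=O(\delta^{-1}\log(1/\epsilon))$ approximates the indicator of $0$, and applying it to the right singular vectors yields the projector onto $\ker A_{\mathrm{gen}}$. The error budget splits as $\epsilon/2$ for the ideal filter approximation, $\epsilon/4$ for phase synthesis at per-angle precision $\Theta(\epsilon/D)$, and $4D\sqrt{\eta_{\mathrm{tar}}/\alpha}\le\epsilon/4$ for the encoding perturbation, for a total of at most $\epsilon$. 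The result is a $(1,a_P,\epsilon)$ encoding. The main obstacle is this error bookkeeping. The perturbation bound applies to the polynomial of the \emph{perturbed} block, so we must check that the singular values of the perturbed block near $0$ and near $[\delta,1]$ do not cross into the transition region. We handle this by relying on the cited stability bound, which is uniform over the whole interval $[-1,1]$, rather than by arguing about individual singular values.

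\textbf{Cost.} The circuit makes $O(D)$ uses of $U$ and $U^\dagger$, each with $O(1)$ typed queries. This gives the query bound~\eqref{eq:general_complexity}. Each query costs $g_{\mathrm{acc}}$ reversible gates. The remaining per-step overhead consists of the SELECT arithmetic, the controlled rotations, and the QSVT reflections on $O(\log d)$-qubit registers, which together cost $\operatorname{poly}(\log d,b,b_\Delta,\log(D/\epsilon))$. This yields the stated gate count. The classical phase computation is a one-time setup and is excluded from the count, as specified in Step~3.
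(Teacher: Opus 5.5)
Your proposal is correct and follows essentially the same route as the paper: sparse-access encodings of $B$, $A^T$, $C^T$ with $\mu_M=F$, whose product gives $\alpha_{\mathrm{gen}}$; an even QSVT kernel filter at normalized gap $\Delta_{\mathrm{gen}}/\alpha_{\mathrm{gen}}$; and the identity $\ker(C^TK)=\rfrak$ from Theorem~\ref{thm:radical_orthogonal_complement}. Your explicit error split (constituent precisions $\alpha_M\epsilon^2/(768D^2)$, the stability term $4D\sqrt{\eta_{\mathrm{tar}}/\alpha}\le\epsilon/4$, and the $\epsilon/2+\epsilon/4$ filter and phase allocations) spells out what the paper records in Appendix~\ref{sec:derivation_normalization_factor}.
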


\begin{proof}
Equation~\eqref{eq:constituent_normalizations} follows from the sparse-access
block-encoding construction and
$\|M\|_2\leq\sqrt{\|M\|_1\|M\|_\infty}$. Multiplying the three encodings gives
the normalization $\alpha_{\mathrm{gen}}$ in
Equation~\eqref{eq:target_normalizations}. The promised normalized gap is
$\Delta_{\mathrm{gen}}/\alpha_{\mathrm{gen}}$, so the even QSVT kernel filter has
the degree in Equation~\eqref{eq:general_complexity}. The structural identity
$\ker(C^TK)=\rfrak$ identifies the resulting right-singular-space projector.
\end{proof}

For a nonzero target, the theorem also gives its tight-promise specialization.
The access contract alone gives $\rho_{\mathrm{BE}}\geq1$ and, from the gap
promise, the conservative upper bound
$\rho_{\mathrm{BE}}\leq\alpha_{\mathrm{gen}}/\Delta_{\mathrm{gen}}$.
No dimension-independent upper bound follows for an arbitrary basis.

The full Pauli algebra is a scoped exact case. Put
$q=2^{2n-1}=(d+1)/2$. Direct support counting gives
$F=2$, $s_{adj}=s_{\mathrm{out}}=q$, and
$s_{int}=s_{\mathrm{br}}=1$. Disjoint output supports give
$C C^T=4qI$, while the Killing form is $K=-4qI$. Hence
$\alpha_{\mathrm{cmp}}=4q=\sigma_{\max}(K)$ and
$\alpha_{\mathrm{gen}}=8q^{3/2}=\sigma_{\max}(C^TK)$.
Thus $\rho_{\mathrm{BE}}=1$ for both targets in this named family. This conclusion
is not extended to an arbitrary Pauli-spanned subalgebra.

\paragraph{Optimization for compact algebras.}

For a compact DLA ($\g = \s \oplus \zfrak$), identifying the radical is equivalent to identifying the kernel of the Killing form itself: $x \in \zfrak \iff x \in \ker(K)$.
    
\begin{corollary}[Optimized Complexity for Compact Algebras]\label{cor:compact_optimization}
Let $\g$ be compact and suppose the nonzero singular values of
$A_{\mathrm{cmp}}=K=A^TB$ are at least $\Delta_{\mathrm{cmp}}>0$, supplied as a
certified $b_{\Delta}$-bit fixed-point number. Under the same typed access and
precision contract, QSVT returns a
$(1,a_P,\epsilon)$ block encoding of $P_{\ker K}=P_{\zfrak}=P_{\rfrak}$ with
\begin{equation}
T_Q^{\mathrm{cmp}}
=O\!\left(
\frac{F^2s_{adj}s_{int}}{\Delta_{\mathrm{cmp}}}
\,\log\frac1\epsilon
\right).
\label{eq:compact_complexity_contract}
\end{equation}
For the tight promise, the runtime factor is
$\rho_{\mathrm{BE}}(K)\kappa_{\mathrm{spec}}(K)$.
The factor $\rho_{\mathrm{BE}}(K)$ remains explicit except in a family for which
tightness has been proved.
With $D_{\mathrm{cmp}}=O((\alpha_{\mathrm{cmp}}/\Delta_{\mathrm{cmp}})
\log(1/\epsilon))$, its reversible-gate count is
\[
O\!\left(T_Q^{\mathrm{cmp}}
\left[g_{\mathrm{acc}}+
\operatorname{poly}\!\left(\log d,b,b_{\Delta},
\log\frac{D_{\mathrm{cmp}}}{\epsilon}\right)\right]\right).
\]
\end{corollary}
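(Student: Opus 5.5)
The plan is to follow the same route as the proof of Theorem~\ref{thm:general_complexity_main}, with the factor $C^T$ removed. The only new ingredient is structural: for compact $\g$ we must show $\ker K=\zfrak=\rfrak$.

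\textbf{Structural identity.} For compact $\g$, the decomposition~\eqref{eq:compact_dla_structure} gives $\g=\s\oplus\zfrak$ with $[\s,\zfrak]=0$.
\begin{itemize}
\item If $z\in\zfrak$, then $\ad(z)=0$. Hence $\K(z,\cdot)=0$, so $\zfrak\subseteq\ker K$.
\item Conversely, since the sum is direct and $\s$ is an ideal, $\ad(x)$ for $x=x_s+x_z$ acts as $\ad_\s(x_s)\oplus 0$. Therefore $\K(x,y)=\K_\s(x_s,y_s)$, where $\K_\s$ is the Killing form of $\s$.
\item $\K_\s$ is nondegenerate by Cartan's criterion, since $\s$ is semisimple. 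So $Kx=0$ forces $x_s=0$, giving $\ker K=\zfrak$.
\item The radical of $\s\oplus\zfrak$ is $\zfrak$, because $\s$ has zero radical and $\zfrak$ is abelian. This can also be read off Theorem~\ref{thm:radical_orthogonal_complement}: $[\g,\g]=\s$ and $\s^{\perp_\K}=\zfrak$.
\end{itemize}
One point needs care: $\ker K$ is meant as the right kernel in the Euclidean coefficient sense, and $K$ is symmetric. Over $\mathbb{R}$ it is also negative semidefinite. So the right-singular-space projector of $K$ is exactly the Euclidean projector onto $\zfrak$, even though $\zfrak$ need not be Euclidean-orthogonal to $\s$ in a general basis.

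\textbf{Block encoding and normalization.}
\begin{itemize}
\item Use the typed sparse-access encodings of $A^T$ and $B$ from~\eqref{eq:constituent_normalizations}, each with normalization $F\sqrt{s_{adj}s_{int}}$.
\item Their product is a block encoding of $A^TB=K$, using the verified factorization $K=A^TB$. Its normalization is $\alpha_{\mathrm{cmp}}=F^2s_{adj}s_{int}$, as in~\eqref{eq:target_normalizations}.
\item The cost is $O(1)$ oracle calls per use.
\item Choose the constituent rotation errors so that the product error meets $\eta_{\mathrm{tar}}\le\alpha_{\mathrm{cmp}}\epsilon^2/(256D^2)$.
\item The degenerate case $F=0$ (the zero target) is handled by returning the identity.
\end{itemize}

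\textbf{QSVT and cost accounting.}
\begin{itemize}
\item Apply the even kernel filter at normalized gap $\delta=\Delta_{\mathrm{cmp}}/\alpha_{\mathrm{cmp}}$, with degree $D_{\mathrm{cmp}}=O((\alpha_{\mathrm{cmp}}/\Delta_{\mathrm{cmp}})\log(1/\epsilon))$.
\item Split the error budget as $\epsilon/2$ (filter approximation), $\epsilon/4$ (phase synthesis) and $\epsilon/4$ (perturbation bound $4D\sqrt{\eta/\alpha}$). This yields a $(1,a_P,\epsilon)$ encoding of $P_{\ker K}$.
\item Multiplying the degree by the $O(1)$ queries per use gives~\eqref{eq:compact_complexity_contract}.
\item The gate count adds $g_{\mathrm{acc}}$ per query plus the poly-log SELECT arithmetic and phase-rotation overhead at precision $\Theta(\epsilon/D)$.
\item Under the tight promise, the runtime factor $\alpha_{\mathrm{cmp}}/\Delta_{\mathrm{cmp}}=\rho_{\mathrm{BE}}(K)\kappa_{\mathrm{spec}}(K)$ is a direct instance of~\eqref{eq:runtime_factor_contract}.
\end{itemize}

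\textbf{Main obstacle.} I expect the hardest step to be the structural identity $\ker K=\zfrak$ as a Euclidean right kernel in a nonadapted basis over $\mathbb{R}$ or $\mathbb{C}$. I would settle it through the nondegeneracy of $\K_\s$ together with the factorization $\K(x,y)=\K_\s(x_s,y_s)$. The remaining steps are bookkeeping copied from the general theorem.
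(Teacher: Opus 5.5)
Your proposal is correct and follows essentially the paper's route. The paper obtains the corollary by rerunning the proof of Theorem~\ref{thm:general_complexity_main} with only the two factors $A^T$ and $B$ (normalization $\alpha_{\mathrm{cmp}}=F^2s_{adj}s_{int}$, same error split and filter degree), together with $\ker K=\zfrak=\rfrak$, which it asserts without proof just before the corollary; your derivation via $\K(x,y)=\K_\s(x_s,y_s)$ and the nondegeneracy of $\K_\s$ supplies that standard justification.
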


\begin{remark}[Task boundary]\label{rem:state_prep_bottleneck}
The theorem constructs one projector block-encoding circuit. Preparing a coefficient
state, estimating the kernel dimension, returning an explicit basis, and simulating
a physical Hamiltonian are different tasks. Their input and output costs are not
included in $T_Q^{\mathrm{gen}}$ or $T_Q^{\mathrm{cmp}}$.
\end{remark}

\subsection{Zero-center decision from an all-direction probe}
\label{sec:zero_center_decision}

This subsection adds an output protocol for a compact DLA. The convention for the
supplied algebra must be fixed in advance as $\uu(N)$, $\su(N)$, or a projective
algebra in which the scalar $iI$ has been removed. The decision concerns the
center inside that supplied algebra.

Let $P=P_{\zfrak}=P_{\ker K}$ and let $R=\operatorname{rank}P=\dim\zfrak$.
Write $J:\mathbb C^d\rightarrow\mathcal H_{\mathrm{reg}}$ for the isometric
embedding of the valid coefficient labels into the qubit register. For the
projector circuit $U_P$, define its valid signal block by
\begin{equation}
B=(\bra{0}^{\otimes a_P}\otimes J^\dagger)U_P
  (\ket{0}^{\otimes a_P}\otimes J),
\qquad
\|B-P\|\leq\eta,\qquad \|B\|\leq1.
\label{eq:valid_center_block}
\end{equation}
The good event checks both the zero signal ancilla and a valid output coefficient
label. This second check excludes leakage into padded labels.

The all-direction probe is
\begin{equation}
\ket{\Phi_d}
=\frac1{\sqrt d}\sum_{j=1}^{d}\ket{Jj}_A\ket{Jj}_R.
\label{eq:maximally_entangled_probe}
\end{equation}

\begin{lemma}[Center-rank overlap]\label{lem:center_rank_overlap}
For any $d\times d$ matrix $M$,
\[
\|(M\otimes I)\ket{\Phi_d}\|^2
=\frac1d\Tr(M^\dagger M).
\]
Consequently, the good probability $q=\|(B\otimes I)\ket{\Phi_d}\|^2$ obeys,
for $0\leq\eta\leq1$,
\begin{equation}
\frac Rd(1-\eta)^2
\leq q\leq
\frac Rd+\left(1-\frac Rd\right)\eta^2.
\label{eq:center_block_contraction_bounds}
\end{equation}
For an exact projector, $q=R/d$.
\end{lemma}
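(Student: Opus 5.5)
The plan is to prove the identity by direct expansion in the computational basis and then get the bounds from a trace inequality. First, $(M\otimes I)\ket{\Phi_d}=d^{-1/2}\sum_j (M\ket{j})\otimes\ket{j}$, where the embedding $J$ is suppressed because it is an isometry. The vectors $\ket{j}_R$ are orthonormal, so the squared norm splits into a sum over $j$:
\[
\|(M\otimes I)\ket{\Phi_d}\|^2=\frac1d\sum_j\|M\ket{j}\|^2=\frac1d\sum_j\bra{j}M^\dagger M\ket{j}=\frac1d\Tr(M^\dagger M).
\]
Taking $M=B$ gives $q=\|B\|_F^2/d$. For an exact projector $B=P$ we have $P^\dagger P=P$, so $q=\Tr P/d=R/d$.

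For the inequalities, write $E=B-P$, so that $\|E\|\le\eta$. Split $\C^d=\operatorname{ran}P\oplus\ker P$ and write $B$ in block form. Then
\[
\|B\|_F^2=\|BP\|_F^2+\|B(I-P)\|_F^2,
\]
because the Frobenius norm splits over an orthogonal decomposition of the domain. The two pieces are handled separately.

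\textbf{Lower bound.} For a unit vector $v\in\operatorname{ran}P$ we have $\|Bv\|\ge\|Pv\|-\|Ev\|\ge1-\eta$. Since $0\le\eta\le1$, it follows that $\|BP\|_F^2=\sum_{v\in\text{ONB of ran}P}\|Bv\|^2\ge R(1-\eta)^2$. Dropping the nonnegative second piece gives $q\ge (R/d)(1-\eta)^2$.

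\textbf{Upper bound.} On $\operatorname{ran}P$ we use $\|B\|\le1$, which gives $\|BP\|_F^2\le R$. On $\ker P$ we have $B(I-P)=E(I-P)$, so for each of the $d-R$ orthonormal vectors $v$ in $\ker P$ we get $\|Bv\|\le\eta$. Hence $\|B(I-P)\|_F^2\le(d-R)\eta^2$. Dividing by $d$ gives
\[
q\le \frac Rd+\Bigl(1-\frac Rd\Bigr)\eta^2.
\]

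The only care needed is the bookkeeping for the embedding $J$ and the padded labels. The good probability measures exactly the valid signal block $B$ of Equation~\eqref{eq:valid_center_block}, so $q=\|(B\otimes I)\ket{\Phi_d}\|^2$, and the isometry $J$ preserves norms. Beyond that, no step is a real obstacle; the main point is the domain split that separates the $\|B\|\le1$ estimate from the $\|E\|\le\eta$ estimate.
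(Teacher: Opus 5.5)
Your proof is correct and follows essentially the same route as the paper. You obtain the trace identity by expanding the reference register in its label basis, then sum squared column norms over an orthonormal basis adapted to $\operatorname{ran}P\oplus\ker P$, using $1-\eta\le\|Bu\|\le 1$ on the range and $\|Bv\|\le\eta$ on the kernel. Your explicit split $\|B\|_F^2=\|BP\|_F^2+\|B(I-P)\|_F^2$, and your remark that $\eta\le 1$ is needed before squaring the lower bound, simply spell out steps that the paper's terse argument leaves implicit.
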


\begin{proof}
The trace identity follows by expanding the reference register in its orthogonal
label basis. On $\operatorname{ran}P$, Equation~\eqref{eq:valid_center_block}
gives $1-\eta\leq\|Bu\|\leq1$ for a unit vector $u$. On $\ker P$, it gives
$\|Bv\|\leq\eta$. Summing these squared column norms proves
Equation~\eqref{eq:center_block_contraction_bounds}.
\end{proof}

The trace probe places the center problem within the earlier line of
all-direction singularity testing and normalized projector-rank estimation
~\cite{ambainis2012variable,mcardle2026streamlined}. The theorem below supplies
the compact-DLA specialization, including the valid coefficient domain and the
approximation-error budget for the stated promise.

\begin{theorem}[Bounded-error zero-center decision]
\label{thm:zero_center_decision}
Fix $1\leq R_{\min}\leq d$ and
$0<\delta_{\mathrm{fail}}<1/2$. Suppose the compact projector promises of
Corollary~\ref{cor:compact_optimization} hold. Assume that a unitary $V_\Phi$
prepares Equation~\eqref{eq:maximally_entangled_probe} entirely inside the valid
space and that $V_\Phi^\dagger$, $U_P^\dagger$, the good reflection, the
initial-state reflection, and controlled versions of the required circuits are
available. Controlled typed-oracle calls are charged at constant query overhead;
any larger implementation cost must be reported separately. If
\[
\eta\leq\frac14\sqrt{\frac{R_{\min}}d},
\]
then the promise
\begin{equation}
H_0:R=0,\qquad H_1:R\geq R_{\min}
\label{eq:zero_center_promise}
\end{equation}
can be decided with success probability at least
$1-\delta_{\mathrm{fail}}$ using
\begin{equation}
N_P=O\!\left(
\sqrt{\frac d{R_{\min}}}
\log\frac1{\delta_{\mathrm{fail}}}
\right)
\label{eq:zero_center_projector_queries}
\end{equation}
calls to the projector and state-preparation circuits and their inverses. The
corresponding typed structure-oracle count is
\begin{equation}
O\!\left[
\frac{F^2s_{adj}s_{int}}{\Delta_{\mathrm{cmp}}}
\sqrt{\frac d{R_{\min}}}
\log\!\left(4\sqrt{\frac d{R_{\min}}}\right)
\log\frac1{\delta_{\mathrm{fail}}}
\right].
\label{eq:zero_center_typed_queries}
\end{equation}
\end{theorem}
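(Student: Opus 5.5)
The decision reduces to separating two regimes of the good probability $q=\|(B\otimes I)\ket{\Phi_d}\|^2$ from Lemma~\ref{lem:center_rank_overlap}, and then distinguishing them with amplitude amplification. Under $H_0$ we have $R=0$, so Equation~\eqref{eq:center_block_contraction_bounds} gives $q\leq\eta^2\leq R_{\min}/(16d)$. Under $H_1$ we have $R\geq R_{\min}$ and $\eta\leq 1/4$, which gives $q\geq (R_{\min}/d)(3/4)^2=9R_{\min}/(16d)$. Writing $q=\sin^2\theta$, the two hypotheses therefore correspond to $\sin\theta\leq \tfrac14\sqrt{R_{\min}/d}$ versus $\sin\theta\geq\tfrac34\sqrt{R_{\min}/d}$. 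This is a constant-factor gap in amplitude at scale $\sqrt{R_{\min}/d}$.

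To decide between these, I would run the state $U_P V_\Phi\ket{0}$ through amplitude estimation, using the good reflection (zero signal ancilla and valid label) and the initial-state reflection built from $V_\Phi$, $U_P$ and their inverses. The standard amplitude-estimation guarantee with $M$ iterations returns $\tilde\theta$ with $|\tilde\theta-\theta|\leq \pi/M$ with probability at least $8/\pi^2$. Choosing $M=c\sqrt{d/R_{\min}}$ for a suitable constant $c$ makes $\pi/M$ smaller than half the gap between $\arcsin(\tfrac14\sqrt{R_{\min}/d})$ and $\arcsin(\tfrac34\sqrt{R_{\min}/d})$. Since $\arcsin$ is bi-Lipschitz on $[0,\pi/2]$ near zero with constants in $[1,\pi/2]$, that gap is $\Theta(\sqrt{R_{\min}/d})$, so thresholding $\tilde\theta$ at the midpoint decides correctly with constant probability. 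One subtlety needs care: when $R_{\min}$ is comparable to $d$ the angles are no longer small. The bounds on $q$ still separate by a constant ratio, though, so a constant $M$ suffices there.

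A median over $O(\log(1/\delta_{\mathrm{fail}}))$ independent repetitions then boosts the success probability to $1-\delta_{\mathrm{fail}}$ by a Chernoff bound. This gives the count $N_P$ in Equation~\eqref{eq:zero_center_projector_queries}. Controlled applications of the reflections are charged at constant overhead, as the hypothesis permits.

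The typed-oracle count follows by substituting the cost of one projector call. The accuracy requirement $\eta\leq\tfrac14\sqrt{R_{\min}/d}$ sets the QSVT precision parameter to $\epsilon=\eta$, so Corollary~\ref{cor:compact_optimization} gives $O((F^2s_{adj}s_{int}/\Delta_{\mathrm{cmp}})\log(4\sqrt{d/R_{\min}}))$ queries per call. Multiplying by $N_P$ yields Equation~\eqref{eq:zero_center_typed_queries}.

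The main obstacle is confirming that the implemented circuit's good amplitude really is $\sqrt q$ for the block $B$ of Equation~\eqref{eq:valid_center_block}. Two points need checking: the state preparation stays entirely in the valid space, so padded labels contribute nothing, and the good projector includes the valid-label check, so leakage is excluded. Both are supplied as hypotheses, so I would verify them explicitly before invoking amplitude estimation.
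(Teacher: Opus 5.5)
Your proposal is correct and follows the paper's proof: the same $a_0/16$ versus $9a_0/16$ separation from Lemma~\ref{lem:center_rank_overlap}, amplitude estimation with $O(\sqrt{d/R_{\min}})$ Grover iterations followed by median boosting, and the substitution $\epsilon=\eta=\tfrac14\sqrt{R_{\min}/d}$ into Corollary~\ref{cor:compact_optimization}. Your angle-gap step is slightly simpler than the appendix's derivative check. Since $\arcsin'\geq1$, the mean value theorem gives $\arcsin(3x/4)-\arcsin(x/4)\geq x/2$ uniformly for $x\in[0,1]$, so your separate caveat for $R_{\min}$ comparable to $d$ is not needed.
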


\begin{proof}
Put $a_0=R_{\min}/d$. Lemma~\ref{lem:center_rank_overlap} gives
$q\leq a_0/16$ under $H_0$ and $q\geq9a_0/16$ under $H_1$. The matching
amplitude angles are separated by at least $\sqrt{a_0}/2$. Standard amplitude
estimation resolves this gap with constant success using
$O(a_0^{-1/2})$ Grover iterations~\cite[Theorem~12]{brassard2000quantum}.
Independent repetition and a median reduce the failure probability, giving
Equation~\eqref{eq:zero_center_projector_queries}. Substituting
$\eta=\sqrt{a_0}/4$ into
Equation~\eqref{eq:compact_complexity_contract} gives
Equation~\eqref{eq:zero_center_typed_queries}. Appendix~\ref{app:zero_center_details}
records the angle and perturbation bounds.
\end{proof}

\begin{corollary}[Unitary state-preparation error]
\label{cor:zero_center_unitary_prep}
Let the actual pure preparation, including all work registers, be within Euclidean
distance $\xi\leq\sqrt{R_{\min}/d}/8$ of the ideal state. If its output is valid
and its exact inverse is available, it has the query scaling stated in
Theorem~\ref{thm:zero_center_decision}.
\end{corollary}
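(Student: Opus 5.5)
The approach is to treat the imperfect preparation as a perturbation of the probe state. We then rerun the amplitude-estimation argument of Theorem~\ref{thm:zero_center_decision} with a slightly larger error budget. Write $\ket{\widetilde\Phi}$ for the actual prepared state on the valid space and work registers, and $\ket{\Phi}=\ket{\Phi_d}\otimes\ket{\mathrm{junk}}$ for the ideal one, with $\|\ket{\widetilde\Phi}-\ket{\Phi}\|\leq\xi$. The good-event operator is $G=\Pi_{\mathrm{good}}(U_P\otimes I)$, which includes the check for a valid output label. It has norm at most one. Hence the good amplitude $\sqrt{\widetilde q}=\|G\ket{\widetilde\Phi}\|$ differs from $\sqrt q=\|G\ket{\Phi}\|$ by at most $\xi$, by the reverse triangle inequality. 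The ideal work registers are untouched by $U_P$, so $q$ is exactly the quantity bounded in Lemma~\ref{lem:center_rank_overlap}.

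Next I would redo the separation bookkeeping. Put $a_0=R_{\min}/d$ and keep $\eta\leq\sqrt{a_0}/4$ as in the theorem. Under $H_0$, Lemma~\ref{lem:center_rank_overlap} gives $\sqrt q\leq\eta\leq\sqrt{a_0}/4$, since $R=0$. Under $H_1$, it gives $\sqrt q\geq\sqrt{a_0}(1-\eta)\geq\tfrac34\sqrt{a_0}$, using $\eta\leq1/4$. Adding the preparation error $\xi\leq\sqrt{a_0}/8$ gives $\sqrt{\widetilde q}\leq\tfrac38\sqrt{a_0}$ under $H_0$ and $\sqrt{\widetilde q}\geq\tfrac58\sqrt{a_0}$ under $H_1$. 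The amplitude gap is therefore still at least $\sqrt{a_0}/4$. The corresponding Grover angles $\arcsin\sqrt{\widetilde q}$ are separated by at least the same amount, because $\arcsin$ has derivative at least one.

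Amplitude estimation~\cite[Theorem~12]{brassard2000quantum} then needs only reflections built from the actual preparation $\widetilde V$ and its exact inverse, together with the good reflection. It resolves a $\Theta(\sqrt{a_0})$ angle gap with constant success using $O(a_0^{-1/2})$ iterations. A median over $O(\log(1/\delta_{\mathrm{fail}}))$ repetitions yields the same $N_P$ as Equation~\eqref{eq:zero_center_projector_queries}. The typed-oracle count is unchanged, since $\eta$ is still chosen as $\Theta(\sqrt{a_0})$ in Equation~\eqref{eq:compact_complexity_contract}.

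The step I expect to need the most care is the role of the inverse. Amplitude estimation must reflect about the state actually prepared, $\widetilde V\ket{0}$, and not about the ideal probe. This is exactly why the statement requires the exact inverse of the actual circuit: $S_{\widetilde\Phi}=\widetilde V S_0\widetilde V^\dagger$ is then a perfect reflection, and the algorithm estimates $\widetilde q$ exactly as designed. The validity hypothesis is also essential. It ensures that no amplitude sits on padded labels and spuriously passes or fails the good check. With it, $\Pi_{\mathrm{good}}$ acts on the prepared state as assumed in Equation~\eqref{eq:valid_center_block}, and no extra leakage term enters the bounds above.
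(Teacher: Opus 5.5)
Your proposal is correct and is essentially the paper's proof: it perturbs the good amplitude by at most $\xi$ through a contraction, keeps a constant-fraction gap in amplitude and hence in angle, and runs amplitude estimation with reflections built from the actual circuit and its exact inverse. The one difference is under $H_0$: the paper avoids paying $\xi$ there by bounding $\|(B\otimes I)\ket{\widetilde\Phi}\|\leq\|B\|\leq\eta\leq\sqrt{a_0}/4$ directly for the valid prepared state, which is where validity is used, and so gets the split $\sqrt{a_0}/4$ versus $5\sqrt{a_0}/8$ rather than your $3\sqrt{a_0}/8$ versus $5\sqrt{a_0}/8$; your uniform bound $\|G\|\leq1$ absorbs any invalid component into $\xi$, so your route does not actually need the validity hypothesis you call essential.
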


\begin{proposition}[Postselected center state]\label{prop:center_state_output}
If $R>0$ and $B=P$, normalizing the good branch gives
\[
\ket{\Psi_P}
=\sqrt{\frac dR}(P\otimes I)\ket{\Phi_d}
=\frac{\operatorname{vec}(P)}{\sqrt R}.
\]
Its two reduced states are $P/R$ and $P^T/R$. The coefficient register is
maximally mixed on the center subspace; this output is not a classical basis or
a preferred pure center vector.
\end{proposition}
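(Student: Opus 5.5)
The plan is a direct computation of the normalized good branch, followed by a partial-trace calculation.

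\emph{Step 1 (good branch).} With $B=P$, Lemma~\ref{lem:center_rank_overlap} gives the good probability $q=\|(P\otimes I)\ket{\Phi_d}\|^2=\Tr(P^\dagger P)/d$. Because $P$ is an orthogonal (Hermitian, idempotent) projector, $\Tr(P^\dagger P)=\Tr P=R$, so $q=R/d$. Since $R>0$, we have $q>0$, the normalization is well defined, and the normalized branch is $\sqrt{d/R}\,(P\otimes I)\ket{\Phi_d}$.

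\emph{Step 2 (vectorization).} Suppress $J$ by working in the valid coefficient space, where $J$ acts as the identity on labels. Expand
\[
(P\otimes I)\ket{\Phi_d}=\frac{1}{\sqrt d}\sum_{j}P\ket{j}\ket{j}
=\frac{1}{\sqrt d}\sum_{i,j}P_{ij}\ket{i}\ket{j}.
\]
This is $\operatorname{vec}(P)/\sqrt d$ under the row-major convention $\operatorname{vec}(M)=\sum_{i,j}M_{ij}\ket{i}\ket{j}$, which I fix explicitly. Multiplying by $\sqrt{d/R}$ gives $\operatorname{vec}(P)/\sqrt R$, and its norm is $\|P\|_F/\sqrt R=1$.

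\emph{Step 3 (reduced states).} For $\ket{\psi}=\operatorname{vec}(M)$, the reduced states are $\Tr_R\ket{\psi}\bra{\psi}=MM^\dagger$ and $\Tr_A\ket{\psi}\bra{\psi}=(M^\dagger M)^T$. Both follow by contracting indices: $\sum_j M_{ij}\overline{M_{i'j}}$ and $\sum_i M_{ij}\overline{M_{ij'}}$. Take $M=P/\sqrt R$ and use $P=P^\dagger=P^2$. The coefficient register $A$ then has state $P/R$, and the reference register $R$ has state $\overline{P}/R$. Since $P$ is Hermitian, $\overline{P}=P^T$, so the reference state is $P^T/R$, matching the statement.

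\emph{Step 4 (interpretation).} Since $P/R$ is the normalized projector onto $\zfrak$, its spectrum is $1/R$ with multiplicity $R$, so it is maximally mixed on the center subspace. It is mixed rather than pure whenever $R>1$, and it is invariant under every unitary on $\zfrak$. Hence it selects no preferred center vector and supplies no classical basis; reading off a basis would need further tomography or sampling, which lies outside this statement.

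The main obstacle is only bookkeeping in Steps 2 and 3: keeping the $\operatorname{vec}$ convention and the placement of the conjugate consistent, so that the transpose lands on the reference register and not the coefficient register. I will fix row-major vectorization at the outset to settle this.
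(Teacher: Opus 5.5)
Your proof is correct and follows the paper's own argument. Like the paper, you vectorize the exact good branch as $\operatorname{vec}(P)/\sqrt d$, normalize using $\|P\|_F^2=\Tr P=R$, and obtain the reduced states as $PP^\dagger/R=P/R$ and $(P^\dagger P)^T/R=P^T/R$. Your fixed row-major convention and the spectral remark on $P/R$ simply spell out the bookkeeping and the interpretive final sentence that the paper leaves terse.
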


\begin{remark}[Rank and output boundary]\label{rem:zero_center_rank_boundary}
The same signal permits an additive $\varepsilon_p$ estimate of the normalized
rank $R/d$ with
$O(\varepsilon_p^{-1}\log(1/\delta_{\mathrm{fail}}))$ outer calls. Exact rank
has a different worst-case cost. The input-dependent expected count of
Brassard et al.\ is
$\Theta(\sqrt{(R+1)(d-R+1)})$~\cite[Theorem~17]{brassard2000quantum}, while
adjacent middle ranks require $\Omega(d)$ fixed-error queries in the Boolean
oracle model~\cite[Corollary~1.2]{NayakWu1999}. Thus the worst-case exact-rank
outer complexity is $\Theta(d)$. An approximate projector adds normalized-rank
bias at most $2\eta$, so worst-case exact recovery also needs
$\eta=O(1/d)$. These statements do not produce a center basis.
\end{remark}

\begin{definition}[Opaque typed-value access]\label{def:opaque_typed_value_access}
In the opaque typed-value model used below, the ordered basis, support, and
listed metadata are public, while numerical values are accessible only through
opaque typed-value unitaries and their inverses. No readable implementation or
value table is supplied. Setup data, global norms, and equivalent
answer-bearing statistics are also excluded.
\end{definition}

\begin{theorem}[Opaque-value oracle-model comparison]\label{thm:opaque_value_separation}
There is an explicit family of compact real Lie algebras
$\g_x=\bigoplus_{j=1}^{N}\mathfrak h_{x_j}$, where
\[
\mathfrak h_0=\su(2)\oplus\su(2),
\qquad
\mathfrak h_1=\su(2)\oplus\uu(1)^{\oplus3},
\]
with dimension $d=6N$ and center rank $R=3|x|$. The two six-dimensional
blocks have the same exact zero pattern in one common basis, and they share the
public bounds
\[
F=44,\quad
(s_{adj},s_{int},s_{\mathrm{br}},s_{\mathrm{out}})
=(30,5,6,30),\quad
\Delta_{\mathrm{cmp}}=\frac1{32}.
\]
In the model of Definition~\ref{def:opaque_typed_value_access}, under the promise
$|x|=0$ or $|x|=t$, where $1\leq t\leq N/2$, the fixed-error query
complexities satisfy
\begin{equation}
Q_{\mathrm{typed}}
=\Theta\!\left(\sqrt{\frac Nt}\right)
=\Theta\!\left(\sqrt{\frac d{R_{\min}}}\right),
\qquad
C_{\mathrm{typed}}
=\Theta\!\left(\frac Nt\right)
=\Theta\!\left(\frac d{R_{\min}}\right),
\label{eq:opaque_value_zero_center_separation}
\end{equation}
with $R_{\min}=3t$.
\end{theorem}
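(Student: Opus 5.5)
The proof has two parts: an explicit construction of the block family with its shared public metadata, and matching upper and lower query bounds obtained by reducing to and from unstructured search on $x\in\{0,1\}^N$.

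\textbf{Step 1: a common basis for the two blocks.} I would fix one six-dimensional ordered basis in which both $\mathfrak h_0=\su(2)\oplus\su(2)$ and $\mathfrak h_1=\su(2)\oplus\uu(1)^{\oplus3}$ have structure tensors with the same exact zero pattern. The natural attempt is to start from the standard bases $\{e_a\}\oplus\{e'_a\}$ and $\{e_a\}\oplus\{z_a\}$ and then apply one common mixing matrix $G\in GL_6(\mathbb R)$ with small integer or dyadic entries. Generic mixing fills the zero patterns. The target is that both transformed tensors have supports contained in one common pattern and exactly equal to it, with $s_{adj}=30$, $s_{int}=5$, $s_{\mathrm{br}}=6$ and $s_{\mathrm{out}}=30$.

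\textbf{Step 2: verify the public metadata.} From the explicit basis of Step 1 I would read off the maximum coefficient magnitude and set $F=44$. I would then compute both Killing matrices. For $\mathfrak h_0$ the Killing form has full rank. For $\mathfrak h_1$ its kernel is exactly the $\uu(1)^3$ part, so the center rank is $3$. I would check that every nonzero singular value of $K$ in both blocks is at least $1/32$. Because Killing forms, centers and supports add over direct sums, $\g_x$ has $d=6N$, $R=3|x|$, and the same $F$, sparsities and $\Delta_{\mathrm{cmp}}$ for every $x$. I expect this step to be the main obstacle. The basis must give equal supports while keeping the Killing spectrum above $1/32$, and I would search for $G$ with rational entries and verify the result by direct computation.

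\textbf{Step 3: lower bounds.} Each typed-value query to $\g_x$ touches the entries of a single block $j$. It can therefore be simulated with $O(1)$ queries to a standard oracle for the bit $x_j$: compute the value from $x_j$ and uncompute. Deciding the promise then decides $|x|=0$ versus $|x|=t$. The quantum lower bound is $\Omega(\sqrt{N/t})$ by the adversary or polynomial method, as in \cite{NayakWu1999}, which covers threshold and counting functions. The classical lower bound is $\Omega(N/t)$, since a randomized query algorithm must hit a marked block with constant probability. Definition~\ref{def:opaque_typed_value_access} is used exactly here. Because the support and metadata are public and identical for both block types, only values can reveal $x_j$, so the reduction captures all of the algorithm's information.

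\textbf{Step 4: upper bounds.} Choose one nonzero entry position where $\mathfrak h_0$ and $\mathfrak h_1$ differ; such a position exists, since otherwise the tensors would coincide. Querying that entry in block $j$ computes $x_j$ with one typed-value query. Grover search, or Theorem~\ref{thm:zero_center_decision} applied with $R_{\min}=3t$, gives $O(\sqrt{N/t})$ queries. Classical random sampling of $O(N/t)$ blocks gives the classical bound. Finally, $N/t=2d/R_{\min}$ converts both bounds to the stated form in $d$ and $R_{\min}$.
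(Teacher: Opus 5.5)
Your query-complexity half matches the paper's proof. Typed-value queries and Boolean queries to $x$ are equivalent up to constants: a typed query is simulated by computing the block index, querying $x_j$, selecting one of two fixed tables, and uncomputing, and conversely one query to a distinguishing entry returns $x_j$. After that, Nayak--Wu, Grover search, block sampling, and your hitting argument give the four bounds; the paper makes the hitting argument rigorous with Yao's principle and the no-hit probability $\binom{N-t}{k}/\binom{N}{k}$.

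\textbf{The gap is in the construction.} The statement asserts concrete public certificates, $F=44$ and $\Delta_{\mathrm{cmp}}=1/32$. These are properties of one particular basis. A plan to search for a mixing matrix $G$ and verify afterwards does not establish them, and a different $G$ would give different constants. The paper supplies the missing witness: the single unimodular integer matrix $S$ of Equation~\eqref{eq:same_support_basis_change}, applied to both adapted bases. Direct substitution then gives maximal entries $37$ and $44$, and the distinguishing entry $g^{(0)}_{121}=7$, $g^{(1)}_{121}=3$ that your Step~4 needs.

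Once an explicit $S$ is fixed, the remaining verification is short, and the obstacle you anticipate is not there. The common zero pattern is automatic. Antisymmetry forces $g_{iik}=0$, and a generic $S$ makes every other entry nonzero in both tensors. Both supports are then $\{(i,j,k):i\neq j\}$ with $180$ entries, which is exactly what yields $(s_{adj},s_{int},s_{\mathrm{br}},s_{\mathrm{out}})=(30,5,6,30)$. The gap certificate needs no search either. Each $\su(2)$ triple with $[e_a,e_b]=\epsilon_{abc}e_c$ has Killing form $-2I_3$, so $K_0=-2S^TS$ and $K_1=-2A^TA$, where $A$ consists of the first three rows of $S$. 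The gaps are therefore $2\sigma_{\min}(S)^2$ and $2\lambda_{\min}(AA^T)$, which the paper bounds using $\|S^{-1}\|_F^2=58$ and $\det(AA^T)=1245$, $\Tr(AA^T)=107$.

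Two smaller corrections:
\begin{itemize}
\item Your alternative upper bound through Theorem~\ref{thm:zero_center_decision} does not reach $O(\sqrt{N/t})$. Its typed count carries the factor $\log(4\sqrt{d/R_{\min}})$ and needs the extra preparation and control assumptions, so only the direct Grover search attains the stated bound.
\item With $d=6N$ and $R_{\min}=3t$, one has $N/t=d/(2R_{\min})$, not $2d/R_{\min}$. This leaves the $\Theta$ statements unchanged.
\end{itemize}
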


The compact-Lie certificates and complete query reduction are given in Appendix~\ref{app:zero_center_details}.

\paragraph*{Query-complexity provenance.}
Theorem~\ref{thm:opaque_value_separation} transfers the established
promised-search exponents~\cite{brassard2000quantum,NayakWu1999} through a
promise-preserving Lie-algebra construction. Both branches are compact Lie
algebras satisfying the Jacobi identity, and their structure tensors share the
public zero pattern and common sparsity, scale, and gap certificates while their
center ranks differ. The resulting witness isolates the typed-value oracle model.

\begin{remark}[Scope of the query separation]\label{rem:opaque_value_scope}
Theorem~\ref{thm:opaque_value_separation} does not apply to the explicit-input
classical benchmark of Section~\ref{sec:optimized_classical_benchmark}. If the
values or their setup are freely readable, a scan can expose or cache the
answer. In this family, a direct Grover search already reaches the quantum
upper bound, and the center projector itself can be selected with a constant
number of hidden-bit queries. The theorem bounds global decision extraction.
It gives no lower bound for QSVT, projector synthesis, data loading, gate
complexity, DLA generation, or a physical implementation.
\end{remark}

\subsection{Structural Dependencies: Balance and Conditioning in the Pauli Basis}\label{sec:structural_dependencies_balance_conditioning}

The conservative sparse-access normalization depends on
$F,s_{adj},s_{int},s_{\mathrm{br}},s_{\mathrm{out}}$. The balance parameter
$\beta$ remains relevant to the separate conditioning analysis but is not used to
claim a sharper block-encoding normalization. A Pauli basis gives $F=O(1)$ and
$s_{int}=O(1)$, yet it does not by itself supply support enumeration, balance, or
conditioning for an arbitrary Pauli-spanned DLA.

This analysis focuses on real compact DLAs, where $\g = \s \oplus \zfrak$ and $r := \dim(\s)$. The input basis for $\g$ is assumed to be a subset of the standard Skew-Hermitian Pauli operators, i.e.\ $\g$ is Pauli-spanned. The restriction buys two facts the analysis below leans on: the structure constants $f_{ijk}$ are real, and the deterministic Pauli product law of Section~\ref{sec:condition_number_barrier} applies.

\begin{table}[ht]
\centering
\caption{Structural parameters and conditioning quantities. The sparse-access
normalization uses explicit row and column sparsities; $\beta$ enters only the
separate conditioning discussion.}
\label{tab:complexity_parameters}
\footnotesize
\begin{tabular}{@{}l p{0.42\textwidth}@{}}
\toprule
Symbol & Role in the algorithmic contract \\
\midrule
$F$ & entry-magnitude bound in every constituent normalization \\
$s_{adj},s_{int}$ & row/column bounds for $A^T$ and $B$ \\
$s_{\mathrm{br}},s_{\mathrm{out}}$ & row/column bounds for $C^T$ in the general construction \\
$\beta$ & records $N_{\max}/\overline N$; does not determine $\kappa(D|_{\s})$ without lower-tail control and is not used to sharpen $\alpha$ \\
$\kappa(K|_{\s})$ & block-scalar ratio $\max_l N_l/\min_l N_l$ \\
$\kappa_{\mathrm{spec}}(A_{\mathrm{gen}})$ & $\kappa_{\mathrm{spec}}(K|_{\s})^{3/2}$ for $\s\ne0$ in the compact real invariant-orthonormal frame \\
$\kappa_{\mathrm{spec}}(A_{\mathrm{cmp}})$ & $\kappa_{\mathrm{spec}}(K)$ for $K\ne0$; equals $\kappa_{\mathrm{spec}}(K|_{\s})$ in an adapted invariant-orthonormal frame \\
\bottomrule
\end{tabular}
\end{table}

\subsubsection{Analysis of Balance $\beta$}\label{sec:analysis_balance_beta}

The balance parameter $\beta$ measures upper concentration of the structure-constant norms across the basis. We record its exact adapted-frame decomposition and its lower-tail limitation.

\begin{theorem}[Structural Dependence of $\beta$]\label{thm:beta_structure}
Let $\g=\zfrak\oplus\bigoplus_l\s_l$ be a compact real Lie algebra with nonzero semisimple part, with $d_l=\dim\s_l$, $r=\sum_l d_l>0$, and $d=\dim\g$. Assume the coefficient basis is invariant-orthonormal and adapted to this splitting, and write $K|_{\s_l}=-N_lI$. Define $N_{\max}=\max_lN_l$ and $N_{\min}=\min_lN_l$. Then
\begin{equation}
\beta=\frac{dN_{\max}}{\sum_l d_lN_l}
=\beta_{\mathfrak{s}}\frac{d}{r},
\qquad
\beta_{\mathfrak{s}}=\frac{rN_{\max}}{\sum_l d_lN_l}.
\end{equation}
In particular, a simple algebra has $\beta=1$.
\end{theorem}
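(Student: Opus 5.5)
The plan is to compute each output norm $N_k$ directly in the adapted invariant-orthonormal basis and then average. I will write $\{e_k\}$ for the basis and use real structure constants $f_{ijk}$ with $[e_i,e_j]=\sum_k f_{ijk}e_k$. Invariance means $\ad(x)$ is antisymmetric in this orthonormal frame, so the structure tensor is totally antisymmetric: $f_{ijk}=-f_{ikj}$.

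The first step handles output labels in $\zfrak$. For $e_k\in\zfrak$, the bracket $[e_i,e_j]$ lies in $[\g,\g]=\s$, and the frame is adapted, so $f_{ijk}=0$ for all $i,j$. Hence $N_k=0$.

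The second step handles output labels in a simple factor. For $e_k\in\s_l$, total antisymmetry gives
\[
N_k=\sum_{i,j}f_{ijk}^2=\sum_{i,j}f_{kij}^2=\sum_{i,j}\bigl((\ad e_k)_{ji}\bigr)^2=\Tr\bigl((\ad e_k)^T\ad e_k\bigr)=-\Tr\bigl((\ad e_k)^2\bigr)=-K_{kk}.
\]
The fourth equality uses the antisymmetry of $\ad e_k$. Since $K|_{\s_l}=-N_lI$, this gives $N_k=N_l$.

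The third step sums the norms. We have $\sum_kN_k=\sum_l d_lN_l$, so $\overline N=\sum_l d_lN_l/d$ and $\max_kN_k=N_{\max}$. The sum is positive because $\s\neq0$, and each $N_l=-\K(e,e)>0$ by negative definiteness on a compact semisimple factor. Dividing gives $\beta=dN_{\max}/\sum_l d_lN_l$, and factoring out $d/r$ gives $\beta_{\s}$. For a simple algebra, $\zfrak=0$, there is a single factor, and $d=r=d_1$, so $\beta=1$.

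The main obstacle is justifying total antisymmetry of $f$ in an invariant-orthonormal frame. I plan to argue it as follows. Orthonormality is with respect to an $\ad$-invariant inner product, which over compact $\g$ can be taken to be $-\K$ on each $\s_l$ (up to scale) and any inner product on $\zfrak$. Then $f_{ijk}=\langle [e_i,e_j],e_k\rangle$, and invariance $\langle[x,y],z\rangle=-\langle y,[x,z]\rangle$ gives $f_{ijk}=-f_{ikj}$. Combined with bracket antisymmetry $f_{ijk}=-f_{jik}$, this makes $f$ totally antisymmetric. I also need to check that the cross-factor brackets vanish, so that the $N_k$ collect only within $\s_l$; the trace computation above already accounts for this automatically.
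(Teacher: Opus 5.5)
Your proof is correct and follows essentially the same route as the paper. The paper obtains $N_k=0$ on $\zfrak$ and $N_k=N_l$ on $\s_l$ by reading off the diagonal of the identity $CC^{T}=-K$ from Lemma~\ref{lem:invariant_reduction}, which is proved from total antisymmetry of the structure constants exactly as you do; your computation $N_k=\Tr\bigl((\ad e_k)^T\ad e_k\bigr)=-K_{kk}$ is that diagonal entry written as a Frobenius norm, and the summation and simple-algebra case then proceed identically.
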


\begin{proof}
The adapted basis has $N_k=0$ on the center and $N_k=N_l$ for every $B_k\in\s_l$, since $CC^T=-K$. Hence $\sum_kN_k=\sum_ld_lN_l$ and $\max_kN_k=N_{\max}$. Substitution in Definition~\ref{def:beta_balance} gives both displayed formulas. For one simple ideal the numerator and denominator coincide.
\end{proof}

\paragraph{Implications for $\beta$ Scaling.}
Imbalance can enter through the distribution of the block scalars $N_l$ or through center domination, where $r\ll d$. The direct-sum identity alone supplies no typicality statement.

\begin{example}[Structural Imbalance $O(d)$ in the Pauli Basis]\label{ex:structural_imbalance_pauli}
Consider the compact DLA $\g = \su(2) \oplus \uu(1)^{\oplus m}$. This algebra can be represented entirely within the Pauli basis. Here, $\s=\su(2)$ and $\zfrak=\uu(1)^{\oplus m}$. The dimensions are $r=3$ and $d=3+m$. The semisimple part $\su(2)$ is intrinsically balanced in the Pauli basis ($\beta_{\mathfrak{s}}=1$).

Applying Theorem~\ref{thm:beta_structure}:
\[
\beta = 1 \cdot \frac{3+m}{3}.
\]
If $m=\Theta(d)$, then $\beta=\Theta(d)$. This is a center-dominated algebra with a fixed interacting core.
\end{example}

\subsubsection{Analysis of Conditioning $\kappa$}\label{sec:condition_number_barrier}

We turn to the condition number of the general composite target $A_{\mathrm{gen}}=C^T K$ for a compact DLA treated as a real Lie algebra. Whenever its real coefficient basis is orthonormal for a positive invariant inner product and adapted to the splitting $\g=\s\oplus\zfrak$, total antisymmetry of the real structure constants forces $CC^{T}=-K$. If $\s\ne0$, then
\[
\kappa(A_{\mathrm{gen}}) = \kappa(K|_{\s})^{3/2}.
\]
For $\s=0$, both matrices vanish and the nonzero-spectrum condition numbers are undefined.
Lemma~\ref{lem:invariant_reduction} of Appendix~\ref{app:invariant_reduction} proves this with no appeal to the Pauli product law or to Schur--Weyl, and Theorem~\ref{thm:obstruction_general} below obtains the right-hand side from the simple-ideal blocks. Complex skew-Hermitian matrix entries cause no problem because their real span is the coefficient algebra. A genuinely complex coefficient basis falls outside this reduction. Transpose is then bilinear, whereas singular values are governed by the Hermitian adjoint. The Pauli basis is a real invariant-orthonormal frame, and there $CC^{T}=-K$ is diagonal and directly computable.

This $3/2$ law does not set the runtime factor of the compact projector algorithm.
The compact algorithm uses $A_{\mathrm{cmp}}=K$. When $\s\ne0$, its target-level
spectral condition number is $\kappa_{\mathrm{spec}}(K)$, as stated in
Corollary~\ref{cor:compact_optimization}. In the Pauli basis assumed in this
section, the basis is adapted to $\s\oplus\zfrak$ and $K$ is diagonal, so
$\kappa_{\mathrm{spec}}(K)=\kappa_{\mathrm{spec}}(K|_{\s})$. The zero-target
branch applies when $\s=0$.

That Pauli specialization is made explicit in Appendix~\ref{app:pauli_proofs}, where diagonality of $CC^{T}$ (Lemma~\ref{lemma:CCT_diagonal}) puts the condition number in the closed form of Theorem~\ref{thm:kappa_simplification}. The reduction's consequence for the dimensions of the simple ideals is the following.

\begin{theorem}[General dimensional obstruction]\label{thm:obstruction_general}
Let $\g$ be a compact real Lie algebra with decomposition $\g=\s\oplus\zfrak$ and nonzero semisimple part $\s=\bigoplus_{l=1}^{m}\s_l$, where $m\ge1$ and the simple ideal $\s_l$ has dimension $d_l$. Let $\{B_k\}$ be a real coefficient basis orthonormal for a positive invariant inner product and adapted to $\g=\bigoplus_l\s_l\oplus\zfrak$. Here $C$, $K$, and $A_{\mathrm{gen}}=C^T K$ are real matrices, and $T$ is Euclidean transpose. If $h^{\vee}_l$ is the dual Coxeter number of $\s_l$ and $\nu_l$ is the squared basic length shared by its generators, then
\[
K|_{\s_l}=-N_l\,I_{d_l},\qquad N_l=2h^{\vee}_l\,\nu_l,
\]
and the nonzero-spectrum conditioning is
\[
\kappa(A_{\mathrm{gen}})=\kappa(K|_{\s})^{3/2}=\Bigl(\frac{\max_l N_l}{\min_l N_l}\Bigr)^{3/2}.
\]
The exponent $3/2$ is fixed for every such real invariant-orthonormal frame. The type and block normalization determine the relation between $N_l$ and $d_l$.
\end{theorem}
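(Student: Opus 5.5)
The plan has three parts: obtain $CC^T=-K$ from total antisymmetry in the invariant-orthonormal frame, compute the singular values of $C^TK$ from it, and identify the block scalars $N_l$ by Schur's lemma plus the dual Coxeter normalization.

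\textbf{Step 1: total antisymmetry.} Let $\langle\cdot,\cdot\rangle$ be the positive invariant inner product for which $\{B_k\}$ is orthonormal. Then
\[
f_{ijk}=\langle [B_i,B_j],B_k\rangle .
\]
Invariance, $\langle [x,y],z\rangle=-\langle y,[x,z]\rangle$, together with antisymmetry of the bracket, makes $f_{ijk}$ totally antisymmetric. Two consequences follow.
\begin{itemize}
\item Each $\ad(B_i)$ is a real antisymmetric matrix.
\item The Killing entries become $K_{ij}=\sum_{k,m}f_{ikm}f_{jmk}=-\sum_{k,m}f_{ikm}f_{jkm}$. Comparing with $(CC^T)_{ij}=\sum_{(a,b)}f_{abi}f_{abj}$ and using total antisymmetry to move indices, this gives $CC^T=-K$.
\end{itemize}

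\textbf{Step 2: block structure of $K$.} The basis is adapted to $\bigoplus_l\s_l\oplus\zfrak$, and ideals of a compact algebra are $K$-orthogonal and $\langle\cdot,\cdot\rangle$-orthogonal. Hence $K$ is block diagonal and vanishes on $\zfrak$. On each simple ideal $\s_l$, both $-K|_{\s_l}$ and the restricted inner product are invariant forms. The adjoint representation of $\s_l$ is real irreducible, so Schur's lemma gives proportionality; note that for compact simple algebras the real commutant is $\mathbb R$. In an orthonormal basis this yields $K|_{\s_l}=-N_lI_{d_l}$ with $N_l>0$.

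To identify $N_l$, I use the standard formula $\K(x,y)=2h^\vee_l\,(x,y)_{\mathrm{norm}}$. Here $(\cdot,\cdot)_{\mathrm{norm}}$ is the invariant form normalized so that long roots have squared length $2$. The given inner product is a positive multiple of this form, and I take $\nu_l$ to be that rescaling factor. This is the step I expect to be the main obstacle. It is purely a convention-matching issue: one must fix the paper's meaning of ``squared basic length'' and the sign coming from the compact real form. I would state it by citing the standard identity rather than rederiving it.

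\textbf{Step 3: singular values of $A_{\mathrm{gen}}$.} Since $K$ is real symmetric,
\[
A_{\mathrm{gen}}^TA_{\mathrm{gen}}=KCC^TK=-K^3 .
\]
On $\s_l$ this equals $N_l^3I$, and on $\zfrak$ it is $0$. So the nonzero singular values of $A_{\mathrm{gen}}$ are exactly $N_l^{3/2}$, while those of $K|_{\s}$ are $N_l$. Taking the ratio of the largest to the smallest gives
\[
\kappa(A_{\mathrm{gen}})=\Bigl(\frac{\max_l N_l}{\min_l N_l}\Bigr)^{3/2}=\kappa(K|_\s)^{3/2}.
\]
The exponent $3/2$ comes only from $CC^T=-K$, so it does not depend on which invariant-orthonormal frame is chosen; only the values $N_l$ depend on the block normalizations.
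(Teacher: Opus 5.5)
Your argument follows the paper's proof step for step. Lemma~\ref{lem:invariant_reduction} derives $CC^{T}=-K$ from total antisymmetry and then $A_{\mathrm{gen}}^{T}A_{\mathrm{gen}}=-K^{3}$. Corollary~\ref{cor:invariant_block_ratio} uses the one-dimensionality of invariant forms on a simple ideal (your Schur argument) to get $K|_{\s_l}=-N_lI_{d_l}$. Proposition~\ref{prop:casimir_dictionary} likewise cites the dual-Coxeter normalization rather than rederiving it. Steps~1 and~3 are correct as written. In Step~2 you apply the intrinsic identity for $\s_l$ to the restriction of $\g$'s Killing form. Add the one-line justification $K_{\g}|_{\s_l}=K_{\s_l}$: for $X,Y\in\s_l$, $\ad_X\ad_Y$ vanishes off $\s_l$. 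The paper records this as Equation~\eqref{eq:K_restriction_app}.

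The one real slip is the orientation of $\nu_l$ in Step~2. As you phrase it, $\langle\cdot,\cdot\rangle=\nu_l\,(\cdot,\cdot)_{\mathrm{norm}}$. Evaluating on an orthonormal generator then gives $(B_k,B_k)_{\mathrm{norm}}=1/\nu_l$, hence $N_l=2h^{\vee}_l/\nu_l$, which is the reciprocal of the claimed formula. The statement defines $\nu_l$ as the squared basic length of the generators, $\nu_l=(B_k,B_k)_l$. So on $\s_l$ the basic form is $\nu_l$ times the inner product, not the reverse.

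Your two sentences also disagree in sign. If $(\cdot,\cdot)_{\mathrm{norm}}$ is positive definite on the compact form, then $K_{\s_l}=-2h^{\vee}_l(\cdot,\cdot)_{\mathrm{norm}}$, which is exactly the convention the paper fixes. With that convention, evaluating $K(B_k,B_k)$ both ways gives $-N_l=-2h^{\vee}_l\nu_l$ immediately. So the step you flagged as the main obstacle is a one-line evaluation.

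A quick check: in the Gell--Mann frame of $\su(M)$ ($M$ the matrix size), $(B_a,B_a)_l=-\mathrm{tr}(i\lambda_a\,i\lambda_a)=2$. This gives $N_l=4M$, matching $K=-4M\,I$ in Theorem~\ref{thm:gell_mann_parameters}. Your orientation would give $\nu_l=1/2$ and $N_l=M$.
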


\begin{proof}
Appendix~\ref{app:invariant_reduction} supplies the three statements for the proof. Lemma~\ref{lem:invariant_reduction} turns invariance of the real inner product into total antisymmetry of the structure constants, hence $CC^{T}=-K$ and $\kappa(A_{\mathrm{gen}})=\kappa(K|_{\s})^{3/2}$. Corollary~\ref{cor:invariant_block_ratio} uses simplicity of each $\s_l$ to collapse $K|_{\s_l}$ to a single scalar $-N_l$, so $\kappa(K|_{\s})=\max_l N_l/\min_l N_l$. Proposition~\ref{prop:casimir_dictionary} supplies $N_l=2h^{\vee}_l\nu_l$ in the chosen normalization.
\end{proof}

The normalizations that recur through the paper now fall out as corollaries, set apart only by how each frame presents its generators; nothing internal to $\g$ distinguishes them.

\begin{corollary}[Pauli frame, type-$A$ family]\label{cor:simple_summand_obstruction}
Suppose each $\s_l$ is Pauli-spanned and isomorphic to $\su(2^{n_l})$, and assume that its Pauli labels are exactly the nonzero vectors of a $2n_l$-dimensional non-degenerate symplectic subspace of $\mathbb{F}_2^{2n}$. Then $d_l=4^{n_l}-1$, and in the Pauli basis
\[
N_k=2(d_l+1)\ \text{ for } k\in\s_l,\qquad K|_{\s_l}=-2(d_l+1)\,I_{d_l},
\]
and the two conditioning factors are
\[
\kappa(D|_{\s})=\sqrt{\tfrac{d_{\max}+1}{d_{\min}+1}}=\Theta\!\bigl(\sqrt{d_{\max}/d_{\min}}\bigr),
\qquad
\kappa(A_{\mathrm{gen}})=\Bigl(\tfrac{d_{\max}+1}{d_{\min}+1}\Bigr)^{3/2}=\Theta\!\bigl((d_{\max}/d_{\min})^{3/2}\bigr).
\]
So $\kappa(A_{\mathrm{gen}})=O(1)$ forces every simple summand to share one dimension up to a constant, and the saturating case $d_{\min}=O(1)$, $d_{\max}=\Theta(d)$, realized by $\s=\su(2)\oplus\su(2^{k})$, attains the Pauli ceiling $\kappa(A_{\mathrm{gen}})=\Theta(d^{3/2})$.
\end{corollary}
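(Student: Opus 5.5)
The plan is to derive this corollary from Theorem~\ref{thm:obstruction_general} by computing the block scalar $N_l$ directly from the symplectic description of the Pauli bracket in Section~\ref{sec:arithmetic_oracles}. The Pauli basis is orthonormal, up to one global scale, for the trace inner product $\langle X,Y\rangle=-\Tr(XY)/N$, which is positive and $\ad$-invariant. Provided the labels of the different summands $\s_l$ and of $\zfrak$ are disjoint, the basis is adapted to $\bigoplus_l\s_l\oplus\zfrak$. Theorem~\ref{thm:obstruction_general} and Lemma~\ref{lem:invariant_reduction} then apply and give $CC^T=-K$, with each $K|_{\s_l}$ a scalar. It therefore suffices to compute $N_k=\sum_{i,j}|f_{ijk}|^2=(CC^T)_{kk}$ for one label $k\in\s_l$.

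For the count, let $W\subset\mathbb F_2^{2n}$ be the $2n_l$-dimensional non-degenerate symplectic subspace whose nonzero vectors label $\s_l$, so $d_l=4^{n_l}-1$. For distinct ideals $[\s_l,\s_{l'}]=0$ and $\zfrak$ is central. Hence every nonzero $f_{ijk}$ with $k\in\s_l$ has $i,j\in\s_l$, and the Killing form restricted to $\s_l$ computed in $\g$ equals the one computed inside $\s_l$. By Lemma~\ref{lemma:pauli_constants_real} and the Pauli bracket rule, $f_{ijk}\in\{\pm2\}$ exactly when $\omega(i,j)=1$ and $i\oplus j=k$, and it vanishes otherwise. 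Fix $k\in W\setminus\{0\}$. For each $i\in W$ with $\omega(i,k)=1$, set $j=i\oplus k$. Then $j\in W$, $j\neq0$ (since $i\ne k$, because $\omega(k,k)=0$), $i\neq0$, and $\omega(i,j)=\omega(i,k)=1$. Conversely, every contributing ordered pair arises this way. Non-degeneracy of $\omega$ on $W$ makes $i\mapsto\omega(i,k)$ a nonzero functional on $W$, so exactly $|W|/2=(d_l+1)/2$ vectors $i$ qualify. Hence
\[
N_k=4\cdot\tfrac{d_l+1}{2}=2(d_l+1),\qquad K|_{\s_l}=-2(d_l+1)I_{d_l}.
\]
This agrees with $N_l=2h^\vee\nu$, since $h^\vee=2^{n_l}$ and $\nu$ is of order $2^{n_l}$, which is a useful check.

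Next, I read $D$ as $(CC^T)^{1/2}=(-K)^{1/2}$ on $\s$, consistent with its use in Table~\ref{tab:complexity_parameters}. Its singular values on $\s_l$ are $\sqrt{2(d_l+1)}$, which gives $\kappa(D|_\s)=\sqrt{(d_{\max}+1)/(d_{\min}+1)}$. Theorem~\ref{thm:obstruction_general} gives $\kappa(A_{\mathrm{gen}})=((d_{\max}+1)/(d_{\min}+1))^{3/2}$. The $\Theta$ forms follow because $(d+1)/d\in[1,4/3]$ for $d\ge3$. Consequently $\kappa(A_{\mathrm{gen}})=O(1)$ forces $d_{\max}/d_{\min}=O(1)$.

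For saturation, take $\s_1=\su(2)$ on the labels of qubit $1$, namely the nonzero vectors of the symplectic plane of that qubit. Take $\s_2=\su(2^k)$ on qubits $2,\dots,k+1$. These symplectic subspaces are orthogonal, so the two summands commute and have disjoint labels. Then $d_{\min}=3$, $d_{\max}=4^k-1=\Theta(d)$, and $\kappa(A_{\mathrm{gen}})=(4^k/4)^{3/2}=\Theta(d^{3/2})$. The main point needing care is the adaptedness and normalization check: the scaled Pauli frame must be invariant-orthonormal, and the extra factor cancels in the ratio. One must also confirm that the block-diagonal structure holds, meaning that no cross-block or center label contributes to $N_k$. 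Both follow from the pairwise commutation of the ideals and the disjointness of their label sets.
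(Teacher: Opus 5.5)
Your proposal is correct and is essentially the paper's own argument. The paper's main-text proof reads off $N_l=2h^{\vee}_l\nu_l$ from the Casimir dictionary, but its proof in Appendix~\ref{app:pauli_proofs} is the same direct anticommuting-pair count, followed by Theorem~\ref{thm:obstruction_general}. The one difference is that you count pairs for a fixed output label $k$ through the nonzero functional $i\mapsto\omega(i,k)$ on $W$, which gives $N_k=2(d_l+1)$ for every $k$ at once; the paper instead sums $f_{ijk}^2$ over all of $\s_l$ and divides by $d_l$, using the block-scalar constancy from Corollary~\ref{cor:invariant_block_ratio}.
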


\begin{proof}
The type is $A_{2^{n_l}-1}$, whose dual Coxeter number is $h^{\vee}_l=2^{n_l}$ (Table~\ref{tab:dual_coxeter}). The skew-Hermitian Pauli generators are the full-rank, unit-operator-norm frame of Remark~\ref{rmk:type_dependent_scaling}, with squared basic-length $\nu_l=2^{n_l}=\sqrt{d_l+1}$, so Proposition~\ref{prop:casimir_dictionary} gives $N_l=2h^{\vee}_l\nu_l=2\cdot 4^{n_l}=2(d_l+1)$; the direct count of anti-commuting Pauli pairs over $\mathbb{F}_2^{2n_l}$ in Appendix~\ref{app:pauli_proofs} reaches the same constant. The conditioning then follows from Theorem~\ref{thm:obstruction_general}. In the Pauli basis $CC^{T}=D^{2}=-K$, so $\kappa(D|_{\s})=\kappa(K|_{\s})^{1/2}$ is the square root of the same ratio.
\end{proof}

\begin{remark}[Isotypic decompositions and scope]\label{rmk:simple_summand_obstruction}
The full centralizer of a collective $SU(2)$ action has simple ideals indexed by spin sectors, but an arbitrary symmetric control algebra can be a proper subalgebra of that centralizer. For $n$ qubits the largest multiplicity is $\Theta(2^n/n)$ at $J=\sqrt n/2+O(1)$. The even-$n$ singlet multiplicity is smaller,
\[
m_0=\frac{1}{n/2+1}\binom{n}{n/2}=\Theta\!\left(\frac{2^n}{n^{3/2}}\right).
\]
Theorem~\ref{thm:su2_symmetric_obstruction} therefore concerns the full centralizer. A DLA specified by a smaller generating set needs its own Levi decomposition before Theorem~\ref{thm:obstruction_general} can be applied.
\end{remark}

The Schur--Weyl multiplicities give a worked structural instance once both the full-centralizer hypothesis and the metric are fixed.

\begin{lemma}[Centralizer structure under collective $SU(2)$]\label{lem:collective_su2_centralizer}
Let $\mathcal H_n=(\C^{2})^{\otimes n}$ carry the diagonal $SU(2)$ action, and write
\[
\mathcal H_n=\bigoplus_{J\in\mathcal J_n}V_J\otimes M_J,
\qquad r_J=\dim V_J=2J+1,
\]
with multiplicity
\[
m_J=\dim M_J=\binom{n}{n/2-J}-\binom{n}{n/2-J-1}.
\]
The skew-Hermitian centralizer in $\uu(2^n)$ is
\[
\mathfrak c_{\uu}=\bigoplus_{J\in\mathcal J_n}I_{V_J}\otimes\uu(m_J),
\]
while its traceless part is
\[
\mathfrak c_{\su}=\left\{\bigoplus_J I_{V_J}\otimes X_J:
X_J\in\uu(m_J),\ \sum_J r_J\Tr(X_J)=0\right\}.
\]
Let $\mathcal I_n=\{J:m_J\ge2\}$ and $q=|\mathcal J_n|=\lfloor n/2\rfloor+1$. Then
\[
[\mathfrak c_{\su},\mathfrak c_{\su}]
=\bigoplus_{J\in\mathcal I_n}I_{V_J}\otimes\su(m_J),
\qquad \dim\mathfrak z(\mathfrak c_{\su})=q-1.
\]
\end{lemma}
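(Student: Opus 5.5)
We plan to use Schur--Weyl duality for the diagonal $SU(2)$ action, together with the double-commutant theorem. By Schur--Weyl duality, $\mathcal H_n$ decomposes as $\bigoplus_J V_J\otimes M_J$. Here $V_J$ is the spin-$J$ irreducible representation, of dimension $r_J=2J+1$, and $M_J$ is the corresponding irreducible representation of the symmetric group $S_n$. The multiplicity $m_J$ follows from the standard ballot-number count. The number of weight-$(n/2-k)$ vectors is $\binom nk$. Each irreducible $V_J$ contributes exactly one weight vector at every weight $M$ with $|M|\le J$. Taking the difference of consecutive weight-space dimensions therefore gives $m_J=\binom{n}{n/2-J}-\binom{n}{n/2-J-1}$. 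The index set $\mathcal J_n$ consists of the $J\in\{n/2,n/2-1,\dots\}$ with $J\ge0$, so $q=\lfloor n/2\rfloor+1$.

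For the centralizer, we argue as follows. An operator commutes with the $SU(2)$ action if and only if it commutes with the $\su(2)$ action, because $SU(2)$ is connected. By Schur's lemma, such an operator acts on each isotypic block $V_J\otimes M_J$ as $I_{V_J}\otimes X_J$. It has no cross-block terms, because the $V_J$ are pairwise inequivalent. Requiring skew-Hermiticity forces $X_J\in\uu(m_J)$, which gives $\mathfrak c_{\uu}$. The trace of $\bigoplus_J I\otimes X_J$ equals $\sum_J r_J\Tr X_J$, so intersecting with $\su(2^n)$ gives $\mathfrak c_{\su}$.

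Next we compute the derived algebra. First, $[\mathfrak c_{\su},\mathfrak c_{\su}]\subseteq[\mathfrak c_{\uu},\mathfrak c_{\uu}]=\bigoplus_J I\otimes[\uu(m_J),\uu(m_J)]=\bigoplus_J I\otimes\su(m_J)$. The summands with $m_J=1$ vanish, which leaves exactly the index set $\mathcal I_n$. For the reverse inclusion, observe that each block $I\otimes\su(m_J)$ is itself contained in $\mathfrak c_{\su}$, since its elements are traceless. For $m_J\ge2$ we have $[\su(m_J),\su(m_J)]=\su(m_J)$, so equality holds.

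The one step requiring care is the center. The center of $\mathfrak c_{\uu}$ is $\bigoplus_J I\otimes i\mathbb R I_{M_J}$, which has dimension $q$. We must check that the center of $\mathfrak c_{\su}$ is exactly $\mathfrak z(\mathfrak c_{\uu})\cap\mathfrak c_{\su}$. To do this, note that $\mathfrak c_{\su}$ is compact, being a subalgebra of $\uu(2^n)$, so it splits as $[\mathfrak c_{\su},\mathfrak c_{\su}]\oplus\zfrak$. Dimension counting then gives $\dim\zfrak=\dim\mathfrak c_{\su}-\sum_{J\in\mathcal I_n}(m_J^2-1)$. Since $\dim\mathfrak c_{\su}=\sum_J m_J^2-1$, and every $J\notin\mathcal I_n$ has $m_J=1$ and contributes $1=m_J^2-0$, we obtain $\dim\zfrak=\sum_J 1-1=q-1$. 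Equivalently, the center is the image of the $q$-dimensional block-scalar space under the nontrivial linear constraint $\sum_J r_J m_J c_J=0$.
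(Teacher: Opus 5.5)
Your proposal is correct and follows essentially the same route as the paper. Both arguments use Schur's lemma on the isotypic decomposition to get $\mathfrak c_{\uu}$, the block trace $r_J\Tr(X_J)$ for the traceless hyperplane, the fact that commutators kill the scalar part of each block ($[\uu(m_J),\uu(m_J)]=\su(m_J)$) for the derived algebra, and weight counting for $m_J$. The one difference is at the center: the paper simply asserts that central elements are block scalars, while you get $\dim\mathfrak z(\mathfrak c_{\su})=q-1$ from the compact splitting and a dimension count, which supplies that implicit step.
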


\begin{theorem}[Full-centralizer conditioning]\label{thm:su2_symmetric_obstruction}
For $n\ge3$, choose a real invariant-orthonormal basis adapted to the
centralizer splitting of Lemma~\ref{lem:collective_su2_centralizer}, and let $\nu_J$ be its squared basic length on
the $J$-th simple ideal. The block scalar is $N_J=2m_J\nu_J$, and
\begin{equation}\label{eq:kappa_su2}
\kappa(A_{\mathrm{gen}})=
\left(\frac{\max_{J\in\mathcal I_n}m_J\nu_J}
{\min_{J\in\mathcal I_n}m_J\nu_J}\right)^{3/2}.
\end{equation}
For $n\le2$ the centralizer has no semisimple ideal, so this nonzero-spectrum condition number is undefined.
\end{theorem}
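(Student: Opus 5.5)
The plan is to reduce the statement to Theorem~\ref{thm:obstruction_general}, using the structural data of Lemma~\ref{lem:collective_su2_centralizer}. Theorem~\ref{thm:obstruction_general} is stated for a compact real Lie algebra, and here the input is $\mathfrak c_{\su}$ viewed as a compact real algebra. By Lemma~\ref{lem:collective_su2_centralizer}, its semisimple part is $[\mathfrak c_{\su},\mathfrak c_{\su}]=\bigoplus_{J\in\mathcal I_n}I_{V_J}\otimes\su(m_J)$, and its center has dimension $q-1$. The simple ideals are therefore indexed by $J\in\mathcal I_n$. Each is isomorphic to $\su(m_J)$ with $m_J\ge2$, so it is simple of type $A_{m_J-1}$.

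The first step is to fix the basis hypotheses. The statement supplies a real basis, orthonormal for a positive invariant inner product and adapted to the splitting. Since the simple ideals are mutually orthogonal for any invariant inner product, this is exactly the frame required by Theorem~\ref{thm:obstruction_general}. One caveat: the adjoint action of $I_{V_J}\otimes X$ on the ideal is isomorphic to the adjoint action of $\su(m_J)$, with the factor $r_J$ entering only the ambient trace, not $\ad$. Consequently the Killing form of the ideal is that of $\su(m_J)$ with no dependence on $r_J$.

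The second step computes the block scalars. Table~\ref{tab:dual_coxeter} gives dual Coxeter number $h^\vee_J=m_J$ for type $A_{m_J-1}$. Proposition~\ref{prop:casimir_dictionary}, invoked inside Theorem~\ref{thm:obstruction_general}, then gives $N_J=2h^\vee_J\nu_J=2m_J\nu_J$, where $\nu_J$ is the squared basic length of the chosen frame on that ideal. Substituting into $\kappa(A_{\mathrm{gen}})=(\max N_J/\min N_J)^{3/2}$ cancels the common factor $2$ and yields Equation~\eqref{eq:kappa_su2}.

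The third step handles the threshold in $n$. For $n\le2$, the multiplicities are $m_J=1$ for all $J$: for $n=2$, $m_1=m_0=1$; for $n=1$, $m_{1/2}=1$. Hence $\mathcal I_n=\emptyset$, the derived algebra vanishes, the centralizer is abelian, and $A_{\mathrm{gen}}=0$, so the nonzero-spectrum condition number is undefined. For $n\ge3$ there is some $J$ with $m_J\ge2$; for example $n=3$ gives $m_{1/2}=2$. Thus $\s\neq0$ and the hypothesis $m\ge1$ of Theorem~\ref{thm:obstruction_general} holds.

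The main obstacle is to justify that the Killing form of $\mathfrak c_{\su}$, restricted to a simple ideal, equals that of the ideal itself. This is standard, since the Killing form of an ideal is the restriction of the ambient Killing form, and the other ideals and the center act trivially on it. It is also needed that the real frame convention matches the one under which Proposition~\ref{prop:casimir_dictionary} reads $\nu_J$. I would check this against the Pauli case of Corollary~\ref{cor:simple_summand_obstruction} as a consistency test.
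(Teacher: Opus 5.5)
Your proposal is correct and is essentially the paper's argument. The paper likewise identifies each nontrivial block $I_{V_J}\otimes\su(m_J)$, $J\in\mathcal I_n$, as type $A_{m_J-1}$ with $h^\vee=m_J$, applies Proposition~\ref{prop:casimir_dictionary} to get $N_J=2m_J\nu_J$, and then uses the block-ratio formula of Corollary~\ref{cor:invariant_block_ratio} (which Theorem~\ref{thm:obstruction_general} packages); the one point to tighten is that $\mathcal I_n\neq\emptyset$ for every $n\ge3$, not just $n=3$, which follows from the last active sector $m_{n/2-1}=\binom n1-\binom n0=n-1\ge2$.
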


Appendix~\ref{app:invariant_reduction} carries the proof.

The metric changes the asymptotic law. Under the blockwise multiplicity-space metric $\langle I\otimes X,I\otimes Y\rangle_{\mathrm{blk}}=-\Tr_{M_J}(XY)/2$, a Gell--Mann frame has $\nu_J=2$ and $N_J=4m_J$. If $m_+=\max_Jm_J$ and $m_-^+=\min_{m_J\ge2}m_J$, then
\[
\kappa_{\mathrm{blk}}(A_{\mathrm{gen}})=\left(\frac{m_+}{m_-^+}\right)^{3/2}
\sim \left(\frac{8}{\pi e}\right)^{3/4}\frac{2^{3n/2}}{n^3}.
\]
Here $m_-^+=n-1$ for $n\ge5$, with $m_-^+=2$ at $n=3,4$, and $m_+\sim\sqrt{8/(\pi e)}\,2^n/n$. Under the ambient metric $\langle X,Y\rangle_{\mathrm{amb}}=-\Tr_{\mathcal H_n}(XY)/2$, the normalized generators have $\nu_J=2/r_J$ and $N_J=4m_J/r_J$. This gives
\[
\kappa_{\mathrm{amb}}(A_{\mathrm{gen}})=
\left(\frac{\max_{J\in\mathcal I_n}m_J/r_J}
{\min_{J\in\mathcal I_n}m_J/r_J}\right)^{3/2}
\sim \left(\frac{8}{\pi}\right)^{3/4}\frac{2^{3n/2}}{n^{9/4}}.
\]
These are algebraic conditioning statements. Neither metric supplies the typed support enumerators required by the algorithm contract.

\subsubsection{Bounds on $\kappa(D|_{\mathfrak{s}})$ and $\kappa(K|_{\mathfrak{s}})$}

Assume $\s\ne0$ throughout this subsection.

In the Pauli frame the reduction of Appendix~\ref{app:pauli_proofs} makes the conditioning concrete. It is fixed by the Killing form $K$ together with the norm distribution matrix $D$ on the semisimple part $\s$, and bounded by the product of the factor condition numbers:
\begin{equation}
\kappa(A_{\mathrm{gen}}) \le \kappa(K|_{\mathfrak{s}}) \kappa(D|_{\mathfrak{s}}).
\end{equation}

\paragraph{Bounds on $\kappa(D|_{\mathfrak{s}})$.}
The condition number of $D$ restricted to $\s$ depends on the full extremal ratio of the nonzero block norms.
\[
\kappa(D|_{\mathfrak{s}}) = \sqrt{\frac{\max_{k \in \mathfrak{s}} N_k}{\min_{k \in \mathfrak{s}} N_k}}.
\]
Thus $N_{\max}/N_{\min}=O(1)$ implies $\kappa(D|_{\s})=O(1)$, but $\beta_{\s}=O(1)$ alone does not. For example, two simple blocks with scalars $1$ and $L$ have $\beta_{\s}<2$ when their dimensions agree, while $\kappa(D|_{\s})=\sqrt L$. A naive worst case admits $N_{\max}=O(d^2)$ and $N_{\min}=O(1)$, which bounds $\kappa(D|_{\mathfrak{s}})$ only by $O(d)$. The Pauli regime is tighter. At fixed $i$ and $k$ the Pauli product law admits at most one $j$ with $B_iB_j\propto B_k$, so $N_k=\sum_{i,j}f_{ijk}^2\le d\cdot 4=4d$. Each $B_k\in\s$ receives a non-zero contribution from at least one pair, so $N_k\ge 4$ on $\s$. Combining the two bounds yields
\[
\kappa(D|_{\mathfrak{s}}) \;\le\; \sqrt{\tfrac{4d}{4}} \;=\; O(\sqrt d) \qquad \text{for any Pauli-spanned $\g$.}
\]
The generic $O(d)$ envelope is therefore not saturable inside the Pauli regime. This $O(\sqrt d)$ tightening is itself sharp: Corollary~\ref{cor:simple_summand_obstruction} attains it on the family $\s=\su(2)\oplus\su(2^{k})$.

\paragraph{Bounds on $\kappa(K|_{\mathfrak{s}})$.}
For simple $\s$, Schur's Lemma forces $K\propto$ Trace form; in the Pauli basis (Trace-form orthonormal) this gives $K|_{\mathfrak{s}}\propto I$ and $\kappa(K|_{\mathfrak{s}})=O(1)$. The general Pauli-spanned case is tighter than continuous bases.

\begin{theorem}[Polynomial Conditioning of Integer Lie Algebras]\label{thm:kappa_poly_bound}
Let $\g$ be a Lie algebra spanned by a subset of the Skew-Hermitian Pauli basis. The Killing form $K$ is an integer matrix with a strictly diagonal structure, and its condition number, taken over the nonzero spectrum, satisfies $\kappa(K|_{\s}) \le d$; exponential ill-conditioning is therefore mathematically excluded.
\end{theorem}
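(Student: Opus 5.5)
The plan is to compute $K$ entrywise from the Pauli product law and show it is diagonal with integer entries. Then we bound the nonzero diagonal entries above by $4d$ and below by $4$.

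\emph{Integrality and diagonal structure.} By Lemma~\ref{lemma:pauli_constants_real} every $f_{ijk}\in\{0,\pm2\}$. Hence $K_{ij}=\sum_{k,m}f_{ikm}f_{jmk}$ is an integer.

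For diagonality, use the label structure of the commutator. We have $f_{ikm}\neq0$ only if $m=i\oplus k$ as $\mathbb F_2^{2n}$ labels. Likewise $f_{jmk}\neq0$ only if $k=j\oplus m$. Together these force $i\oplus k=m=j\oplus k$ in the symplectic labels, so $i=j$. The off-diagonal entries therefore vanish identically, and this holds on any Pauli-spanned subset because restriction only removes terms.

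\emph{Sign and size of the diagonal.} Antisymmetry in a real trace-orthonormal Pauli frame gives total antisymmetry, $f_{jmk}=-f_{jkm}$. So
\[
K_{ii}=-\sum_{k,m}f_{ikm}^2\le0 .
\]
This is the same as $CC^T=-K$ from Lemma~\ref{lem:invariant_reduction}; the direct computation suffices here.

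We then get $|K_{ii}|=\sum_{k,m}f_{ikm}^2=4\,\#\{k:\,B_k\in\g,\ \omega(i,k)=1,\ i\oplus k\in\g\}$. For fixed $i$ and $k$ the output $m$ is unique, so there are at most $d$ admissible $k$. This gives $|K_{ii}|\le 4d$.

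\emph{Lower bound on $\s$.} Because $K$ is diagonal, $K|_{\s}$ is the set of nonzero diagonal entries. By Theorem~\ref{thm:radical_orthogonal_complement} and compactness, the zero diagonal entries are exactly the central basis elements. Any basis element with $K_{ii}\neq0$ has at least one admissible $k$, so $|K_{ii}|\ge4$.

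\emph{Conclusion.} Taking the ratio gives $\kappa(K|_{\s})\le 4d/4=d$. This is polynomial, which excludes exponential ill-conditioning.

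\emph{Main obstacle.} The delicate step is the identification of the nonzero spectrum of $K$ with its nonzero diagonal entries and with $\s$. This needs the basis to be adapted to $\s\oplus\zfrak$. Diagonality supplies this automatically: a basis element either commutes with all of $\g$ and is central, or it has some anticommuting partner inside $\g$. I would therefore state that dichotomy explicitly, using that the $\g$-span is closed under the bracket so that $i\oplus k\in\g$.
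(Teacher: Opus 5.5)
Your proposal is correct and follows the paper's proof: integrality from $f_{ijk}\in\{0,\pm2\}$, diagonality from the unique-product law, every term of a diagonal entry lying in $\{0,4\}$ so that each nonzero $|K_{ii}|$ satisfies $4\le|K_{ii}|\le 4d$, and hence $\kappa(K|_{\s})\le d$. Three of your steps spell out what the paper asserts without comment: the label computation $i\oplus k=m=j\oplus k$, the use of total antisymmetry to obtain $K_{ii}=-\sum_{k,m}f_{ikm}^2$, and the central-versus-anticommuting dichotomy that identifies the nonzero diagonal entries with $\s$.
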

\begin{proof}
The structure constants in the Pauli basis are integers, $f_{ijk} \in \{0, \pm 2\}$. The matrix entries of the Killing form are $K_{ij} = \sum_{k,m} f_{ikm} f_{jmk}$.
Due to the property that the product of two Pauli operators is proportional to a unique third Pauli, $P_i P_k \propto P_m$, it follows that $K_{ij} = 0$ for $i \neq j$. The matrix $K$ is strictly diagonal.
The diagonal entries are given by $K_{ii} = -\sum_{k,m} (f_{ikm})^2$. Since $f_{ikm}$ are integers, $K_{ii}$ must be an integer multiple of 4. Specifically, for any index $i$ in the semisimple component (where $K_{ii} \neq 0$), we have $|K_{ii}| \ge 4$.
This integer spectral gap is critical. Unlike continuous bases where eigenvalues can approach zero arbitrarily closely, the discreteness of the Pauli group enforces a lower bound on the non-zero singular values.
For each fixed $i$ at most $d$ structure constants $f_{ikm}$ are nonzero (one $m$ per anticommuting $k$), each of magnitude $2$, so $\max_i |K_{ii}| \le 4d$.
The condition number is thus strictly bounded:
\[
\kappa(K|_{\s}) = \frac{\sigma_{\max}}{\sigma_{\min}} \le \frac{4d}{4} = d.
\]
Combined with $\kappa(D|_{\mathfrak{s}})\le O(\sqrt d)$ and Theorem~\ref{thm:obstruction_general}, this gives $\kappa(A_{\mathrm{gen}})\le O(d^{3/2})$ in the Pauli regime.
\end{proof}

\subsubsection{The Optimal Case: Well-Conditioning of $\su(2^{n})$}\label{sec:well_conditioning_su2n_section}

We now prove that $\su(2^n)$ is well-conditioned in the Pauli basis.

\begin{theorem}[Well-Conditioning of $\su(2^n)$ in the Pauli Basis]\label{thm:well_conditioning_su2n}
Let $\g = \su(2^n)$ be represented in the standard Skew-Hermitian Pauli basis. The operator $A_{\mathrm{gen}}=C^T K$ is well-conditioned, i.e., its condition number $\kappa=O(1)$.
\end{theorem}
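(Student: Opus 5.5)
Since $\su(2^n)$ is simple, Theorem~\ref{thm:obstruction_general} with $m=1$ already gives $\kappa(A_{\mathrm{gen}})=1$. I will present this reduction and then back it with a direct computation in the Pauli basis, which avoids the appeal to dual Coxeter numbers.

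\textbf{Reduction.} The skew-Hermitian Pauli basis $B_v=iP_v$, $v\in\mathbb F_2^{2n}\setminus\{0\}$, is orthonormal up to a common scalar for the positive invariant form $-\Tr(XY)/2^n$. Rescaling all basis elements by one constant multiplies $C$, $K$ and $A_{\mathrm{gen}}$ by constants and leaves every condition number unchanged, so the hypotheses of Lemma~\ref{lem:invariant_reduction} hold. The algebra is simple, with $\zfrak=0$ and $\s=\g$, so the adapted splitting is trivial.

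\textbf{Direct computation.} By Lemma~\ref{lemma:pauli_constants_real}, $f_{vwu}\in\{0,\pm2\}$, and $f_{vwu}\neq0$ exactly when $\omega(v,w)=1$ and $u=v\oplus w$.
\begin{enumerate}
\item For fixed $v$, the map $E_v$ of Equation~\eqref{eq:pauli_affine_enumerator} shows there are exactly $q=2^{2n-1}$ anticommuting partners $w$. Hence
\[
K_{vv}=-\sum_{w,u}f_{vwu}^2=-4q .
\]
\item Off-diagonal entries vanish. A nonzero term in $K_{vv'}$ requires $u=v\oplus w$ and $w=v'\oplus u$, which forces $v=v'$. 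So $K=-4qI=-2(d+1)I$.
\item For $CC^T$, the entry $(CC^T)_{uu'}=\sum_{v,w}f_{vwu}f_{vwu'}$ needs $u=u'=v\oplus w$, so it is diagonal. Its diagonal entry is $N_u=\sum_{v,w}f_{vwu}^2=4\cdot\#\{v:\omega(v,u)=1\}=4q$. Hence $CC^T=4qI=-K$, as the reduction predicts.
\item Then $A_{\mathrm{gen}}^TA_{\mathrm{gen}}=KCC^TK=(4q)^3I$.
\end{enumerate}
Every singular value of $A_{\mathrm{gen}}$ therefore equals $8q^{3/2}$, which matches the normalization computed after Theorem~\ref{thm:general_complexity_main}. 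So $\kappa(A_{\mathrm{gen}})=1=O(1)$.

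\textbf{Expected obstacle.} The only delicate point is the counting in item 3. The bound $\#\{v:\omega(v,u)=1\}=2^{2n-1}$ needs $\omega$ to be nondegenerate on $\mathbb F_2^{2n}$, so that $\omega(\cdot,u)$ is a nonzero linear functional for every $u\neq0$. One must also check that $v\oplus u=w$ is nonzero, which holds because $\omega(v,u)=1$ forces $v\neq u$.

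The identity component must stay excluded; working in $\su$ rather than $\uu$ does this. If it were included, it would enter $\zfrak$ and lie in the kernel, which does not affect the nonzero-spectrum condition number. The symmetry $\omega(v,w)=\omega(w,v)$ over $\mathbb F_2$ ensures the partner counts for the first and second arguments agree.
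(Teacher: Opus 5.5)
Your proof is correct, but it follows a different route from the paper's. The paper reduces through the Pauli-specific Theorem~\ref{thm:kappa_simplification} to $\kappa((DK)|_{\s})$ and treats the two factors separately. It gets $K\propto I$ because $\su(2^n)$ is simple and its Killing form is proportional to the trace form. It gets $D\propto I$ by appealing to the uniform structure of the Pauli group, with no explicit count.

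You give two independent arguments. The first is the invariant-orthonormal reduction (Lemma~\ref{lem:invariant_reduction} with Corollary~\ref{cor:invariant_block_ratio}, i.e.\ Theorem~\ref{thm:obstruction_general} with $m=1$). It needs only simplicity and an invariant-orthonormal frame, and never introduces $D$. The second computes $K=-4qI$ and $CC^T=4qI$ exactly from the symplectic form. There, constancy of $N_u$ becomes the concrete fact that $\omega(\cdot,u)$ is a nonzero functional for every $u\neq0$.

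What your version buys is a proof of the step the paper leaves qualitative, and the sharper conclusion $\kappa(A_{\mathrm{gen}})=1$ with every singular value equal to $8q^{3/2}$. This matches the Gram identities recorded in Equation~\eqref{eq:full_pauli_gram_guard}. The paper's version is shorter but rests on the $DK$ factorization and that uniformity assertion.

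Two small points. First, with the paper's definition $K_{vv}=\sum_{w,u}f_{vwu}f_{vuw}$, your item 1 silently uses $f_{vuw}=-f_{vwu}$. That identity comes from the total antisymmetry in Lemma~\ref{lem:invariant_reduction}, or from a one-line phase-law check, and should be cited. Second, once $K=-4qI$ is known and that lemma is granted, item 3 follows from $CC^T=-K$. So the counting you flagged as the delicate point is a consistency check rather than a required step. The basis is in fact exactly orthonormal for $-\Tr(XY)/2^n$, so no rescaling is needed.
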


A proof is given in Appendix~\ref{app:pauli_proofs}.

\subsection{Structure-Aware Classical Benchmark}\label{sec:optimized_classical_benchmark}

We count a direct classical calculation under the same ordered basis and encoded structure
tensor. The de~Graaf literature supplies the algebraic method, while the exponents are
proved below~\cite{deGraaf2000,deGraaf1997Levi}. The primary classical output is an
explicit radical basis, so this theorem is not a like-for-like runtime comparison
with the quantum projector block encoding.

\begin{theorem}[Structure-Aware Classical Benchmark]\label{thm:classical_benchmark_main}
Given the explicit $b$-bit structure data in indexed sparse form, a direct
unit-cost field-operation algorithm returns an explicit basis of the radical within
\begin{align}
T_{C,\mathrm{gen}}
&=O(d\,s_{adj}s_{int}+d^4)
&&\text{for a general Lie algebra},\nonumber\\
T_{C,\mathrm{cmp}}
&=O(d\,s_{adj}s_{int}+d^3)
&&\text{for a compact Lie algebra}.
\label{eq:classical_two_route_bounds}
\end{align}
The field-operation statement treats the supplied dyadic real or complex numbers as
exact elements. If all entries share a $b$-bit denominator and are cleared to
integers, fraction-free elimination gives the conservative bit bounds
\begin{align}
\widetilde O\!\bigl(d^5(b+\log d)\bigr)
&\quad\text{for the general case},\nonumber\\
\widetilde O\!\bigl(d^4(b+\log d)\bigr)
&\quad\text{for the compact case},
\label{eq:classical_bit_bounds}
\end{align}
in addition to the sparse construction work. These are exact-arithmetic bounds;
no floating-point rank-stability theorem is asserted.
\end{theorem}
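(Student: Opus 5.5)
The plan is to split the work into a sparse construction phase, which is common to both cases, and a dense linear-algebra phase, which differs between the general and compact cases. Bit costs are then tracked separately at the end.

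\textbf{Building $K$.} I first form $K=A^TB$ from the indexed sparse lists. Each row $i$ of $A^T$ has at most $s_{adj}$ nonzeros $f_{ikm}$. Each such row index $(k,m)$ of $B$ has at most $s_{int}$ nonzeros $f_{jmk}$, one for each admissible $j$. A row-by-row sparse product (Gustavson-style SpMM) therefore costs $O(d\,s_{adj}s_{int})$ multiply--adds. Writing the dense $d\times d$ output adds $O(d^2)$. This accounts for the common term $d\,s_{adj}s_{int}$.

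\textbf{Compact case.} Here Corollary~\ref{cor:compact_optimization} and the discussion preceding it give $\rfrak=\zfrak=\ker K$. Gaussian elimination on the $d\times d$ matrix $K$ returns a kernel basis in $O(d^3)$ field operations.

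\textbf{General case.} Theorem~\ref{thm:radical_orthogonal_complement} and Section~\ref{sec:radical_as_kernel} give $\rfrak=\ker(C^TK)$. I do not form $C^TK$ densely. Instead I first compute a basis of $\ImC=[\g,\g]$ by column-echelon reduction of the $d\times d^2$ matrix $C$. With rank at most $d$, processing each of the $d^2$ columns against at most $d$ pivots costs $O(d)$ per pivot, so the reduction costs $O(d^2\cdot d\cdot d)=O(d^4)$. This is the safe bound quoted in the introduction.

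With a basis matrix $Y\in\mathbb F^{d\times r'}$ of $[\g,\g]$ in hand, I form $Y^TK$ in $O(d^3)$ and solve $Y^TKx=0$ in $O(d^3)$. Since $\ker(C^TK)=\ker(Y^TK)$, the result is $\rfrak$. The total is $O(d\,s_{adj}s_{int}+d^4)$. I expect this rectangular rank step to be the dominant cost. The main point to justify is that elimination on the columns of $C$ correctly yields $\operatorname{Im}C$ even though the transpose is bilinear and no conjugation is involved. This holds because every operation is purely linear over $\mathbb F$.

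\textbf{Bit bounds.} I clear the common $2^{-b}$ denominator, so that the entries of $C$ are $b$-bit integers and those of $K$ have $O(b+\log d)$ bits. I then use Bareiss fraction-free elimination. Its intermediate entries are minors, which by Hadamard's inequality have $\widetilde O(d(b+\log d))$ bits. Multiplying the field-operation counts $d^4$ (general) and $d^3$ (compact) by this bit length, with quasi-linear integer multiplication, gives $\widetilde O(d^5(b+\log d))$ and $\widetilde O(d^4(b+\log d))$.

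For complex inputs, I handle Gaussian-integer entries componentwise at constant overhead. I also note explicitly that this is exact arithmetic only and makes no claim of numerical stability. The Hadamard bound for $Y^TK$ needs a little care, since $Y$ consists of echelon columns of $C$. I will choose $Y$ to be original columns of $C$ (pivot columns), so that its entries stay $b$-bit and the estimate goes through.
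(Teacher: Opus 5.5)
Your proposal is correct and follows the paper's proof essentially step for step. The steps match: the sparse product $K=A^TB$ in $O(d\,s_{adj}s_{int}+d^2)$; an $O(d^4)$ rectangular elimination on $C$ to obtain a basis of $\operatorname{Im}C$, followed by an $O(d^3)$ kernel computation for $Y^TK$, or directly $\ker K$ in the compact case; and Hadamard-bounded minor lengths from fraction-free elimination multiplied into the operation counts. Your choice of original pivot columns of $C$ for $Y$, which keeps $Y^TK$ at $O(b+\log d)$-bit entries, is a small piece of care the paper leaves implicit. It is not actually needed for the general-case bit bound, since $O(d^3)$ operations on $O(d^2(b+\log d))$-bit words still fit within $\widetilde O(d^5(b+\log d))$.
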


The complete operation and bit counts are given in Appendix~\ref{app:classical_benchmark}.

\paragraph{Classical center of an actual Pauli label set.}
Let $S\subseteq\mathbb F_2^{2n}$ be the distinct labels whose Pauli operators form
the given real basis, and let $V=\operatorname{span}_{\mathbb F_2}S$. Closure under
the bracket is assumed. Then
\begin{equation}
\zfrak
=\operatorname{span}_{\mathbb R}
\{B_v:v\in S\cap V^{\perp_\omega}\}.
\label{eq:pauli_center_actual_labels}
\end{equation}
Indeed, for fixed $w\in S$, the output labels $v\oplus w$ are distinct as $v$
varies, so noncommuting terms cannot cancel in
$[\sum_v a_vB_v,B_w]$. Row reduction of the $d\times2n$ label matrix produces a
basis of $V$; testing every actual $v\in S$ against that basis costs
$O(dn^2)$ bit operations, or $O(dn)$ word operations with $\Theta(n)$-bit packed
words, plus the output cost. It is $\widetilde O(d)$ only under an additional
$n=O(\operatorname{polylog}d)$ regime.

The intersection with the actual label set is essential. On two qubits,
$S=\{x_1,x_2,x_1+x_2\}$ labels three commuting Pauli operators, so the real center
has dimension three. The binary space $V$ has dimension two. Thus replacing
Equation~\eqref{eq:pauli_center_actual_labels} by the binary radical
$V\cap V^{\perp_\omega}$ loses the real-basis multiplicity even in this smallest
check.

\subsection{Performance Summary}\label{sec:performance_classes}

Table~\ref{tab:performance} separates the input, output, and cost unit for each
case. This organization keeps the projector-block-encoding, explicit-basis, and
decision tasks distinct.

\begin{table}[H]
\centering
\caption{Inputs, outputs, and distinct counting units for the principal cases.}
\label{tab:performance}
\normalsize
\setlength{\tabcolsep}{3pt}
\renewcommand{\arraystretch}{1.05}
\begin{tabularx}{\textwidth}{@{}
  >{\raggedright\arraybackslash}p{0.12\textwidth}
  >{\raggedright\arraybackslash}p{0.075\textwidth}
  >{\raggedright\arraybackslash}X
  >{\raggedright\arraybackslash}X
  >{\raggedright\arraybackslash}p{0.17\textwidth}@{}}
\toprule
Case & Target & Quantum: output; cost & Classical: output; cost & Status\\
\midrule
General (explicit)
& $C^TK$
& projector block encoding; $O((\alpha_{\mathrm{gen}}/\Delta_{\mathrm{gen}})\log(1/\epsilon))$ oracle calls
& explicit radical basis; $O(d\,s_{adj}s_{int}+d^4)$ field operations
& no ratio claimed\\
Compact (explicit)
& $K$
& projector block encoding; $O((\alpha_{\mathrm{cmp}}/\Delta_{\mathrm{cmp}})\log(1/\epsilon))$ oracle calls
& explicit radical basis; $O(d\,s_{adj}s_{int}+d^3)$ field operations
& no ratio claimed\\
Compact zero-center wrapper
& $P_{\ker K}$ signal
& bounded-error bit; Eq.~\eqref{eq:zero_center_typed_queries}
& decision from the explicit compact calculation above
& extra control contract; no ratio claimed\\
Opaque same-support family
& zero-center decision
& $\Theta(\sqrt{d/R_{\min}})$ typed-value queries
& $\Theta(d/R_{\min})$ randomized typed-value queries
& separate opaque-value model; no free values, setup, or loading\\
Full Pauli algebra
& $K$ or $C^TK$
& reversible support enumeration; $\rho_{\mathrm{BE}}=1$
& implicit label rule; $\su(2^n)$ has zero center
& different task model\\
\bottomrule
\end{tabularx}
\end{table}

The conditioning results below govern $\Delta$ and
$\kappa_{\mathrm{spec}}$. The access-dependent block-encoding slack
$\rho_{\mathrm{BE}}$ and any output conversion remain separate resources, as does
an end-to-end comparison with a classical algorithm.

\section{Physical scope of the structural parameters}\label{sec:physical_typicality}

The structural parameters in this section are defined for a specified control set,
its actual real DLA, and a named coefficient metric. Symmetry classes and
Hamiltonian commutants alone leave those data undetermined. The supplied DLA is
the scope of the result; an ensemble model would be an additional assumption.

\paragraph{Conditions for bounded balance.}

Theorem~\ref{thm:beta_structure} yields $\beta=O(1)$ if both $d/r=O(1)$ and $N_{\max}/N_{\min}=O(1)$. A simple algebra has $\beta=1$, including the Gell--Mann frame of Appendix~\ref{app:gell_mann}. These are sufficient conditions. A center-free direct sum can still be unbalanced when its simple blocks have a widening distribution of $N_l$, while $r=o(d)$ produces the separate center-dominated mechanism. Locality, a short symmetry list, or a Cartan label alone proves neither condition.

\paragraph{Specified control families and symmetry objects.}

The Heisenberg and Hubbard Hamiltonians are sums of Pauli strings, but this fact does not identify their Lie closures. The full Pauli algebra has the support enumerator proved in Section~\ref{sec:arithmetic_oracles}. A smaller Pauli-spanned DLA must provide reversible rank/select oracles for its actual row and column supports. The same boundary applies to a symmetry-adapted basis. An isotypic decomposition gives an ambient centralizer, while a chosen control family can generate a proper subalgebra.

\paragraph{Conditioning across compact types.}
Theorem~\ref{thm:obstruction_general} covers every compact simple type under its real-frame hypotheses. The following examples separate a generated algebra from a symmetry or classification label.

Let $\gamma_1,\ldots,\gamma_{2n}$ be Hermitian Clifford generators. The real Lie closure of the full skew-Hermitian bilinear family $\{\gamma_a\gamma_b:a<b\}$ is $\mathfrak{so}(2n)$; the corresponding Gaussian action reaches all of $SO(2n)$~\cite[Theorem~3]{Jozsa2008}. One fixed quadratic Hamiltonian generates only a one-dimensional DLA. The algebraic extension to $2n+1$ Clifford generators gives $\mathfrak{so}(2n+1)$, but it is not, by itself, a boundary-mode control model. For $n\ge3$ both algebras are simple and have $\kappa(A_{\mathrm{gen}})=1$ in any invariant-orthonormal frame. At $n=2$, $\mathfrak{so}(4)\cong\su(2)\oplus\su(2)$ has that value only when the two block normalizations agree. Jordan--Wigner maps the bilinears to Pauli strings. This representation fact supplies neither the typed support enumerators nor their reversible-gate cost, so no implicit-access exponent is assigned here.

The Altland--Zirnbauer and Kitaev classifications organize Hamiltonian constraints and phases by Cartan labels~\cite{AltlandZirnbauer1997,Kitaev2009}. A control set and its generated DLA are additional data. The compact algebras $\uu(n)$, $\mathfrak{usp}(2n)$, and $\mathfrak{so}(2n)$ may be used as algebraic representatives once the full family of symmetry-allowed quadratic controls and its closure have been stated. In the $\uu(n)$ representative, the $\uu(1)$ number direction lies in the center unless it is removed as a global phase or becomes scalar on a fixed-particle sector. These classification labels leave the support-enumeration interface unspecified.

The full real family of centered quadratic bosonic controls spans the metaplectic generators and has real Lie algebra $\mathfrak{sp}(2n,\mathbb{R})$~\cite[Secs.~5--6, Eqs.~(6.3)--(6.4)]{Arvind1995}; a restricted quadratic family can generate a proper subalgebra. The full algebra is noncompact and admits no positive-definite adjoint-invariant inner product, so Lemma~\ref{lem:invariant_reduction} does not apply. Indefiniteness does not merely reverse the sign of Equation~\eqref{eq:CCT_eq_minusK}; a typed support interface for this family would be an additional construction.

The half-filled Hubbard Hamiltonian on a bipartite lattice has commuting spin and $\eta$-pairing symmetries with algebra $\mathfrak{so}(4)\cong\su(2)\oplus\su(2)$~\cite{YangZhang1990}. This identifies a Hamiltonian commutant; the DLA of the standard Hubbard controls is a separate object. If those six symmetry generators are instead supplied as the input algebra and the two simple blocks use the same invariant normalization, then $N_{\mathrm{spin}}=N_{\eta}$ and $\kappa(A_{\mathrm{gen}})=1$. An access cost would additionally require an explicit generator and oracle construction.

For compact inputs, the projector targets the center inside the supplied DLA. A semisimple algebra has zero center, while $\uu(n)$ retains the particle-number direction only under the convention just stated. A charge that commutes with the dynamics without belonging to the generated algebra lies in the larger commutant, a distinct output. Theorem~\ref{thm:zero_center_decision} can test whether this internal center is zero under its additional promises. Exact dimension, a classical basis, and the full commutant remain separate outputs.

\section{Discussion}

The results separate the algebraic projector, the access model used to realize it,
and the downstream task that consumes its output. This separation identifies the
resources supplied by each theorem and those required by a physical application.

\subsection{Scope and Resource Requirements}\label{sec:limitations}

The projector theorem takes a constructed DLA basis, typed access oracles, and a
certified gap as inputs. It returns a block encoding of the coefficient-space projector.
Gap estimation, coefficient-state preparation, and a physical operator interface
are separate stages in an end-to-end application.

The zero-center wrapper assumes a valid maximally entangled preparation, its
inverse, the required reflections, and controlled oracle access. It returns a
bounded-error bit for the promise $R=0$ or $R\geq R_{\min}$. Taking
$R_{\min}=1$ decides whether the center is zero with $O(\sqrt d)$ outer calls;
this can be exponential in the physical qubit count.

Discovery of $\Delta_{\mathrm{cmp}}$, exact rank or basis recovery, and commutant
testing are distinct tasks. The matching lower bounds apply in the
projector-oracle and opaque-value models stated with the results.

\paragraph{Data access: QRAM versus arithmetic oracles.}

An explicit dense tensor requires $O(d^3)$ setup to build both row and column
indices for the typed oracles. Sparse input can reduce this to its indexed input
size. The full Pauli algebra replaces stored indices by the reversible affine
enumerator of Equation~\eqref{eq:pauli_affine_enumerator}; an arbitrary Pauli
subset needs a corresponding rank/select interface.

\paragraph{Ill-conditioned systems and approximate radical computation.}

For an exact radical, the input promise separates zero from every nonzero singular
value. A user may instead specify a numerical threshold $\delta_{th}>0$ and define
the different target
\begin{equation}
\rfrak_{\text{eff}}(\delta_{th}) = \mathrm{span} \{ v_k : \sigma_k(A_{\mathrm{gen}}) < \delta_{th} \}.
\end{equation}
QSVT then uses the supplied normalized threshold
$\delta_{th}/\alpha$. The threshold is part of the numerical target specified by
the user. Singular-value estimation reports components supported by its input
state; locating the global smallest nonzero singular value additionally requires
rank and overlap promises.

\subsection{Relation to Classical Simulability}\label{sec:simulability_significance}

A recent line of work locates the relevant classical-simulability regime. When the DLA has dimension polynomial in the number of qubits, Lie-algebraic (g-sim) methods simulate the Heisenberg evolution of suitable observables in classical polynomial time~\cite{Goh2025gsim}. Building on such constructions, Cerezo and coworkers argue, case by case, that structural features associated with the absence of barren plateaus often also supply a classical surrogate, making provable trainability and quantum advantage difficult to secure together~\cite{Cerezo2025dichotomy}. Their perspective presumes an initial data-acquisition phase on quantum devices and concerns observable simulation and trainability.

The present procedure returns a coefficient-space radical projector, whereas g-sim
addresses the Heisenberg evolution of selected observables. An end-to-end
comparison would therefore require a common input model and output task. The
conditioning law of Theorem~\ref{thm:obstruction_general} is an algebraic statement
independent of that comparison.

\section{Applications}
For compact DLAs the projector targets the center generated within the algebra. A
quantity that commutes with the dynamics but is not generated by it, such as
$\eta$-pairing in a sparse local model~\cite{Yang1989}, lies in the commutant outside
this center~\cite{MoudgalyaMotrunich2022}. Enlarging the generating set changes the
input algebra and its access interface.

\subsection{Coefficient-Space Use}\label{sec:hybrid_full_decomposition}

Let $U_{P_{\rfrak}}$ be the block encoding returned by the theorem and suppose that
a normalized coefficient state $\ket{h}$ is supplied by a separate preparation
routine. The signal block satisfies
\begin{equation}\label{eq:projector_decomposition}
\left\|
(\bra{0}^{\otimes a_P}\otimes I)U_{P_{\rfrak}}
(\ket{0}^{\otimes a_P}\otimes\ket{h})
-P_{\rfrak}\ket{h}
\right\|\leq\epsilon.
\end{equation}
Writing
$\widetilde P_{\rfrak}=(\bra{0}^{\otimes a_P}\otimes I)
U_{P_{\rfrak}}(\ket{0}^{\otimes a_P}\otimes I)$, the actual postselection
probability is $p_h=\|\widetilde P_{\rfrak}\ket{h}\|^2$. Both
$\widetilde P_{\rfrak}$ and $P_{\rfrak}$ are contractions, so
\[
\bigl|p_h-\|P_{\rfrak}\ket{h}\|^2\bigr|
\leq
(\|\widetilde P_{\rfrak}\|+\|P_{\rfrak}\|)
\|\widetilde P_{\rfrak}-P_{\rfrak}\|
\leq2\epsilon.
\]
The projector block encoding can be combined with a separately supplied preparation
routine and overlap lower bound. Dimension estimation and classical-basis recovery
are additional output tasks. Theorem~\ref{thm:zero_center_decision} assumes a
dedicated preparation of $\ket{\Phi_d}$ and its inverse; a general preparation
routine for $\ket h$ is a separate interface.

Exact classical methods over $\mathbb Q$ can continue from an explicit radical
basis to a Levi complement, and a semisimple algebra can be decomposed into simple
ideals under the source-specific assumptions of the cited algorithms
~\cite{deGraaf1997Levi,deGraaf1997Semisimple}. These methods begin from an explicit
radical basis and use a distinct cost model.

\subsection{Algebraic Splitting of Compact Dynamics}\label{sec:trotter_error_reduction}

For a compact DLA, let $h_z=P_{\zfrak}h$ and
$h_s=(I-P_{\zfrak})h$ in the chosen coefficient basis, and define
$H_z=\sum_j(h_z)_jB_j$ and $H_s=\sum_j(h_s)_jB_j$. Centrality gives
$[H_z,H_s]=0$, hence
\begin{equation}\label{eq:trotter_free}
e^{-itH}=e^{-itH_s}e^{-itH_z}.
\end{equation}
The central evolution need not be trivial.

\begin{remark}[Scope of the algebraic split]\label{rmk:small_center_value}
Equation~\eqref{eq:trotter_free} is algebraic. The projector acts on coefficient
space, while a block encoding of $H$ acts on the physical Hilbert space, so a
sandwich of those unitaries is not type-correct. Implementing $H_z$ or $H_s$
requires separate coefficient PREPARE maps, a SELECT oracle for the basis
operators, an LCU normalization, error propagation, and a simulation cost. Those
interfaces are outside the present algorithmic contract.
\end{remark}

\section{Conclusion}

The construction projects coefficient states onto the radical in two input
regimes. A general real or complex Lie algebra uses $C^TK$, while a compact DLA
uses $K$ itself. Typed row-support, column-support, and value oracles encode a
fixed $b$-bit structure tensor; a certified nonzero-singular-value gap and an
operator-norm tolerance set the filter. The resulting output is an approximate
block encoding of the projector on coefficient space.

For the compact target, normalized projector-rank estimation also yields a
zero-center decision. Under the promise $R=0$ or $R\geq R_{\min}$, the test uses
$O(\sqrt{d/R_{\min}}\log(1/\delta_{\mathrm{fail}}))$ outer projector calls and
returns a bounded-error bit. The postselected exact signal is a normalized
purification of the center projector. Exact rank retains worst-case linear outer
complexity in the fixed-error projector-counting model.

The query factor separates into $\alpha/\Delta$, equivalently
$\rho_{\mathrm{BE}}\kappa_{\mathrm{spec}}$ for a tight supplied gap. The full
Pauli algebra admits reversible support enumeration with
$\rho_{\mathrm{BE}}=1$; a smaller Pauli-spanned family needs its own rank/select
access. After sparse Killing-form construction, the general and compact classical
calculations have respective $O(d^4)$ and $O(d^3)$ rank costs. These measures
describe different outputs and resource units.

Theorem~\ref{thm:opaque_value_separation} complements the explicit-input results
with a compact-Lie family that embeds promised search while preserving the public
structure-constant support and certified metadata. In the opaque-value model, the
family realizes the standard quadratic quantum-versus-randomized-classical query
separation. The Lie-algebraic contribution is the promise-preserving construction;
the search bounds determine the query exponents. Readable tables, setup, loading,
projector synthesis, and physical gates belong to different resource models.

The conditioning law identifies the intrinsic spectral contribution to block-encoded
radical projection. Together with the compact center test, it turns the structural
decomposition of a supplied Lie algebra into a controlled quantum operation and a
falsifiable decision task. Classical-basis recovery and physical simulation can
then be studied as explicit downstream interfaces.

\appendix

\section{Normalization factor derivations}\label{sec:derivation_normalization_factor}

The normalization proof uses both row and column access. Let
$M\in\mathbb F^{R\times C}$ have at most $r_M$ nonzeros in each row, at most
$c_M$ in each column, and entry magnitude at most $F$. Then
\begin{equation}
\|M\|_2
\leq\sqrt{\|M\|_1\|M\|_\infty}
\leq F\sqrt{r_Mc_M}.
\label{eq:sparse_norm_proof}
\end{equation}
Lemma~48 of Gilyén et al.\ realizes this upper bound as the subnormalization when
the row enumerator, column enumerator, and value oracle are supplied
~\cite{Gilyen2019}. A maximum column norm alone is not sufficient.

For $B$, a fixed row $(k,m)$ contains at most $s_{int}$ nonzeros and a fixed
column $j$ at most $s_{adj}$. The transpose factor has the reversed bounds.
For $C^T$, Equation~\eqref{eq:commutator_sparsities} gives the two required
bounds. Substitution in Equation~\eqref{eq:sparse_norm_proof} gives the
constituent scales stated in the main text and hence
Equation~\eqref{eq:target_normalizations}. These implemented scales may
exceed the target spectral norm.

If the constituent block errors are $\eta_B$, $\eta_A$, and $\eta_C$, multiplication
gives
\begin{align}
\eta_{\mathrm{cmp}}
&\leq\alpha_{A^T}\eta_B+\alpha_B\eta_A,\nonumber\\
\eta_{\mathrm{gen}}
&\leq\alpha_{C^T}\alpha_{A^T}\eta_B
 +\alpha_{C^T}\alpha_B\eta_A
 +\alpha_{A^T}\alpha_B\eta_C.
\label{eq:composite_block_errors}
\end{align}
For either target, let $\alpha$ be its composite normalization and let
$D=O((\alpha/\Delta)\log(1/\epsilon))$. The constituent rotations are synthesized
so that the corresponding left-hand side of
Equation~\eqref{eq:composite_block_errors} is at most
$\alpha\epsilon^2/(256D^2)$. The same stability bound then contributes
at most $4D\sqrt{\eta_{\mathrm{tar}}/\alpha}\leq\epsilon/4$. An ideal-filter
error of at most $\epsilon/2$ and phase-synthesis error of at most $\epsilon/4$
complete the stated operator-norm error budget.

For the full Pauli algebra, the affine enumerator gives
$q=2^{2n-1}$ support positions per nonzero label and coefficient magnitude two.
The deterministic product label makes the relevant supports disjoint, so direct
Gram calculations give
\begin{equation}
B^\dagger B=4qI,\qquad
A^T(A^T)^\dagger=4qI,\qquad
(C^T)^\dagger C^T=4qI,\qquad
K=-4qI.
\label{eq:full_pauli_gram_guard}
\end{equation}
Thus each constituent scale is $2\sqrt q$, the compact product scale is $4q$,
and the general product scale is $8q^{3/2}$. The exact full-Pauli Gram calculation
in Equation~\eqref{eq:full_pauli_gram_guard} proves $\rho_{\mathrm{BE}}=1$ for this
named family.

\section{Zero-center proofs and an opaque-value family}
\label{app:zero_center_details}

\subsection{Decision gap, preparation error, and the center state}

The valid-label predicate in Section~\ref{sec:zero_center_decision} is part of
the good reflection. Measuring only the signal ancilla is insufficient. A
unitary may map a valid input into a padded output label while its
valid-to-valid block is zero, which would create a false positive if the padded
label were accepted.

For the decision estimate, let $a_0=R_{\min}/d$ and $x=\sqrt{a_0}$. Under the error
choice in Theorem~\ref{thm:zero_center_decision}, the two success-probability
intervals imply
\[
\theta_0\leq\arcsin\frac{x}{4},
\qquad
\theta_1\geq\arcsin\frac{3x}{4},
\qquad
\theta=\arcsin\sqrt q.
\]
The claimed angle separation follows from
\begin{equation}
\arcsin\frac{3x}{4}-\arcsin\frac{x}{4}\geq\frac{x}{2},
\qquad 0\leq x\leq1.
\label{eq:zero_center_angle_gap}
\end{equation}
To verify it, subtract $x/2$ from the left-hand side and differentiate. With
$u=x^2/16$, the derivative is
\[
\frac14\left[
3(1-9u)^{-1/2}-(1-u)^{-1/2}
\right]-\frac12.
\]
The expression in square brackets equals two at $u=0$ and has derivative
\[
\frac{27}{2}(1-9u)^{-3/2}
-\frac12(1-u)^{-3/2}>0
\]
on $0\leq u\leq1/16$. The difference in
Equation~\eqref{eq:zero_center_angle_gap} is therefore nondecreasing from zero.

\begin{proof}[Proof of Corollary~\ref{cor:zero_center_unitary_prep}]
Put $\widetilde A=\|(B\otimes I)\ket{\widetilde\Phi}\|$ and retain all work
registers in $\ket{\widetilde\Phi}$. Under $H_0$, Equation
\eqref{eq:valid_center_block} gives $\|B\|\leq\eta$, so
\[
\widetilde A\leq\frac{x}{4}.
\]
Under $H_1$, contraction and the preparation error give
\[
\widetilde A
\geq x(1-\eta)-\xi
\geq\frac{5x}{8}.
\]
The two amplitude intervals have a constant multiple of the original gap.
The actual preparation circuit supplies the reflection through its output
state because its exact inverse is available.
\end{proof}

The pure-state condition in this corollary is material. If only a reduced
density matrix within trace distance $\zeta$ of $I_d/d$ is supplied, a binary
probability may shift by $\zeta$. A sufficient probability scale is then
$\zeta=O(a_0)$, together with a state-preparation unitary for a purification, its inverse, and the
corresponding reflection. The theorem does not cover unknown garbage or an
uncontrolled noisy channel.

\begin{proof}[Proof of Proposition~\ref{prop:center_state_output}]
Vectorization of the exact good branch gives
\[
(P\otimes I)\ket{\Phi_d}
=\frac{\operatorname{vec}(P)}{\sqrt d}.
\]
Its squared norm is $R/d$. Normalization proves the stated pure state, and the
two partial traces are $PP^\dagger/R=P/R$ and
$(P^\dagger P)^T/R=P^T/R$.

Write $T(\rho,\sigma)=\tfrac12\|\rho-\sigma\|_1$, and abbreviate
$T(\ket\psi,\ket\phi)=T(\ket\psi\!\bra\psi,\ket\phi\!\bra\phi)$ for pure states.
Put $p=R/d$ and $y_B=(B\otimes I)\ket{\Phi_d}$. If $\eta<\sqrt p$, then
$\|y_B\|\geq\sqrt p-\eta>0$, so $t=\Tr(B^\dagger B)=d\|y_B\|^2>0$.
The normalized good state and its reduced states are
\begin{equation}
\ket{\Psi_B}=\frac{\operatorname{vec}(B)}{\sqrt t},
\qquad
\rho_A^B=\frac{BB^\dagger}{t},
\qquad
\rho_R^B=\frac{(B^\dagger B)^T}{t}.
\label{eq:approximate_center_state}
\end{equation}
The unnormalized exact branch has norm $\sqrt p$. Its distance from $y_B$ is at
most $\eta$. Normalization gives
\[
T(\Psi_B,\Psi_P)
\leq\min\left\{1,\frac{2\eta}{\sqrt p}\right\}.
\]
Partial trace cannot increase this distance. For preparation error
$\xi$ from the stated unitary circuit, put $\widetilde y=(B\otimes I)\ket{\widetilde\Phi}$. If
$\eta+\xi<\sqrt p$, then $\|\widetilde y\|\geq\sqrt p-(\eta+\xi)>0$; define
$\ket{\widetilde\Psi_B}=\widetilde y/\|\widetilde y\|$. The same argument gives
\[
T(\widetilde\Psi_B,\Psi_P)
\leq\min\left\{1,\frac{2(\eta+\xi)}{\sqrt p}\right\}.
\]
Thus a requested output-state error $\zeta_{\mathrm{out}}$ calls for
$\eta+\xi=O(\zeta_{\mathrm{out}}\sqrt{R_{\min}/d})$.
\end{proof}

When the null branch is exactly noiseless and a reliable lower bound
$\lambda>0$ on the target overlap is known, fixed-point search offers the
upper bound
$O(\lambda^{-1/2}\log(1/\delta_{\mathrm{fail}}))$
~\cite[Eq.~(3)]{YoderLowChuang2014}. An approximate projector permits nonzero
null-branch leakage. That fixed-point statement alone does not distinguish the
two intervals used in Theorem~\ref{thm:zero_center_decision}.

\subsection{Rank estimation and the projector-oracle boundary}

For an exact projector, amplitude estimation of $q=R/d$ to additive error
$\varepsilon_p$ takes
\[
O\!\left(\varepsilon_p^{-1}\log(1/\delta_{\mathrm{fail}})\right)
\]
outer calls. The exact integer task behaves differently. The exact-counting
algorithm has
input-dependent expected cost
\[
\Theta\!\left(\sqrt{(R+1)(d-R+1)}\right)
\]
and success probability at least $2/3$
~\cite[Theorem~17]{brassard2000quantum}. This cost is $O(\sqrt d)$ at the two
endpoints and $O(d)$ near the middle. The lower bound of
Nayak and Wu applied to adjacent middle weights is $\Omega(d)$
~\cite[Corollary~1.2]{NayakWu1999}. The fixed-error worst case for exact rank is
therefore $\Theta(d)$.

Lemma~\ref{lem:center_rank_overlap} also gives
\[
\left|q-\frac Rd\right|\leq2\eta.
\]
Exact recovery for every $R$ requires normalized error below half a rank step,
so a safe worst-case choice is $\eta=O(1/d)$. Combining this precision with
Corollary~\ref{cor:compact_optimization} gives the upper bound
\begin{equation}
O\!\left[
\frac{F^2s_{adj}s_{int}}{\Delta_{\mathrm{cmp}}}
\;d\log d\log\frac1{\delta_{\mathrm{fail}}}
\right]
\label{eq:exact_center_rank_typed_upper}
\end{equation}
for typed structure-oracle calls, in addition to state-preparation and gate
costs.

\begin{theorem}[Projector-oracle decision lower bound]
\label{thm:projector_oracle_zero_center_lower_bound}
In a canonical Boolean membership oracle, or a diagonal-projector dilation that
reveals no extra information, fixed-error distinction of rank zero from rank
$R_{\min}$ requires
\[
\Omega\!\left(\sqrt{\frac d{R_{\min}}}\right)
\]
quantum queries.
\end{theorem}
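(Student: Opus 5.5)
The plan is a reduction from the promised-weight search problem. Nayak and Wu's lower bound (and the BBBT approximate-counting lower bound) state that distinguishing $|x|=0$ from $|x|=t$ for $x\in\{0,1\}^n$ requires $\Omega(\sqrt{n/t})$ queries to the standard Boolean oracle $O_x\ket{i,b}=\ket{i,b\oplus x_i}$. Set $n=d$ and $t=R_{\min}$. Fix the basis, and consider the rank-$R$ diagonal projector $P_x=\sum_i x_i\ket{i}\!\bra{i}$, so that $R=|x|$. There are two access models to handle.

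In the canonical Boolean membership oracle, the membership query is literally $O_x$. The lower bound then transfers verbatim: any algorithm deciding $R=0$ versus $R\geq R_{\min}$ with fixed error decides the weight promise.

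For the dilation model, I would take the standard dilation of a diagonal projector. This is the controlled reflection $W_x=\sum_i\ket{i}\!\bra{i}\otimes(x_i\,\mathrm{stuff}+\dots)$. A concrete choice is the phase oracle $\sum_i(-1)^{x_i}\ket{i}\!\bra{i}$ tensored with a control, or equivalently $U=\ket{0}\!\bra{0}\otimes P_x+\dots$ implemented as $\ket{i}\ket{a}\mapsto\ket{i}\ket{a\oplus x_i}$ on one ancilla. The requirement that the dilation ``reveals no extra information'' is formalized to mean that each call to the dilation, its inverse, and its controlled version can be simulated by $O(1)$ calls to $O_x$ and $x$-independent unitaries. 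This is the key step, and I would verify it explicitly. Write the dilation as $\sum_i\ket{i}\!\bra{i}\otimes V_{x_i}$, where $V_0$ and $V_1$ are fixed $x$-independent unitaries. Compute $x_i$ into a fresh qubit with $O_x$, apply $V_{x_i}$ controlled on that qubit, and uncompute with a second $O_x$ call. Inverses and controlled versions follow in the same way.

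Given that simulation, any $T$-query projector-oracle algorithm yields an $O(T)$-query Boolean-oracle algorithm for the weight promise. Hence $T=\Omega(\sqrt{d/R_{\min}})$. I would also note that the state $\ket{\Phi_d}$ and the reflections used by Theorem~\ref{thm:zero_center_decision} are $x$-independent, so they cost nothing.

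The main obstacle is making ``reveals no extra information'' precise enough for the simulation to be valid. An arbitrary block encoding of $P_x$ could carry junk blocks that depend on global statistics of $x$. For example, a junk block could encode $|x|$ directly, and the lower bound would then fail. I would therefore state the model restriction as the hypothesis that every junk block is a fixed function of the single bit $x_i$ within its row, which is exactly the block-diagonal form above. The rest, including the case where the weight bound $t\leq d/2$ is needed, is handled by the cited bound. For $R_{\min}>d/2$ the claim $\Omega(1)$ is trivial.
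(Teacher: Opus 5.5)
Your proposal is correct and follows the paper's proof: take $P_x=\operatorname{diag}(x_1,\ldots,x_d)$, so that rank equals Hamming weight, and invoke Nayak--Wu Corollary~1.2 for the weight-$0$ versus weight-$R_{\min}$ promise. Your explicit two-query simulation of a row-local dilation $\sum_i\ket{i}\!\bra{i}\otimes V_{x_i}$ formalizes the ``reveals no extra information'' clause that the paper leaves informal. One cosmetic slip: $\ket{i}\ket{a}\mapsto\ket{i}\ket{a\oplus x_i}$ block-encodes $I-P_x$ on the $\ket{0}$ flag, which a fixed $X$ on the ancilla repairs.
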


\begin{proof}
Take $P_x=\operatorname{diag}(x_1,\ldots,x_d)$ and promise that the Hamming
weight of $x$ is zero or $R_{\min}$. A zero-center decision distinguishes these
two symmetric partial-function values. Corollary~1.2 of Nayak and Wu gives the
stated bound~\cite{NayakWu1999}.
\end{proof}

At constant error, this theorem determines the outer exponent in the no-leak
projector model. The confidence logarithm, explicit-Lie-input costs, and projector
synthesis are separate questions.

\subsection{The common-support compact Lie construction}
\label{app:opaque_value_family}

Start with the adapted six-dimensional algebras
\[
\mathfrak h_0=\su(2)\oplus\su(2),
\qquad
\mathfrak h_1=\su(2)\oplus\uu(1)^{\oplus3}.
\]
Each $\su(2)$ triple obeys $[e_a,e_b]=\epsilon_{abc}e_c$. Use the same
integer change of basis for both algebras,
\begin{equation}
B_i=\sum_{a=1}^{6}S_{ai}e_a,\qquad
S=
\begin{pmatrix}
-1&-3& 1& 2& 4& 3\\
 2& 1& 4& 1& 1& 0\\
 3& 3& 2&-2&-3&-3\\
 2& 2& 1&-1&-2&-2\\
 1& 0& 2& 0& 1& 0\\
 1& 1& 1& 0&-1&-1
\end{pmatrix},
\qquad \det S=1.
\label{eq:same_support_basis_change}
\end{equation}
If $f^{(\chi)}$ is the tensor in the adapted basis, the transformed constants
are
\begin{equation}
g^{(\chi)}_{ijk}
=\sum_{a,b,c}
S_{ai}S_{bj}f^{(\chi)}_{abc}(S^{-1})_{kc},
\qquad \chi\in\{0,1\}.
\label{eq:same_support_tensor_transform}
\end{equation}

\begin{lemma}[Certificates for the two blocks]
\label{lem:same_support_compact_blocks}
The tensors in Equation~\eqref{eq:same_support_tensor_transform} have the
following exact properties.
\begin{enumerate}
    \item Both satisfy Jacobi and
    $g^{(\chi)}_{ijk}\ne0$ exactly when $i\ne j$. Their common ordered
    nonzero count is $180$.
    \item Their center dimensions are zero and three, respectively.
    \item The maximum entry magnitudes are $37$ and $44$. The common padded
    sparsities are
    \[
    (s_{adj},s_{int},s_{\mathrm{br}},s_{\mathrm{out}})
    =(30,5,6,30).
    \]
    \item One fixed one-based table entry obeys
    $g^{(0)}_{121}=7$ and $g^{(1)}_{121}=3$.
    \item The two Killing matrices have nonzero singular-value gaps above
    $1/32$.
\end{enumerate}
\end{lemma}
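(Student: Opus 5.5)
The lemma is a finite exact computation, so I will prove it by certified rational arithmetic, organised so that each item follows from a structural fact rather than a raw table wherever possible.

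\textbf{Structural items (1 and 2).} Equation~\eqref{eq:same_support_tensor_transform} is the transport of the bracket of $\mathfrak h_\chi$ along the invertible linear map $e\mapsto B$. Hence $g^{(\chi)}$ is the structure tensor of an isomorphic copy of $\mathfrak h_\chi$. Jacobi and antisymmetry are inherited with no computation. The center dimension is basis-independent, so it is $0$ for $\su(2)\oplus\su(2)$ and $3$ for $\su(2)\oplus\uu(1)^{\oplus3}$. Since $\det S=1$, $S^{-1}$ is an integer matrix. Every $g^{(\chi)}_{ijk}$ is therefore an integer, and all remaining items reduce to exact integer arithmetic.

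\textbf{Support and sparsity (items 1 and 3).} The tensors can be written in closed form. Write $s_i=(S_{1i},S_{2i},S_{3i})$ and $t_i=(S_{4i},S_{5i},S_{6i})$ for the two $\su(2)$ components of $B_i$. Then
\[
[B_i,B_j]=(s_i\times s_j,\; t_i\times t_j)
\]
for $\chi=0$, and
\[
[B_i,B_j]=(s_i\times s_j,\;0)
\]
for $\chi=1$. I then apply $S^{-1}$ to these vectors to get the output coordinates $k$.

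I will tabulate the $15$ unordered pairs $i<j$ for both $\chi$ and check three things:
\begin{itemize}
\item all six output coordinates are nonzero in both tables;
\item the diagonal $i=j$ vanishes by antisymmetry;
\item the nonzero count is $6\cdot5\cdot6=180$.
\end{itemize}
The padded sparsities are then immediate from full off-diagonal support:
\begin{itemize}
\item $s_{adj}=|\{(j,k):j\ne i\}|=30$;
\item $s_{int}=|\{i:i\ne k'\}|=5$, where $k'$ is the second argument;
\item $s_{\mathrm{br}}=6$;
\item $s_{\mathrm{out}}=30$.
\end{itemize}
The maxima $37$ and $44$, and the entry $g_{121}$ (one-based $i=1$, $j=2$, $k=1$), are read off the same tables.

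\textbf{Killing gap (item 5).} I will exploit basis covariance. The Killing form is an invariant bilinear form, so in the $B$-basis $K^{(\chi)}=S^{T}K_{\mathrm{ad}}^{(\chi)}S$. Here $K_{\mathrm{ad}}^{(0)}=-2I_6$ and $K_{\mathrm{ad}}^{(1)}=\operatorname{diag}(-2,-2,-2,0,0,0)$ for $\epsilon$-normalised $\su(2)$ triples. This gives
\[
K^{(0)}=-2S^TS,\qquad K^{(1)}=-2S_1^TS_1,
\]
where $S_1$ is the top $3\times6$ block of $S$. The nonzero singular values are therefore $2\sigma_{\min}(S)^2$ and $2\sigma_{\min}(S_1)^2$ respectively.

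I need both quantities to be at least $1/32$, which amounts to lower bounds on the smallest eigenvalues of the integer Gram matrices $S^TS$ and $S_1S_1^T$. The $3\times3$ case is easy: compute the characteristic polynomial exactly and bound its smallest root by Sturm sequences or interval bisection.

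\textbf{Main obstacle.} The hard step is the $6\times6$ Gram matrix $S^TS$. It has determinant $1$ but possibly large entries, so its smallest eigenvalue could be small. I will first try the bound
\[
\lambda_{\min}\ge \frac{\det}{\lambda_{\max}^{5}},
\]
with $\lambda_{\max}$ controlled by the trace. If that is too crude, I will certify $S^TS-\tfrac{1}{64}I\succ0$ by an exact $LDL^T$ factorisation over $\mathbb Q$ and check that every pivot is positive. This needs no floating point and completes item 5.
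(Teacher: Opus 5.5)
Your proposal is correct. Items 1--4 follow the paper's route: exact integer substitution into the transformed tensor, with the centers read off the adapted direct sums. Your transport-of-structure remark for Jacobi and antisymmetry is a slightly cleaner justification than the paper's ``integer substitution gives \ldots\ Jacobi.'' Your reduction $K^{(0)}=-2S^TS$, $K^{(1)}=-2S_1^TS_1$ is exactly the paper's.

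The only divergence is how you certify the eigenvalue lower bounds in item 5. The paper uses two one-line inequalities.
\begin{itemize}
\item For $K^{(0)}$: because $\det S=1$, $S^{-1}$ is integral, which you already observed. Then $\sigma_{\min}(S)^{-2}=\|S^{-1}\|_2^2\le\|S^{-1}\|_F^2=58$, giving $\sigma_{\min}(K^{(0)})\ge 1/29$.
\item For $K^{(1)}$: $\lambda_{\min}(AA^T)\ge\det(AA^T)/\Tr(AA^T)^2=1245/107^2$, giving $2490/11449$.
\end{itemize}

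Your Sturm-sequence and exact $LDL^T$ certificates are also rigorous. They will succeed, since $\lambda_{\min}(S^TS)\ge 1/58>1/64$, but they cost considerably more exact arithmetic.

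Your first attempt for the $6\times 6$ case cannot work. With $\det(S^TS)=1$ and $\Tr(S^TS)=136$, the bound $\det/\lambda_{\max}^5$ is of order $10^{-11}$, so the fallback is genuinely needed. The natural replacement is the inverse-Frobenius bound, which uses exactly the integrality of $S^{-1}$ you had already noted.
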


\begin{proof}
Integer substitution in Equation~\eqref{eq:same_support_tensor_transform}
gives the support, Jacobi identities, entry bounds, and the fixed entry. The
centers follow from the two adapted direct sums because the basis change is
invertible.

For the gap certificate, let $A$ be the first three rows of $S$. The Killing
matrices are
\[
K_0=-2S^TS,\qquad K_1=-2A^TA.
\]
Exact elimination gives
\[
\|S^{-1}\|_F^2=58,\qquad
\det(AA^T)=1245,\qquad
\Tr(AA^T)=107.
\]
It follows that
\begin{equation}
\sigma_{\min}(K_0)\geq\frac1{29},
\qquad
\sigma_{\min}^{+}(K_1)\geq\frac{2490}{11449}.
\label{eq:same_support_gap_certificates}
\end{equation}
Both lower bounds exceed $1/32$. Direct counting of the common support gives
the four padded sparsities.
\end{proof}

\begin{proof}[Detailed query reduction for
Theorem~\ref{thm:opaque_value_separation}]
Let the hidden bit $x_j$ choose the block in position $j$. All support oracles
are independent of $x$. A typed-value query can compute its block index,
query $x_j$, select one of the two fixed $90$-entry integer tables, and
uncompute the work registers with constant Boolean-query overhead. Conversely,
one query to entry $(1,2,1)$ of block $j$ distinguishes the values seven and
three, so it returns $x_j$ with constant typed-query overhead. The two oracle
models are query-equivalent up to constants.

Under the promise $|x|=0$ or $|x|=t$, the lower bound
$\Omega(\sqrt{N/t})$ follows from Corollary~1.2 of
Nayak and Wu~\cite{NayakWu1999}. Direct Grover search on the fixed entry gives
$O(\sqrt{N/t})$, proving the quantum statement in
Equation~\eqref{eq:opaque_value_zero_center_separation}.

For a classical algorithm, uniformly sampling $O(N/t)$ distinct blocks gives
the upper bound. For the lower bound, use a distribution that chooses the
all-zero input with probability one half and a uniform $t$-subset with
probability one half. After $k$ distinct queries, the no-hit probability on
the second branch is
\[
\frac{\binom{N-t}{k}}{\binom Nk}.
\]
It remains bounded below by a positive constant when
$k\leq cN/t$ for a sufficiently small constant $c$. On a no-hit transcript,
the two branches are identical. Yao's principle therefore gives
$\Omega(N/t)$ randomized queries.
\end{proof}

The construction has only $O(d)$ local numerical data. If that table is
explicitly supplied or if setup may scan it without charge, a classical
routine can read the fixed entry in each block and the query separation no
longer describes the full cost. Direct Grover search is already optimal for
this family, and a block-index-controlled projector block encoding can be implemented with a
constant number of hidden-bit queries. These facts keep the decision result
separate from QSVT necessity and projector-synthesis complexity.

\section{Pauli-basis structural proofs}\label{app:pauli_proofs}

In the Pauli basis, the commutator Gram matrix is diagonal and the conditioning analysis of Section~\ref{sec:condition_number_barrier} takes a closed form. This specialization follows from the invariant-orthonormal real-frame reduction of Appendix~\ref{app:invariant_reduction}.

\begin{lemma}[Diagonality of $CC^T$ in the Pauli Basis]\label{lemma:CCT_diagonal}
For any DLA $\g$ represented in the standard Skew-Hermitian Pauli basis (where $f_{ijk}$ are real), the Gram matrix $CC^T$ is diagonal: $C C^T = \text{diag}(N_k)$.
\end{lemma}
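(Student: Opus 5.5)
The plan is to compute $(CC^T)_{kl}=\sum_{i,j}f_{ijk}f_{ijl}$ directly and show it vanishes unless $k=l$. The only input needed is the deterministic Pauli product law already used in Lemma~\ref{lemma:pauli_constants_real} and in Section~\ref{sec:arithmetic_oracles}: for labels $v,w$, the commutator $[B_v,B_w]$ is zero when $\omega(v,w)=0$, and otherwise equals $\pm 2B_{v\oplus w}$. Consequently, for each ordered pair $(i,j)$ the column $C_{\cdot,(i,j)}$ has at most one nonzero entry, located at the row whose label is $v_i\oplus v_j$.

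The argument proceeds in three steps.

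\begin{enumerate}
\item Because the DLA is Pauli-spanned, closure under the bracket guarantees that the output label $v_i\oplus v_j$ of any noncommuting basis pair is itself in the basis. Hence each column of $C$ is either zero or a multiple of a single standard unit vector $e_{k(i,j)}$, with coefficient $f_{ijk(i,j)}\in\{\pm2\}$. The basis consists of distinct Pauli labels, so the index $k(i,j)$ is well defined.
\item For $k\neq l$, every summand $f_{ijk}f_{ijl}$ vanishes, because no column has nonzero entries in two different rows. Therefore $(CC^T)_{kl}=0$.
\item On the diagonal, $(CC^T)_{kk}=\sum_{i,j}f_{ijk}^2=N_k$, which matches Definition~\ref{def:beta_balance}. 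Since the $f_{ijk}$ are real, transpose agrees with adjoint here, and $CC^T=\operatorname{diag}(N_k)$ is genuinely the Gram matrix.
\end{enumerate}

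The main obstacle is step~1. It requires that products of distinct Pauli labels never produce a linear combination of several basis elements within the chosen basis, which holds only because the basis is a subset of the Pauli group modulo phases rather than an arbitrary real basis. I would state this explicitly, noting that $P_vP_w=i^{\phi(v,w)}P_{v\oplus w}$ is a single Pauli up to phase. Cancellation between different pairs $(i,j)$ is not a concern, since the claim is about individual columns rather than sums over columns.

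As an optional cross-check, the same diagonal form follows from Lemma~\ref{lem:invariant_reduction}: the Pauli basis is orthonormal for the trace form, which gives $CC^T=-K$, and $K$ is diagonal by Theorem~\ref{thm:kappa_poly_bound}. The direct column-support argument above is self-contained, however, and is the one I would present.
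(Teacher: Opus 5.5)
Your proposal is correct and follows essentially the same route as the paper: both use the fact that each commutator $[B_i,B_j]$ of Pauli basis elements is zero or a multiple of a single basis element, so no column of $C$ has nonzero entries in two rows, the off-diagonal Gram entries vanish, and the diagonal entries are $N_k$. Your remark that bracket closure plus linear independence of distinct Paulis puts $v_i\oplus v_j$ in the basis makes explicit a step the paper leaves implicit. Your optional cross-check via $CC^T=-K$ matches the paper's own remark in Appendix~\ref{app:invariant_reduction}.
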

\begin{proof}
The elements of the Gram matrix are $(CC^T)_{k,k'} = \sum_{i,j} f_{ijk} f_{ijk'}$.

If this term is non-zero, there must exist a pair $(i,j)$ such that $f_{ijk} \neq 0$ AND $f_{ijk'} \neq 0$. This implies $[B_i, B_j]$ has components along both $B_k$ and $B_{k'}$.

The Pauli basis possesses a structure inherited from the underlying Clifford algebra and the associated finite Pauli group~\cite{Gottesman1998}. The commutator of two Pauli operators $[B_i, B_j]$ is proportional to a unique third Pauli operator, or is zero. Due to this algebraic determinism, the result of the commutator is unique. Therefore, we must have $k=k'$.

The diagonal elements are $(CC^T)_{k,k} = \sum_{i,j} f_{ijk}^2 = N_k$.
\end{proof}

In the Pauli basis, this diagonality reduces the condition-number analysis to the weighted Killing form.

\begin{theorem}[Simplification of $\kappa(A_{\mathrm{gen}})$ in the Pauli Basis]\label{thm:kappa_simplification}
Let $D = \text{diag}(\sqrt{N_k})$ be the diagonal matrix of the square roots of the row norms. If the DLA $\g$ is represented in the standard Skew-Hermitian Pauli basis and has nonzero semisimple part $\mathfrak{s}$, the relevant condition number of $A_{\mathrm{gen}}$ is determined by the operator restricted to $\mathfrak{s}$. Specifically:
\begin{equation}
\kappa(A_{\mathrm{gen}}) = \kappa((DK)|_{\mathfrak{s}}).
\end{equation}
\end{theorem}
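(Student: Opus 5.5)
The plan is to compute the singular values of $A_{\mathrm{gen}}=C^TK$ through the Gram matrix $A_{\mathrm{gen}}^TA_{\mathrm{gen}}=K^TCC^TK=KCC^TK$, using that $K$ is real symmetric. Lemma~\ref{lemma:CCT_diagonal} gives $CC^T=\operatorname{diag}(N_k)=D^2$, so
\[
A_{\mathrm{gen}}^TA_{\mathrm{gen}}=KD^2K=(DK)^T(DK).
\]
Hence $A_{\mathrm{gen}}$ and $DK$ have the same right singular vectors and the same singular values, including multiplicities.

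The next step is to show that the whole nonzero spectrum lives on $\s$. Theorem~\ref{thm:kappa_poly_bound} already shows that $K$ is diagonal in the Pauli basis, with $K_{kk}=-\sum_{i,m}f_{kim}^2$. We want the basis indices to split into a central set, where $K_{kk}=0$, and a complementary set spanning $\s$. To see this, use total antisymmetry of the real Pauli structure constants, which holds because the trace form $-\Tr(XY)$ is invariant and the Pauli basis is orthonormal for it. This gives $N_k=\sum_{i,j}f_{ijk}^2=\sum_{i,m}f_{kim}^2=-K_{kk}$, so $D^2=-K$.

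An index with $K_{kk}=0$ has $f_{k\cdot\cdot}=0$, so $B_k$ is central. Conversely, a central $B_k$ has $K_{kk}=0$. The central basis labels therefore span $\zfrak$, which matches Equation~\eqref{eq:pauli_center_actual_labels}. The remaining labels span a $K$-nondegenerate complement. Since $\ker K=\zfrak$ for compact $\g$ and $K$ is diagonal, this complement is the $K$-orthogonal complement of $\zfrak$, namely $\s=[\g,\g]$.

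On the central coordinates both $D$ and $K$ vanish. So $DK=\operatorname{diag}(\sqrt{N_k}K_{kk})$ is zero there and invertible and diagonal on $\s$. Its nonzero singular values are exactly those of $(DK)|_{\s}$, and the nonzero-spectrum condition numbers coincide:
\[
\kappa(A_{\mathrm{gen}})=\kappa((DK)|_{\s}).
\]
As a consistency check, $D^2=-K$ gives singular values $N_k^{3/2}$ on $\s$. This reproduces $\kappa(K|_{\s})^{3/2}$ from Theorem~\ref{thm:obstruction_general}.

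The main obstacle is justifying $D^2=-K$, equivalently the clean split between center and $\s$ along basis labels. I would handle it via total antisymmetry of the $f_{ijk}$ in a trace-orthonormal real basis, as in Lemma~\ref{lem:invariant_reduction}. A direct route is also available: $f_{ijk}=\tfrac{-1}{2^{n}}\cdot(\text{normalization})\Tr(B_k[B_i,B_j])$ is cyclically invariant. Everything else is bookkeeping with diagonal matrices.
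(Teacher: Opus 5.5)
Your proposal is correct and follows the paper's proof. Both use the Gram identity $A_{\mathrm{gen}}^TA_{\mathrm{gen}}=KD^2K=(DK)^T(DK)$ from Lemma~\ref{lemma:CCT_diagonal} and then note that $DK$ vanishes on the central labels; your $D^2=-K$ argument supplies the label-wise split into $\zfrak$ and $\s$, which the paper's proof takes for granted.

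One wording fix is needed. Since $\zfrak=\ker K$, its $K$-orthogonal complement is all of $\g$, so that sentence does not identify the span of the non-central labels with $\s$. Justify the identification instead through $N_k=-K_{kk}>0$: each such $B_k$ is then a bracket output and hence lies in $[\g,\g]=\s$. Alternatively, use orthogonality for the invariant trace form.
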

\begin{proof}
We analyze the singular values of $A_{\mathrm{gen}}$ via its Gram matrix $G$. Since the Pauli basis guarantees $f_{ijk} \in \mathbb{R}$, the Gram matrix is $G = A_{\mathrm{gen}}^T A_{\mathrm{gen}} = K C C^T K$. By Lemma~\ref{lemma:CCT_diagonal}, $C C^T = \text{diag}(N_k) = D^2$. Thus, $G = K D^2 K = (DK)^T (DK)$.
For central elements $B_k \in \zfrak$, $N_k=0$, implying $D_{kk}=0$. Hence $DK$ vanishes on the center. The non-zero singular values of $A_{\mathrm{gen}}$ are therefore the singular values of $DK$ restricted to the semisimple component $\s$, where $D$ is strictly positive definite.
\end{proof}

\begin{proof}[Proof of Theorem~\ref{thm:well_conditioning_su2n}]

Theorem~\ref{thm:kappa_simplification} reduces the claim to $\kappa(DK)$. Analyze $K$ and $D$ separately.

\textbf{Killing form $K$.} Here $\g=\su(N)$ with $N=2^n$ is a simple Lie algebra, so the Killing form is proportional to the Trace form. The Pauli basis is orthonormal under the Trace form, so $K\propto I$ and $\kappa(K)=O(1)$.

\textbf{Norm matrix $D=\text{diag}(\sqrt{N_k})$.} The squared norm $N_k=\sum_{i,j}|f_{ijk}|^2$ counts the structure-constant weight that lands on $B_k$ from commutators of basis pairs. The Pauli basis inherits the uniform structure of the Clifford algebra and its associated finite group~\cite{Gottesman1998, Appleby2005}, so $N_k$ is independent of $k$ provided $B_k\ne iI$, which holds on the basis of $\su(2^n)$. Hence $N_k=\overline N$ for all $k$, $D=\sqrt{\overline N}\,I$, and $\kappa(D)=O(1)$.

Both factors are scalar, so $DK\propto I$ and $\kappa(A_{\mathrm{gen}})=O(1)$.
\end{proof}

\begin{proof}[Proof of Corollary~\ref{cor:simple_summand_obstruction} (direct Pauli count)]

Fix a Pauli basis of $\g$ adapted to the direct-sum decomposition $\g=\s_1\oplus\cdots\oplus\s_m\oplus\zfrak$. The hypothesis that each $\s_l$ is itself Pauli-spanned makes such a refinement possible. Distinct simple ideals commute, $[\s_l,\s_{l'}]=0$ for $l\ne l'$, and $[\zfrak,\g]=0$. The defining relation $[B_i,B_j]=\sum_k f_{ijk}\,B_k$ forces $f_{ijk}=0$ whenever the indices $i,j,k$ do not all lie in a single simple summand. For $k\in\s_l$,
\begin{equation}\label{eq:Nk_internal_app}
N_k \;=\; \sum_{i,j} f_{ijk}^{\,2} \;=\; \sum_{i,j\in\s_l} f_{ijk}^{\,2}.
\end{equation}
Only pairs internal to $\s_l$ contribute.

The skew-Hermitian Pauli basis is orthonormal for the inner product $\langle X,Y\rangle=-2^{-n}\,\mathrm{tr}_{\mathbb{C}^{2^n}}(XY)$, since $\mathrm{tr}_{\mathbb{C}^{2^n}}(P_jP_k)=2^{n}\delta_{jk}$, and the form is invariant by cyclicity of the trace. Adapted as above to the splitting, the basis meets every hypothesis of Corollary~\ref{cor:invariant_block_ratio}: on each simple summand the Killing matrix is block-scalar, $K|_{\s_l}=-N_l\,I_{d_l}$ with $N_l>0$. Theorem~\ref{thm:kappa_poly_bound} keeps the whole matrix strictly diagonal. The identity \eqref{eq:CCT_eq_minusK} evaluates the diagonal as $K_{kk}=-(CC^{T})_{kk}=-N_{k}$, so $N_k=N_l$ for every $k\in\s_l$.

A direct combinatorial count fixes the numerical value. By hypothesis the Pauli basis of $\s_l$ corresponds, via symplectic indices, to all non-zero vectors of a $2n_l$-dimensional non-degenerate symplectic subspace of $\mathbb{F}_2^{2n}$, so the non-identity Paulis spanning $\s_l$ number exactly $d_l=4^{n_l}-1$, and two basis elements anti-commute iff their symplectic inner product equals $1$. Fix $B_i\in\s_l$. Among the $d_l-1$ other non-identity Paulis of $\s_l$, the number with symplectic inner product $1$ against $B_i$ equals $2^{2n_l-1}=(d_l+1)/2$; the number with inner product $0$ equals $2^{2n_l-1}-1=(d_l-1)/2$, where the off-by-one reflects the fact that $B_i$ itself trivially commutes with $B_i$. The count of anti-commuting partners is therefore $(d_l+1)/2$. Each anti-commuting ordered pair $(B_i,B_j)$ produces a unique $B_k$ with $|f_{ijk}|=2$ by the Pauli product law and the normalization $f_{ijk}\in\{0,\pm 2\}$. Summing,
\[
\sum_{i,j,k\in\s_l} f_{ijk}^{\,2} \;=\; d_l\cdot\frac{d_l+1}{2}\cdot 4 \;=\; 2\,d_l(d_l+1).
\]
This equals $\sum_{k\in\s_l} N_k=d_l\cdot N_k$ by the constancy of $N_k$ on $\s_l$ just established. Division yields
\begin{equation}\label{eq:Nk_explicit_app}
N_k \;=\; 2(d_l+1) \qquad \text{for every } k\in\s_l.
\end{equation}
Substituting into the block-scalar form gives $K|_{\s_l}=-2(d_l+1)\,I_{d_l}$.

Substituting $N_l=2(d_l+1)$ into Theorem~\ref{thm:obstruction_general} gives $\kappa(A_{\mathrm{gen}})=\bigl((d_{\max}+1)/(d_{\min}+1)\bigr)^{3/2}$, with $\kappa(D|_{\s})$ the square root of the same ratio; the saturating case $d_{\min}=O(1)$, $d_{\max}=\Theta(d)$, realised by $\s=\su(2)\oplus\su(2^k)$, attains the Pauli ceiling $\kappa(A_{\mathrm{gen}})=\Theta(d^{3/2})$.
\end{proof}

\section{Invariant-orthonormal real-frame reduction of the conditioning}\label{app:invariant_reduction}

Lemma~\ref{lemma:CCT_diagonal} derives the diagonality of $CC^{T}$ from the Pauli product law, and Theorem~\ref{thm:kappa_simplification} then trades $\kappa(A_{\mathrm{gen}})$ for $\kappa((DK)|_{\s})$. The diagonal form is incidental. The reduction below uses one identity between the commutator Gram matrix and the Killing matrix in a real basis orthonormal for a positive invariant inner product and adapted to $\g=\s\oplus\zfrak$.

\begin{lemma}[Invariant reduction of $\kappa(A_{\mathrm{gen}})$]\label{lem:invariant_reduction}
Let $\g$ be a compact real Lie algebra with $\g=\s\oplus\zfrak$, where $\s=[\g,\g]$ is semisimple and $\zfrak$ is the center. Let $\langle\cdot,\cdot\rangle$ be a positive real invariant inner product, so that $\langle[Z,X],Y\rangle=-\langle X,[Z,Y]\rangle$ for all $X,Y,Z\in\g$, and let $\{B_k\}_{k=1}^{d}$ be a real coefficient basis orthonormal for it and adapted to $\g=\s\oplus\zfrak$. Then the real structure constants are totally antisymmetric, and the real commutator matrix $C$ and Killing matrix $K$ satisfy
\begin{equation}\label{eq:CCT_eq_minusK}
CC^{T}=-K,
\end{equation}
while the Euclidean Gram matrix of $A_{\mathrm{gen}}=C^{T}K$ is $G=A_{\mathrm{gen}}^{T}A_{\mathrm{gen}}=-K^{3}$. If $\s\ne0$, the nonzero singular values of $A_{\mathrm{gen}}$ are $|\mu|^{3/2}$ as $\mu$ ranges over the eigenvalues of $K|_{\s}$, so
\begin{equation}\label{eq:kappa_three_halves}
\kappa(A_{\mathrm{gen}})=\kappa(K|_{\s})^{3/2}.
\end{equation}
If $\s=0$, then $C=K=A_{\mathrm{gen}}=0$ and this nonzero-spectrum condition number is undefined.
\end{lemma}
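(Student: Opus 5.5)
The plan is to work entirely in the given orthonormal basis and translate invariance into a statement about the structure constants. Since $\langle B_k,B_l\rangle=\delta_{kl}$, we have $f_{ijk}=\langle[B_i,B_j],B_k\rangle$. Antisymmetry of the bracket gives $f_{ijk}=-f_{jik}$. Applying invariance with $Z=B_i$, $X=B_j$, $Y=B_k$ gives $f_{ijk}=-\langle B_j,[B_i,B_k]\rangle=-f_{ikj}$. Together these two antisymmetries generate all of $S_3$ up to sign, so $f$ is totally antisymmetric; in particular $f_{ijk}$ changes sign under any transposition and is invariant under cyclic shifts.

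Next I will compute both matrices from Equation~\eqref{eq:killing_form_matrix_elements} and the definition of $C$. First, $(CC^T)_{kk'}=\sum_{i,j}f_{ijk}f_{ijk'}$. Second, $K_{kk'}=\sum_{i,m}f_{kim}f_{k'mi}$. Apply total antisymmetry: $f_{kim}=f_{imk}$ (cyclic) and $f_{k'mi}=f_{mik'}=-f_{imk'}$. Then $K_{kk'}=-\sum_{i,m}f_{imk}f_{imk'}=-(CC^T)_{kk'}$, which is Equation~\eqref{eq:CCT_eq_minusK}. All entries are real, so Euclidean transpose agrees with adjoint. The Gram matrix is then $G=A_{\mathrm{gen}}^TA_{\mathrm{gen}}=K^TCC^TK=K(-K)K=-K^3$, using $K^T=K$.

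For the spectrum, $K$ is real symmetric and negative semidefinite, since $-K=CC^T\succeq0$. Adaptedness gives the block form $K=K|_{\s}\oplus0$: central basis vectors have $\ad=0$, so their rows and columns of $K$ vanish, and $K(x,z)=0$ for $z\in\zfrak$. On $\s$, $K|_{\s}$ is nondegenerate by Cartan's criterion and negative definite by compactness. It is diagonalizable with eigenvalues $\mu<0$, and $-K^3$ has eigenvalues $-\mu^3=|\mu|^3$ on $\s$ and $0$ on $\zfrak$. The nonzero singular values of $A_{\mathrm{gen}}$ are the square roots of the nonzero eigenvalues of $G$, namely $|\mu|^{3/2}$. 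Taking the ratio of the largest to the smallest gives $\kappa(A_{\mathrm{gen}})=(\max|\mu|/\min|\mu|)^{3/2}=\kappa(K|_{\s})^{3/2}$, since the singular values of the symmetric matrix $K|_{\s}$ are $|\mu|$. If $\s=0$, the algebra is abelian, so $f\equiv0$ and $C=K=A_{\mathrm{gen}}=0$.

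The main obstacle is making sure that the index conventions in $K_{ij}=\sum f_{ikm}f_{jmk}$ and $C_{k,(i,j)}=f_{ijk}$ line up, so that the sign comes out as $-1$ rather than $+1$. I will check this against the paper's full Pauli case, which is consistent: there $CC^T=4qI$ and $K=-4qI$. A secondary point is nondegeneracy of $K|_{\s}$, which gives a strictly positive minimum. I will take this from Cartan's criterion for semisimple $\s$, and from the direct-sum fact that the Killing form of $\g$ restricts to that of $\s$, because $\ad$ acts trivially on $\zfrak$.
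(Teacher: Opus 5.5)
Your proposal is correct and follows essentially the same route as the paper: orthonormality gives $f_{ijk}=\langle[B_i,B_j],B_k\rangle$, and invariance together with bracket antisymmetry gives total antisymmetry. A cyclic shift and one transposition then turn the Killing sum into $-(CC^{T})_{kk'}$, and $G=K(CC^{T})K=-K^{3}$ is diagonalized on $\s$ with eigenvalues $|\mu|^{3}$. The only difference is cosmetic: you obtain negative definiteness of $K|_{\s}$ from $-K=CC^{T}\succeq0$ plus Cartan nondegeneracy, whereas the paper cites negative definiteness of the Killing form of a compact semisimple algebra directly.
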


\begin{unnumberedremark}[Coefficient-frame scope]
The lemma permits complex skew-Hermitian matrix representations of $\g$,
provided the coefficient algebra and basis are real. It does not extend to an
arbitrary complex coefficient frame by replacing the Euclidean transpose with
the Hermitian adjoint.
\end{unnumberedremark}

\begin{proof}
Orthonormality identifies the structure constants with inner products. From $[B_i,B_j]=\sum_k f_{ijk}B_k$ and $\langle B_k,B_{k'}\rangle=\delta_{kk'}$,
\[
f_{ijk}=\langle[B_i,B_j],B_k\rangle.
\]
Antisymmetry under $i\leftrightarrow j$ is the antisymmetry of the bracket. The remaining transposition follows from invariance with $Z=B_i$, which reads $\langle[B_i,B_j],B_k\rangle=-\langle B_j,[B_i,B_k]\rangle=-f_{ikj}$, hence $f_{ijk}=-f_{ikj}$. Adjacent transpositions generate the symmetric group on three letters, so $f$ is totally antisymmetric.

That symmetry is the entire mechanism. Write the commutator Gram matrix entrywise, then move the third index to the front through the cyclic shift $f_{ijk}=f_{kij}$, an even permutation that carries no sign:
\[
(CC^{T})_{kk'}=\sum_{i,j}f_{ijk}\,f_{ijk'}=\sum_{i,j}f_{kij}\,f_{k'ij}.
\]
The Killing matrix of Section~\ref{sec:killing_form_factorization} is $K_{kk'}=\sum_{i,j}f_{kij}f_{k'ji}$. Reverse the last two indices of the second factor, $f_{k'ji}=-f_{k'ij}$, and the two sums agree up to a sign:
\[
K_{kk'}=-\sum_{i,j}f_{kij}\,f_{k'ij}=-(CC^{T})_{kk'}.
\]
This proves \eqref{eq:CCT_eq_minusK}. On the center the identity is the trivial $0=0$, since a central $B_k$ is never the image of a bracket, forcing $f_{ijk}=0$ there, and $K$ annihilates $\zfrak$.

The reduction is now pure algebra. With real matrices and $K^{T}=K$,
\[
G=A_{\mathrm{gen}}^{T}A_{\mathrm{gen}}=(C^{T}K)^{T}(C^{T}K)=K\,(CC^{T})\,K=-K^{3}.
\]
When $\s\ne0$, diagonalize the real symmetric matrix $K|_{\s}$ by the spectral theorem. Its eigenvalues $\mu$ are real and strictly negative, because the Killing form of a compact semisimple algebra is negative definite. The matching eigenvalue of $G$ is $-\mu^{3}=|\mu|^{3}>0$, so $G|_{\s}$ is positive definite and the singular values of $A_{\mathrm{gen}}$ on $\s$ are $|\mu|^{3/2}$. The center contributes only zeros. Dividing the largest by the smallest,
\[
\kappa(A_{\mathrm{gen}})=\frac{\max_{\mu}|\mu|^{3/2}}{\min_{\mu}|\mu|^{3/2}}=\Bigl(\frac{\max_{\mu}|\mu|}{\min_{\mu}|\mu|}\Bigr)^{3/2}=\kappa(K|_{\s})^{3/2},
\]
the last step using that the singular values of the symmetric matrix $K|_{\s}$ are the moduli $|\mu|$.
\end{proof}

The Pauli statement of Section~\ref{sec:condition_number_barrier} is one corner of this lemma. In the Pauli basis $K$ is diagonal by Theorem~\ref{thm:kappa_poly_bound}, so \eqref{eq:CCT_eq_minusK} forces $CC^{T}$ diagonal as well, which recovers Lemma~\ref{lemma:CCT_diagonal} with $N_k=-K_{kk}$; the substitution $D^{2}=CC^{T}=-K$ then collapses $\kappa((DK)|_{\s})$ to the same $\kappa(K|_{\s})^{3/2}$. Neither integrality nor the Clifford group entered the argument.

\begin{corollary}[Block-scalar form and the dimensional ratio]\label{cor:invariant_block_ratio}
Suppose further that $\s=\bigoplus_{l=1}^{m}\s_l$ with $m\ge1$, each $\s_l$ simple, and that the basis is adapted to $\bigoplus_l\s_l$. Then on the orthonormal basis $K|_{\s_l}=-N_l\,I_{d_l}$ for a positive scalar $N_l$, and
\[
\kappa(A_{\mathrm{gen}})=\Bigl(\frac{\max_l N_l}{\min_l N_l}\Bigr)^{3/2}.
\]
\end{corollary}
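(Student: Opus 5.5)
The plan is to prove the block-scalar claim by Schur's lemma on each simple ideal, then feed it into Lemma~\ref{lem:invariant_reduction}. Fix $l$ and restrict to $\s_l$. Since the basis is adapted to $\bigoplus_l\s_l\oplus\zfrak$ and distinct ideals commute, $f_{ijk}=0$ unless $i,j,k$ lie in one summand. Hence $\ad(x)$ for $x\in\s_l$ preserves $\s_l$ and kills the other summands. Two consequences follow:
\[
\K(x,y)=\Tr_{\s_l}(\ad x\,\ad y)\quad\text{for } x,y\in\s_l,
\qquad
\K(\s_l,\s_{l'})=0\quad\text{for } l\ne l'.
\]
The second identity holds because $\ad x\,\ad y$ then vanishes on every summand. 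So $K$ is block diagonal with blocks $K|_{\s_l}$, and it vanishes on $\zfrak$.

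Next I show that each block is a scalar. Both the Killing form restricted to $\s_l$ and the inner product $\langle\cdot,\cdot\rangle$ restricted to $\s_l$ are symmetric invariant bilinear forms on the simple real algebra $\s_l$. The inner product is nondegenerate, so I define $T\in\operatorname{End}(\s_l)$ by $\K(x,y)=\langle Tx,y\rangle$.
\begin{itemize}
\item Invariance of both forms gives $T\circ\ad z=\ad z\circ T$ for all $z\in\s_l$, so $T$ is an intertwiner of the adjoint representation.
\item Symmetry of both forms makes $T$ self-adjoint for $\langle\cdot,\cdot\rangle$, so $T$ has a real eigenvalue $\lambda$.
\item $\ker(T-\lambda)$ is a nonzero ideal of the simple algebra $\s_l$, so it is all of $\s_l$, and $T=\lambda I$.
\end{itemize}
This avoids the subtlety that the real Schur commutant of a simple real algebra may be $\mathbb C$: self-adjointness forces a real eigenvalue, and the eigenspace is an ideal. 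Because the basis is orthonormal, the matrix of $K|_{\s_l}$ equals the matrix of $T$, namely $\lambda_l I_{d_l}$. Compactness makes the Killing form of the semisimple $\s_l$ negative definite, so $\lambda_l<0$. Write $N_l=-\lambda_l>0$.

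Finally, $K|_{\s}=\operatorname{diag}(-N_1I_{d_1},\dots,-N_mI_{d_m})$ has eigenvalue moduli exactly $\{N_l\}$. This gives
\[
\kappa(K|_{\s})=\frac{\max_l N_l}{\min_l N_l},
\]
and equation~\eqref{eq:kappa_three_halves} of Lemma~\ref{lem:invariant_reduction} gives the stated $3/2$ power. The main obstacle is the real Schur step, which the self-adjointness-plus-ideal argument above handles. The other care point is checking that the adapted basis really makes $K$ block diagonal, which the vanishing of cross-summand structure constants settles.
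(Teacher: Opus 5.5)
Your proof is correct and follows the paper's route. Vanishing cross-summand brackets make $K|_{\s}$ block diagonal, proportionality of the two invariant forms on each simple ideal together with negative definiteness gives $K|_{\s_l}=-N_lI_{d_l}$, and Lemma~\ref{lem:invariant_reduction} supplies the $3/2$ power.

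The only difference is in the proportionality step. You prove it directly: the self-adjoint intertwiner has a real eigenvalue, and its eigenspace is an ideal of $\s_l$. The paper instead cites one-dimensionality of invariant symmetric forms on a simple algebra, which holds here by compactness but fails for some real simple algebras; your argument makes the real-field justification explicit through positivity of the inner product.
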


\begin{proof}
Pick $X,Y\in\s_l$. The adjoint $\ad_X$ respects the decomposition and acts as zero on every $\s_{l'}$ with $l'\ne l$, since $[X,\s_{l'}]\subseteq[\s_l,\s_{l'}]=0$. On the center it also vanishes. The same is true of $\ad_Y$. The composition $\ad_X\ad_Y$ therefore vanishes outside $\s_l$, and the trace decomposes:
\begin{equation}\label{eq:K_restriction_app}
K_\g(X,Y) \;=\; \mathrm{tr}_{\g}\!\bigl(\ad_X\,\ad_Y\bigr) \;=\; \mathrm{tr}_{\s_l}\!\bigl(\ad_X|_{\s_l}\,\ad_Y|_{\s_l}\bigr) \;=\; K_{\s_l}(X,Y).
\end{equation}
Restricting $K_\g$ to a simple summand reproduces that summand's intrinsic Killing form.

A simple algebra carries a one-dimensional space of invariant symmetric bilinear forms. The restriction $\langle\cdot,\cdot\rangle|_{\s_l}$ and the Killing form $K_{\s_l}$ are two such forms, so $K_{\s_l}=-N_l\,\langle\cdot,\cdot\rangle|_{\s_l}$ for a single scalar. Negative definiteness on the compact algebra fixes $N_l>0$. The restricted Killing matrix in the orthonormal basis is then $-N_l I_{d_l}$, with eigenvalues the constant $-N_l$. Cross terms between distinct summands vanish as well. For $B_i\in\s_l$ and $B_j\in\s_{l'}$ with $l'\ne l$, the image of $\ad_{B_j}$ lies in $\s_{l'}$, where $\ad_{B_i}$ acts as zero, so $K_\g(B_i,B_j)=\mathrm{tr}_{\g}\!\bigl(\ad_{B_i}\,\ad_{B_j}\bigr)=0$ and $K|_{\s}$ is block-diagonal. Hence $\kappa(K|_{\s})=\max_l N_l/\min_l N_l$, and \eqref{eq:kappa_three_halves} yields the stated ratio.
\end{proof}

\begin{remark}[Frame boundary and blockwise conformal scaling]\label{rmk:orthogonality_necessary}
The theorem covers real invariant-orthonormal frames. An arbitrary change of basis need not preserve total antisymmetry of the raised constants or Identity~\eqref{eq:CCT_eq_minusK}; in a nonorthogonal frame the full inverse Gram matrix enters the raised index. There is a useful exception. Starting from an invariant-orthonormal basis $\{E_{l,a}\}$, set $B_{l,a}=a_lE_{l,a}$ with one nonzero real constant $a_l$ on each simple ideal. Both $CC^{T}|_{\s_l}$ and $-K|_{\s_l}$ then acquire the factor $a_l^2$, so the identity survives. Changes confined to the center are also invisible because $C$ and $K$ vanish there. Generic shears within a nonabelian ideal can break the identity, with no universal lower bound on the residual.
\end{remark}

The scalar $N_l$ in Corollary~\ref{cor:invariant_block_ratio} is determined by the type of $\s_l$, through its dual Coxeter number, and by the normalization isolated in Remark~\ref{rmk:orthogonality_necessary}. The identification holds for every compact simple type and determines the dimensional scaling of the obstruction.

\begin{proposition}[Casimir dictionary for the block scalar]\label{prop:casimir_dictionary}
Let $\s_l$ be a compact simple ideal of Cartan type $X_{r}$, and let $h^{\vee}_l$ be its dual Coxeter number. Fix the positive-definite invariant form $(\cdot,\cdot)_l$ on $\s_l$ by
\[
K_{\s_l}=-2h^{\vee}_l\,(\cdot,\cdot)_l,
\]
the basic normalization in which the highest long root has squared length two and the adjoint quadratic Casimir equals $2h^{\vee}_l$~\cite[Sec.~9.2, Eqs.~(9.16)--(9.17)]{Ginsparg1988}. Then the block scalar of Corollary~\ref{cor:invariant_block_ratio} is
\begin{equation}\label{eq:N_l_casimir}
N_l=2h^{\vee}_l\,\nu_l,\qquad \nu_l:=(B_k,B_k)_l ,
\end{equation}
where $\nu_l$ is the squared basic length shared by the orthonormal generators of the block. The first factor is fixed by the type. The second records the relative normalization between the chosen invariant metric and the basic form, and it equals one exactly for a basic-orthonormal frame.
\end{proposition}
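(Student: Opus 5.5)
The plan is to compare two invariant symmetric bilinear forms on the simple ideal $\s_l$ and read off the proportionality constant on the chosen orthonormal frame. By Corollary~\ref{cor:invariant_block_ratio}, $K|_{\s_l}$ is already known to equal $-N_l I_{d_l}$ in the invariant-orthonormal basis. Two facts drive this: the Killing form of $\g$ restricts to the intrinsic Killing form $K_{\s_l}$, by Equation~\eqref{eq:K_restriction_app}, and the space of invariant symmetric forms on a simple real Lie algebra is one-dimensional. What remains is to identify $N_l$ in terms of the basic form $(\cdot,\cdot)_l$.

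First, fix the basic normalization. On the complexification $\s_l\otimes\C$, choose the invariant form in which the highest long root $\theta$ satisfies $(\theta,\theta)=2$. Then apply the standard identity: the eigenvalue of the quadratic Casimir on the adjoint representation is $(\theta,\theta+2\rho)=2h^{\vee}_l$. Taking the trace of the Casimir $\sum_a \ad_{X_a}\ad_{X^a}$ over the adjoint representation gives $K_{\s_l}=-2h^{\vee}_l(\cdot,\cdot)_l$, where the sign comes from passing to the compact real form with skew-Hermitian conventions. This is the content cited from \cite{Ginsparg1988}, and I would invoke it rather than rederive it.

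Second, relate the frame to the basic form. The chosen invariant inner product $\langle\cdot,\cdot\rangle$ is, on $\s_l$, another positive invariant form. It is therefore a positive multiple of $(\cdot,\cdot)_l$, say $(\cdot,\cdot)_l=\nu_l\langle\cdot,\cdot\rangle$ on $\s_l$. In particular $(B_k,B_k)_l=\nu_l\langle B_k,B_k\rangle=\nu_l$ for every orthonormal generator $B_k\in\s_l$, so this value is shared across the block. Combining the two relations,
\[
K(B_k,B_{k'})=-2h^{\vee}_l\,(B_k,B_{k'})_l=-2h^{\vee}_l\,\nu_l\,\delta_{kk'},
\]
which yields $N_l=2h^{\vee}_l\nu_l$. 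The frame is basic-orthonormal exactly when $\nu_l=1$.

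The main obstacle is the first step: getting the sign and the factor of two right when moving between the complex Casimir convention and the compact real form. I would check the calibration on $\su(2)$, where $h^{\vee}=2$. With $[e_a,e_b]=\epsilon_{abc}e_c$, the Killing form is $K=-2I$. In the basic form, $e_a=-\tfrac{i}{2}\sigma_a$ has squared length $\tfrac12$, so $N=2\cdot2\cdot\tfrac12=2$, which is consistent. I would also cross-check against the direct Pauli count $N_l=2(d_l+1)$ for $\su(2^{n_l})$.
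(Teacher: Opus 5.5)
Your proposal is correct and follows the paper's proof. Both arguments use the one-dimensionality of invariant forms on the compact simple ideal to make $\langle\cdot,\cdot\rangle|_{\s_l}$, $K_{\s_l}$ and $(\cdot,\cdot)_l$ pairwise proportional, then evaluate on an orthonormal generator $B_k$ to obtain $N_l=2h^{\vee}_l\nu_l$, with $\nu_l=(B_k,B_k)_l$ the same for every $k$ in the block. One difference in emphasis: the proposition \emph{defines} $(\cdot,\cdot)_l$ by $K_{\s_l}=-2h^{\vee}_l(\cdot,\cdot)_l$, so your Casimir calibration (which the paper simply cites to justify the name ``basic normalization'') and your $\su(2)$ and Pauli cross-checks are consistency tests, not steps the proof needs.
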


\begin{proof}
By the one-dimensional space of invariant forms used for Corollary~\ref{cor:invariant_block_ratio}, the chosen inner product $\langle\cdot,\cdot\rangle|_{\s_l}$, the Killing form $K_{\s_l}$, and the basic form $(\cdot,\cdot)_l$ are pairwise proportional. That corollary records the first proportionality as $K_{\s_l}=-N_l\,\langle\cdot,\cdot\rangle|_{\s_l}$; the displayed normalization records the second as $K_{\s_l}=-2h^{\vee}_l\,(\cdot,\cdot)_l$. Evaluate both on an orthonormal generator $B_k$ of the block. The Killing values coincide, so $N_l\langle B_k,B_k\rangle=2h^{\vee}_l(B_k,B_k)_l$, and orthonormality $\langle B_k,B_k\rangle=1$ gives \eqref{eq:N_l_casimir}. Positivity of $(\cdot,\cdot)_l$ on the compact form forces $\nu_l>0$, and proportionality of the two forms makes $(B_k,B_k)_l$ independent of which $k\in\s_l$ is taken.
\end{proof}

The dual Coxeter number is tabulated for every simple type. Table~\ref{tab:dual_coxeter} records the values in the same highest-long-root normalization used above~\cite[Eq.~(9.17)]{Ginsparg1988}. The dimension $d_l=\dim\s_l$ is included so that Equation~\eqref{eq:N_l_casimir} becomes explicit once a frame is chosen.

\begin{table}[ht]
\centering
\begin{tabular}{llcc}
\toprule
Type & Compact algebra & $d_l=\dim\s_l$ & $h^{\vee}_l$\\
\midrule
$A_r\ (r\ge 1)$ & $\su(r+1)$ & $r(r+2)$ & $r+1$\\
$B_r\ (r\ge 2)$ & $\mathfrak{so}(2r+1)$ & $r(2r+1)$ & $2r-1$\\
$C_r\ (r\ge 3)$ & $\mathfrak{sp}(2r)$ & $r(2r+1)$ & $r+1$\\
$D_r\ (r\ge 4)$ & $\mathfrak{so}(2r)$ & $r(2r-1)$ & $2r-2$\\
$G_2$ & & $14$ & $4$\\
$F_4$ & & $52$ & $9$\\
$E_6$ & & $78$ & $12$\\
$E_7$ & & $133$ & $18$\\
$E_8$ & & $248$ & $30$\\
\bottomrule
\end{tabular}
\caption{Dual Coxeter numbers and dimensions of the compact simple types, with the rank ranges that keep the families distinct. The basic-normalization block scalar is $N_l=2h^{\vee}_l$, and a general orthonormal frame scales it by $\nu_l$ through \eqref{eq:N_l_casimir}.}
\label{tab:dual_coxeter}
\end{table}

\begin{remark}[Type-dependent dimensional scaling]\label{rmk:type_dependent_scaling}
Equation~\eqref{eq:N_l_casimir} splits the conditioning into two independent factors. The dual Coxeter number is rigid and depends on the type alone; $\nu_l$ depends on how the frame presents the generators. When one uniform frame fixes a common $\nu_l$ across the blocks, that factor cancels from the ratio, and Corollary~\ref{cor:invariant_block_ratio} reads
\[
\kappa(A_{\mathrm{gen}})=\Bigl(\frac{\max_l h^{\vee}_l}{\min_l h^{\vee}_l}\Bigr)^{3/2}.
\]
Inside one classical family the dual Coxeter number grows as a square root of the dimension: $h^{\vee}=\sqrt{d+1}$ for $A_r$, while $B_r,C_r,D_r$ give $h^{\vee}=\Theta(\sqrt{d})$ with respective constants $\sqrt 2,\,1/\sqrt 2,\,\sqrt 2$. A frame whose generators have bounded basic length, $\nu_l=O(1)$, leaves $N_l=\Theta(\sqrt{d_l})$, so $\kappa(A_{\mathrm{gen}})=\Theta\bigl((d_{\max}/d_{\min})^{3/4}\bigr)$ within one family. The generalized Gell--Mann frame of Theorem~\ref{thm:gell_mann_parameters}, normalized by $\Tr(\lambda_a\lambda_b)=2$, has this property. If a separately specified frame satisfies $\nu_l=\Theta(\sqrt{d_l})$, then $N_l=\Theta(d_l)$ and the ratio becomes $\Theta((d_{\max}/d_{\min})^{3/2})$. Full rank and unit operator norm alone do not imply that scaling. The type and the chosen metric both enter through $N_l$, equivalently through the pair $(h^{\vee}_l,\nu_l)$.
\end{remark}

Table~\ref{tab:model_parameters} collects algebraic normalization examples for Equation~\eqref{eq:N_l_casimir}, with the metric and algebraic setting named in each row. In the defining trace metric, the index values in the highest-long-root normalization give $\nu=1/2$ for $\mathfrak{so}(m)$ and $\nu=1$ for $\mathfrak{usp}(2r)$~\cite[Eq.~(9.17)]{Ginsparg1988}. The blockwise Gell--Mann metric on $\su(M)$ has $\nu=2$.

\begin{table}[ht]
\centering
\resizebox{\textwidth}{!}{%
\begin{tabular}{llccccc}
\toprule
Algebraic input & Metric/frame & Type & $h^{\vee}$ & $\nu$ & $N_l$ & $\kappa(A_{\mathrm{gen}})$\\
\midrule
$\su(M)$, $M\ge2$ & blockwise $-\Tr_M(XY)/2$ & $A_{M-1}$ & $M$ & $2$ & $4M$ & $1$\\
$\mathfrak{so}(m)$, $m\ge5$ & defining $-\Tr_m(XY)$ & $B/D$ & $m-2$ & $1/2$ & $m-2$ & $1$\\
$\mathfrak{usp}(2r)$, $r\ge2$ & defining $-\Tr_{2r}(XY)$ & $C_r$ & $r+1$ & $1$ & $2(r+1)$ & $1$\\
$\mathfrak{so}(7)\oplus\mathfrak{usp}(6)$ & common defining trace & $B_3\oplus C_3$ & $5,4$ & $1/2,1$ & $5,8$ & $(8/5)^{3/2}$\\
$\mathfrak{so}(8)\oplus\mathfrak{usp}(6)$ & common defining trace & $D_4\oplus C_3$ & $6,4$ & $1/2,1$ & $6,8$ & $(4/3)^{3/2}$\\
full collective-$SU(2)$ centralizer, $n\ge3$ & blockwise multiplicity trace & $\bigoplus_{J\in\mathcal I_n}A_{m_J-1}$ & $m_J$ & $2$ & $4m_J$ & $(m_+/m_-^+)^{3/2}$\\
\bottomrule
\end{tabular}}
\caption{Algebraic block scalars in named real invariant metrics. The cross-type rows use abstract direct sums. The final row is the full collective-$SU(2)$ centralizer under the blockwise metric of Theorem~\ref{thm:su2_symmetric_obstruction}; its traceless center has dimension $q-1$. Physical generating sets, typed access constructions, and end-to-end comparisons are separate inputs.}
\label{tab:model_parameters}
\end{table}

The two frame instances quoted in Section~\ref{sec:condition_number_barrier} follow from the same dictionary. The bounded-normalization corollary lowers the exponent to three quarters.

\begin{corollary}[Blockwise Gell--Mann metric, type-$A$ family]\label{cor:obstruction_hs}
Let $\s=\bigoplus_l\su(M_l)$ be semisimple of type $A$, each block carried in the generalized Gell-Mann frame normalized by $\mathrm{tr}(\lambda_a\lambda_b)=2$. Then $\nu_l=2$ and $N_l=2h^{\vee}_l\nu_l=4M_l$, so
\[
\kappa(A_{\mathrm{gen}})=\Bigl(\frac{M_{\max}}{M_{\min}}\Bigr)^{3/2}=\Bigl(\frac{d_{\max}+1}{d_{\min}+1}\Bigr)^{3/4}=\Theta\!\bigl((d_{\max}/d_{\min})^{3/4}\bigr).
\]
The inner product is fixed separately on each ideal by $\langle X,Y\rangle_l=-\mathrm{tr}_{M_l}(XY)/2$. It is not the ambient Hilbert-space trace of an isotypic embedding.
\end{corollary}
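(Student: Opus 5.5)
The plan is to specialize Proposition~\ref{prop:casimir_dictionary} and Corollary~\ref{cor:invariant_block_ratio} to the blockwise Gell--Mann frame. The only new computation is $\nu_l=2$; the dimensional rewriting is then elementary.

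First, I check the hypotheses of Corollary~\ref{cor:invariant_block_ratio}. On each ideal, the form $\langle X,Y\rangle_l=-\mathrm{tr}_{M_l}(XY)/2$ is positive definite on $\su(M_l)$ and invariant by cyclicity of the trace. The skew-Hermitian generators $B_a=i\lambda_a$ satisfy
\[
\langle i\lambda_a,i\lambda_b\rangle_l=\tfrac12\mathrm{tr}(\lambda_a\lambda_b)=\delta_{ab},
\]
so the frame is real invariant-orthonormal. Distinct ideals are declared orthogonal, which makes the basis adapted to $\bigoplus_l\s_l$. Corollary~\ref{cor:invariant_block_ratio} therefore gives $K|_{\s_l}=-N_lI_{d_l}$, and
\[
\kappa(A_{\mathrm{gen}})=\Bigl(\frac{\max_l N_l}{\min_l N_l}\Bigr)^{3/2}.
\]

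Second, I compute $N_l$ through Proposition~\ref{prop:casimir_dictionary}, using $h^{\vee}_l=M_l$ from Table~\ref{tab:dual_coxeter}. The main step, and the one I expect to be the obstacle, is to identify the basic form on $\su(M)$ and so obtain $\nu_l$. I will argue via the highest long root. Under $(X,Y)=-\mathrm{tr}(XY)$, the coroot $H_\theta=i\,\mathrm{diag}(1,0,\dots,0,-1)$ has $(H_\theta,H_\theta)=2$, so this is the basic normalization, consistent with the Dynkin index of the defining representation being $1/2$ when the long root has length two. Then
\[
\nu_l=(i\lambda_a,i\lambda_a)_l=\mathrm{tr}(\lambda_a^2)=2.
\]
As a cross-check, I will compare directly with the Killing form: $K(X,Y)=2M\,\mathrm{tr}(XY)$ on $\su(M)$. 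This gives $K(i\lambda_a,i\lambda_a)=-2M\cdot 2=-4M$, so $N_l=4M_l$, matching $2h^\vee_l\nu_l=2M_l\cdot 2$. The cross-check avoids any dependence on root-length conventions, and it is the version I would actually write down.

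Third, I substitute. Since $N_l=4M_l$, the ratio becomes
\[
\kappa(A_{\mathrm{gen}})=\Bigl(\frac{M_{\max}}{M_{\min}}\Bigr)^{3/2}.
\]
Using $d_l=M_l^2-1$ gives $M_l=\sqrt{d_l+1}$, and $d\mapsto\sqrt{d+1}$ is monotone, so $M_{\max}$ and $M_{\min}$ correspond to $d_{\max}$ and $d_{\min}$. Hence
\[
\kappa(A_{\mathrm{gen}})=\Bigl(\frac{d_{\max}+1}{d_{\min}+1}\Bigr)^{3/4}.
\]
Because $d_l\ge3$, we have $(d+1)/d\in(1,4/3]$, which gives the $\Theta\bigl((d_{\max}/d_{\min})^{3/4}\bigr)$ form.

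Finally, I will note the scope restriction stated in the corollary: the metric is fixed ideal by ideal. Under an ambient trace over a multiplicity-embedded representation, the factors $\nu_l$ would acquire representation-dependent rescalings, as in Remark~\ref{rmk:orthogonality_necessary}.
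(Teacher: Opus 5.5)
Your proposal is correct and follows the paper's route: $h^{\vee}_l=M_l$ from Table~\ref{tab:dual_coxeter}, $\nu_l=2$ from the Gell--Mann normalization, Proposition~\ref{prop:casimir_dictionary} for $N_l=4M_l$, and $M_l=\sqrt{d_l+1}$ substituted into the block-ratio formula. Your Killing-form cross-check $K=2M\,\mathrm{tr}$ is the same trace identity the paper uses in Appendix~\ref{app:gell_mann}, and your explicit hypothesis check and $\Theta$ estimate fill in details that the paper's two-line proof leaves implicit.
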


\begin{proof}
Type $A_{M_l-1}$ has $h^{\vee}_l=M_l$ (Table~\ref{tab:dual_coxeter}), and the Gell-Mann normalization fixes the squared basic-length at the bounded value $\nu_l=2$ of Remark~\ref{rmk:type_dependent_scaling}. Proposition~\ref{prop:casimir_dictionary} gives $N_l=4M_l$, and $M_l=\sqrt{d_l+1}$ turns the ratio of Theorem~\ref{thm:obstruction_general} into the three-quarters power.
\end{proof}

The proof of the $SU(2)$-symmetric obstruction stated in Section~\ref{sec:condition_number_barrier} runs through the same dictionary.

\begin{proof}[Proof of Lemma~\ref{lem:collective_su2_centralizer} and Theorem~\ref{thm:su2_symmetric_obstruction}]
Schur's lemma applied to the displayed isotypic decomposition gives $\mathfrak c_{\uu}=\bigoplus_J I_{V_J}\otimes\uu(m_J)$. The trace of the $J$-th block is $r_J\Tr(X_J)$, which gives the stated hyperplane inside $\su(2^n)$. Commutators remove the scalar part of every matrix block, so the derived algebra is the direct sum of the ideals $\su(m_J)$ with $m_J\ge2$. A central element has the form $\bigoplus_J i\theta_J I_{V_J\otimes M_J}$ and is traceless exactly when $\sum_Jr_Jm_J\theta_J=0$. This is one nonzero linear constraint on the $q$ sector scalars.

The multiplicity formula follows from weight counting. The weight-$J$ subspace of $(\C^2)^{\otimes n}$ has dimension $\binom n{n/2-J}$, and every spin-$L$ representation with $L\ge J$ contributes one vector to that weight. Subtracting the weight-$(J+1)$ dimension leaves the number of spin-$J$ copies.

For the extrema, rewrite the multiplicity as
\[
m_J=\frac{2J+1}{n/2+J+1}\binom n{n/2-J}.
\]
The ratio of consecutive admissible terms is
\[
\frac{m_{J+1}}{m_J}
=\frac{(n/2-J)(2J+3)}{(2J+1)(n/2+J+2)}.
\]
It is at least one exactly when $4(J+1)^2\le n+2$. Thus the maximum lies at $J=\sqrt n/2+O(1)$, and the local central-binomial estimate gives
\[
m_+\sim \sqrt{\frac{8}{\pi e}}\,\frac{2^n}{n}.
\]
The last active sector has $J=n/2-1$ and multiplicity $n-1$. Unimodality, together with the opposite endpoint value, gives $m_-^+=n-1$ for $n\ge5$; direct substitution gives $m_-^+=2$ for $n=3,4$. For even $n$, the singlet formula at $J=0$ is the Catalan number displayed in Remark~\ref{rmk:simple_summand_obstruction}, whose Stirling expansion is $\sqrt{8/\pi}\,2^n/n^{3/2}$.

Under the ambient metric, the relevant ratio is
\[
\frac{m_J}{r_J}=\frac{1}{n/2+J+1}\binom n{n/2-J}.
\]
Its consecutive-term ratio is $(n/2-J)/(n/2+J+2)<1$. The maximum is therefore at the smallest spin and is asymptotic to $\sqrt{8/\pi}\,2^n/n^{3/2}$, while the last active sector has $m_J/r_J=1$. Raising these block ratios to the three-halves power proves both displayed conditioning laws.

Each nontrivial matrix block is of type $A_{m_J-1}$, with dual Coxeter number $m_J$. Proposition~\ref{prop:casimir_dictionary} gives $N_J=2m_J\nu_J$. Corollary~\ref{cor:invariant_block_ratio}, with extrema restricted to $\mathcal I_n$, gives Equation~\eqref{eq:kappa_su2}.
\end{proof}

\section{Classical benchmark proof}\label{app:classical_benchmark}

\begin{proof}[Proof of Theorem~\ref{thm:classical_benchmark_main}]
The count uses exact field operations first and accounts for coefficient bit length
afterwards.

\textbf{Step 1: indexed sparse input.}
CSR and CSC indices for the nonzero structure constants take
$O(\operatorname{nnz}(f))=O(ds_{adj})$ insertions and storage.

\textbf{Step 2: calculation of the Killing form.}
For the factorization $K=A^TB$,
\begin{equation}
\operatorname{flops}(A^TB)
=\sum_{j=1}^{d}\ \sum_{p\in\operatorname{supp}(B_{\cdot j})}
\operatorname{nnz}\bigl((A^T)_{\cdot p}\bigr)
\leq d\,s_{adj}s_{int}.
\label{eq:spmm_flop_count}
\end{equation}
The sparse-accumulator model also charges input indexing and dense-output
initialization or writes, giving
$O(ds_{adj}s_{int}+ds_{adj}+d^2)$ time and $O(d)$ accumulator memory in addition
to the stored matrices~\cite{Buluc2008}. The $d^2$ term is absorbed by the later
rank step in either case.

\textbf{Step 3a: general case.}
The commutator matrix has shape $d\times d^2$. Ordinary elimination has at most
$d$ pivots. Each pivot can update $O(d)$ rows across $O(d^2)$ columns, so the
worst-case count is
\begin{equation}
\sum_{k=0}^{d-1}O\bigl((d-k)(d^2-k)\bigr)=O(d^4).
\label{eq:rectangular_rank_count}
\end{equation}
This step returns a basis matrix $M$ for $[\g,\g]$. Forming $M^TK$ and computing
its kernel cost $O(d^3)$ more operations.

\textbf{Step 3b: compact case.}
Compactness gives $\rfrak=\ker K$, so dense elimination acts only on a
$d\times d$ matrix and costs $O(d^3)$ field operations.

Combining the steps proves Equation~\eqref{eq:classical_two_route_bounds}.

For the exact bit model, multiply all supplied dyadic entries by their common
$b$-bit denominator. A rank-$d$ fraction-free elimination stores minors of an
integer matrix. Hadamard's inequality bounds their bit length by
$O(d(b+\log d))$. Multiplying this word length by the $O(d^4)$ or $O(d^3)$ field
operation count, using fast integer arithmetic, gives the two bounds in
Equation~\eqref{eq:classical_bit_bounds}. Complex dyadics are handled by pairs of
integer components. This argument does not cover floating-point numerical rank.
\end{proof}

\section{Gell-Mann basis analysis}\label{app:gell_mann}

We analyze the actual real compact Gell--Mann frame for $\su(N)$, with $d=N^2-1$.

\begin{theorem}[Parameters in the Gell-Mann basis]\label{thm:gell_mann_parameters}
Let $N\ge2$ be an integer, let $\lambda_a$ be the standard Hermitian generalized Gell--Mann matrices with $\Tr(\lambda_a\lambda_b)=2\delta_{ab}$, and set $B_a=i\lambda_a$. Regard their span as a real Lie algebra with $\langle X,Y\rangle=-\Tr(XY)/2$. Define
\[
D_N=\begin{cases}
N^2/2+N-2,&N\text{ even},\\
(N^2+2N-3)/2,&N\text{ odd},
\end{cases}
\qquad S_N=\max\{6N-10,D_N\}.
\]
Then the exact algebraic parameters are
\[
F=2,\qquad s_{adj}=s_{\mathrm{out}}=S_N,\qquad
s_{int}=s_{\mathrm{br}}=N-1,
\]
and
\[
N_k=4N,\qquad \beta=1,\qquad K=-4N I_d,
\qquad \kappa(A_{\mathrm{gen}})=1.
\]
For the conservative sparse normalizations of Section~\ref{sec:workflow},
\begin{align}
\rho_{\mathrm{BE}}(K)
&=\frac{S_N(N-1)}{N}=\Theta(d),\nonumber\\
\rho_{\mathrm{BE}}(A_{\mathrm{gen}})
&=\left(\frac{S_N(N-1)}{N}\right)^{3/2}=\Theta(d^{3/2}).
\label{eq:gell_mann_rho}
\end{align}
\end{theorem}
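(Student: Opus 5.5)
The plan separates the statement into three parts: the spectral part ($N_k$, $\beta$, $K$, $\kappa$), which follows from the invariant-frame machinery of Appendix~\ref{app:invariant_reduction}; the sparsity part ($F$, $s_{adj}$, $s_{int}$, $s_{\mathrm{br}}$, $s_{\mathrm{out}}$), which needs a direct count on the generalized Gell--Mann matrices; and the slack ratios $\rho_{\mathrm{BE}}$, which follow by substitution.

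\textbf{Spectral part.}
\begin{enumerate}
\item Check that $\langle X,Y\rangle=-\Tr(XY)/2$ is a positive invariant inner product on $\su(N)$. Positivity holds because $X$ is skew-Hermitian, and invariance follows from cyclicity of the trace.
\item The normalization $\Tr(\lambda_a\lambda_b)=2\delta_{ab}$ makes $\{B_a=i\lambda_a\}$ orthonormal for this inner product. The frame is real and adapted, and its center is zero.
\item Lemma~\ref{lem:invariant_reduction} then gives $CC^T=-K$. Since $\su(N)$ is simple, Corollary~\ref{cor:invariant_block_ratio} gives $K=-N_1 I_d$ for a single scalar $N_1>0$.
\item To compute $N_1$, use Proposition~\ref{prop:casimir_dictionary} with $h^\vee=N$ from Table~\ref{tab:dual_coxeter} and squared basic length $\nu=2$, as recorded for the blockwise Gell--Mann metric in Table~\ref{tab:model_parameters}. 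This gives $N_1=2\cdot N\cdot 2=4N$.
\item As an independent check, use the standard identity $\sum_{a,b}f^{\mathrm{GM}}_{abc}f^{\mathrm{GM}}_{abc'}=N\delta_{cc'}$ together with $f_{abc}=-2f^{\mathrm{GM}}_{abc}$. The latter comes from $[i\lambda_a,i\lambda_b]=-2if^{\mathrm{GM}}_{abc}\lambda_c$.
\item Since $CC^T=-K$, its diagonal gives $N_k=4N$ for every $k$. Theorem~\ref{thm:beta_structure}, in the simple case, gives $\beta=1$, and Equation~\eqref{eq:kappa_three_halves} gives $\kappa(A_{\mathrm{gen}})=1$.
\item For $F=2$: the Gell--Mann structure constants satisfy $|f^{\mathrm{GM}}_{abc}|\le1$, with equality attained by $f^{\mathrm{GM}}_{123}=1$. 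This follows from the explicit formulas for $[E_{jk},E_{lm}]$ on the symmetric, antisymmetric and diagonal generators.
\end{enumerate}

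\textbf{Support counts.} Label the generators as $S_{jk}$, $A_{jk}$ for $j<k$, and the diagonal $D_\ell$ for $\ell=1,\dots,N-1$. Writing all brackets through matrix units gives three rules:
\begin{itemize}
\item an off-diagonal pair with indices $\{j,k\}$ and $\{k,m\}$ brackets into off-diagonal generators on $\{j,m\}$;
\item a matched pair $S_{jk},A_{jk}$ brackets into the diagonal generators $D_\ell$ with $\ell\ge k-1$, whose coefficients are supported on entries $j$ or $k$;
\item a diagonal $D_\ell$ brackets an off-diagonal generator into its partner with the same indices $\{j,k\}$, whenever the weights of $D_\ell$ at $j$ and $k$ differ.
\end{itemize}

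From these rules the bounds are obtained as follows.
\begin{itemize}
\item $s_{int}=s_{\mathrm{br}}=N-1$. Once two labels are fixed, the third is confined either to the diagonal family or to a fixed index pair. The worst case is a bracket $[S_{jk},A_{jk}]$, or fixing a matched pair as (input, output), which leaves up to $N-1$ diagonal partners. The maximum is attained at $(j,k)=(1,N)$, or $(1,2)$ after accounting for the weights, whichever gives $N-1$.
\item $s_{adj}=s_{\mathrm{out}}$. These two are equal by total antisymmetry. To compute them, count the ordered pairs $(j,k)$ with $f_{ijk}\ne0$ for each generator type:
\begin{itemize}
\item an off-diagonal generator $S_{jk}$ meets the $2(N-2)$ off-diagonal generators sharing one index, in both S/A flavors, plus its partner through the diagonals, with multiplicities;
\item the diagonal $D_\ell$ with the most distinct weights meets every off-diagonal pair whose weights differ.
\end{itemize}
Carrying this out should give $6N-10$ for the off-diagonal type and $D_N$ for the worst diagonal generator, with a parity split according to how many index pairs have equal weights. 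The maximum of the two is $S_N$.
\end{itemize}

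\textbf{Slack ratios.} With $\sigma_{\max}(K)=4N$ and $\sigma_{\max}(A_{\mathrm{gen}})=(4N)^{3/2}=8N^{3/2}$ from Lemma~\ref{lem:invariant_reduction}, Equation~\eqref{eq:target_normalizations} gives
\[
\alpha_{\mathrm{cmp}}=4S_N(N-1),\qquad \alpha_{\mathrm{gen}}=8\bigl(S_N(N-1)\bigr)^{3/2}.
\]
Dividing by the corresponding $\sigma_{\max}$ yields Equation~\eqref{eq:gell_mann_rho}. Because $S_N=\Theta(N^2)$ and $d=\Theta(N^2)$, these ratios are $\Theta(d)$ and $\Theta(d^{3/2})$.

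The main obstacle is the exact value of $s_{adj}$, namely $D_N$ and its even/odd split. It requires care with the weights of the diagonal generators, with which $S/A$ partners appear, and with coincidences of equal weights. I would first tabulate small cases ($N=2,3,4,5$) by hand to fix which generator is extremal, and then prove the closed form by counting.
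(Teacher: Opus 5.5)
Your decomposition matches the paper's: the invariant-frame reduction handles $K$, $N_k$, $\beta$ and $\kappa$, matrix-unit counting handles the sparsities, and substitution gives $\rho_{\mathrm{BE}}$. However, the proposal never establishes $s_{adj}=s_{\mathrm{out}}=S_N$. This is the only nontrivial combinatorial claim, and it feeds both formulas in Equation~\eqref{eq:gell_mann_rho}. Writing ``should give $6N-10$ \dots\ and $D_N$'' and planning to tabulate small $N$ is not a proof. The missing count is short.

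\emph{Off-diagonal generators.} For an off-diagonal generator on $\{a,b\}$, let $t_{ab}$ be the number of diagonal $H_\ell$ whose weights at $a$ and $b$ differ. Its row support is exactly $4(N-2)+2t_{ab}$, for three reasons:
\begin{itemize}
\item each of the $4(N-2)$ off-diagonal generators sharing exactly one endpoint with $\{a,b\}$ brackets into a single generator, on the pair formed by the two unshared endpoints;
\item each of the $t_{ab}$ diagonal inputs returns only the companion;
\item the companion input returns $t_{ab}$ diagonal outputs.
\end{itemize}
Since $t_{ab}\le N-1$, with equality at $(1,N)$, this gives $6N-10$.

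\emph{Diagonal generators.} The weights of $H_l$ are constant on index blocks of sizes $l$, $1$ and $N-1-l$. $H_l$ moves exactly the pairs that straddle two blocks, and each such pair contributes two positions ($S\to A$ and $A\to S$). Its row support is therefore $2\bigl[\binom N2-\binom l2-\binom{N-1-l}2\bigr]$. Minimizing the subtracted convex term at $l=\lfloor N/2\rfloor$ yields $D_N$, including its parity split. What matters is the number of separated pairs, not the number of distinct weights: every $H_l$ with $l\le N-2$ has three distinct weights.

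Your second bracket rule is also wrong, and it governs exactly the quantity $t_{jk}$ that drives both $s_{int}$ and the off-diagonal row support. Write $h_\ell(j)$ for the $j$-th diagonal entry of $H_\ell$. Since $[S_{jk},A_{jk}]=2i(E_{jj}-E_{kk})$, the output has a component along $H_\ell$ if and only if $h_\ell(j)\ne h_\ell(k)$. For $j<k$ and the standard $H_\ell\propto\operatorname{diag}(1,\dots,1,-\ell,0,\dots,0)$, this means $j-1\le\ell\le k-1$, not $\ell\ge k-1$; for $\ell\ge k$ the two weights are equal. Hence $t_{12}=1$ while $t_{1N}=N-1$, which settles your hedge between $(1,N)$ and $(1,2)$. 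Your third rule is correct, and total antisymmetry forces it to describe the same set as the second, so the two rules as written contradict each other.

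Finally, step~4 takes $\nu=2$ from Table~\ref{tab:model_parameters}. In the paper, that entry is itself justified through this theorem (Corollary~\ref{cor:obstruction_hs} and the closing remark of Appendix~\ref{app:gell_mann}), so using it here is circular. Your step~5, or the trace identity $K(X,Y)=2N\Tr(XY)$ that the paper uses, is the non-circular route.
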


\begin{unnumberedremark}[Access-model scope]
These are algebraic normalization bounds. They do not supply reversible
support enumerators for this basis.
\end{unnumberedremark}
\begin{proof}
Write the Hermitian frame as the symmetric matrices $S_{ab}=E_{ab}+E_{ba}$, the antisymmetric matrices $A_{ab}=-iE_{ab}+iE_{ba}$ for $a<b$, and the diagonal matrices
\[
H_l=\sqrt{\frac{2}{l(l+1)}}\,\operatorname{diag}(\underbrace{1,\ldots,1}_{l},-l,0,\ldots,0),
\qquad 1\le l<N.
\]
Multiplication of matrix units gives real structure constants for $B=i\lambda$. The largest coefficient has magnitude two and is attained in every embedded $\su(2)$ triple, hence $F=2$.

For an off-diagonal pair $(a,b)$, let $t_{ab}$ be the number of diagonal generators $H_l$ whose $a$- and $b$-entries differ. Brackets with off-diagonal generators sharing one endpoint contribute $4(N-2)$ nonzero $(j,k)$ positions. The companion $S_{ab}$--$A_{ab}$ bracket has $t_{ab}$ diagonal outputs, and the $t_{ab}$ diagonal inputs each have one companion output. The row support is therefore
\[
4(N-2)+2t_{ab}\le 6N-10,
\]
with equality because $t_{1N}=N-1$. For $H_l$, the three equal-entry groups have sizes $l$, $1$, and $N-l-1$, so the row support is
\[
2\left[\binom N2-\binom l2-\binom{N-1-l}{2}\right].
\]
Its maximum over $l$ is $D_N$, which proves $s_{adj}=S_N$. The interaction count has a separate exhaustive split. Disjoint off-diagonal pairs commute; pairs sharing one endpoint fix one off-diagonal output; and a diagonal--off-diagonal pair fixes one companion output. Only the companion pair $S_{ab},A_{ab}$ can have several diagonal partners, and their number is $t_{ab}\le N-1$. The pair $S_{1N},A_{1N}$ attains all $N-1$ diagonal directions, so $s_{int}=N-1$. Total antisymmetry gives $s_{\mathrm{out}}=s_{adj}$ and $s_{\mathrm{br}}=s_{int}$.

The defining-representation trace identity gives $K(X,Y)=2N\Tr(XY)=-4N\langle X,Y\rangle$. Thus $K=-4NI_d$, and Lemma~\ref{lem:invariant_reduction} gives $CC^T=4NI_d$ and $A_{\mathrm{gen}}^TA_{\mathrm{gen}}=(4N)^3I_d$. Hence every $N_k$ equals $4N$, so $\beta=1$ and $\kappa(A_{\mathrm{gen}})=1$.

Finally, the sparse-access formulas give $\alpha_{\mathrm{cmp}}=4S_N(N-1)$ and $\alpha_{\mathrm{gen}}=8[S_N(N-1)]^{3/2}$. Dividing by $\sigma_{\max}(K)=4N$ and $\sigma_{\max}(A_{\mathrm{gen}})=8N^{3/2}$ proves Equation~\eqref{eq:gell_mann_rho}. Since $S_N=\Theta(N^2)$ and $d=N^2-1$, the two asymptotic bounds follow.
\end{proof}

The blockwise trace normalization used here fixes the basic length $\nu_l=2$ in Corollary~\ref{cor:obstruction_hs}. The single simple block remains spectrally conditioned, but its conservative block encoding has the slack shown in Equation~\eqref{eq:gell_mann_rho}.

\printbibliography

\end{document}